\keywords{Shapley value, inconsistent databases, functional dependencies, database repairs}
\theoremstyle{plain} 
\newcommand{\toprule}{\hrule height.8pt depth0pt \kern2pt} 
\newcommand{\midrule}{\kern2pt\hrule\kern2pt} 
\newcommand{\bottomrule}{\kern2pt\hrule\relax}
\newcommand{\algcaption}[2][]{%
  \refstepcounter{algorithm}%
  \@ifmtarg{#1}
    {\addcontentsline{loa}{figure}{\protect\numberline{\thealgorithm}{\ignorespaces #2}}}
    {\addcontentsline{loa}{figure}{\protect\numberline{\thealgorithm}{\ignorespaces #1}}}%
  \toprule
  \textbf{\fname@algorithm~\thealgorithm}\ #2\par 
  \midrule
}
\def\ester#1{{\color{red} #1}}
\def\e#1{\emph{#1}}
\def\scs{\mathbf{S}}
\def\pall{\mathord{\mathcal{P}}}
\def\ra{\rightarrow}
\def\shapley{\mathrm{Shapley}}
\newcommand*{\MyDef}{\mathrm{def}}
\newcommand*{\eqdefU}{\ensuremath{\mathop{\overset{\MyDef}{=}}}}
\newcommand*{\eqdef}{\mathrel{\overset{\MyDef}{\resizebox{\widthof{\eqdefU}}{\heightof{=}}{=}}}}
\def\set#1{\mathord{\{#1\}}}
\def\dl{\mathrel{{:}{\text{-}}}}
\def\prev{\mathsf{prev}}
\def\prob{\mathsf{val}}
\def\root{r}
\def\probf{\mathsf{val}'}
\def\assn{\mathrel{{:}{=}}}
\newcommand{\pluseq}{\mathrel{+}=}
\def\Pr#1{\mathord{\mathrm{Pr}\left[#1\right]}}
\def\val#1{\mathtt{#1}}
\def\att#1{\mathrm{#1}}
\newcommand{\algname}[1]{{\sf #1}}
\def\myrulewidth{3.20in}
\def\therule{\rule{\myrulewidth}{0.2pt}}
\newenvironment{insidecode}[3]
{
\begin{tabular}{p{\myrulewidth}}
\multicolumn{1}{c}{\rule{0mm}{3mm}{\bf #3} $\algname{#1}(\mbox{#2})$\vspace{-0.6em}}\\
\therule\vskip-0.8em\therule
\vspace{0em}
\begin{algorithmic}[1]}
{\end{algorithmic}
\vskip-0.3em\therule
\end{tabular}}
\newenvironment{subroutine}
{\begin{algorithm}\floatname{algorithm}{Subroutine}}
{\end{algorithm}}
\newcounter{subroutine}
\newenvironment{msubroutine}[2]
{
\stepcounter{subroutine}
\addtocounter{algorithm}{-1}
\renewcommand{\thealgorithm}{\arabic{subroutine}} 
\floatname{algorithm}{Subroutine}
\algcaption{#1}\label{#2}
\begin{algorithmic}[1]
}
{
\end{algorithmic}
\bottomrule
}
\newenvironment{malgorithm}[2]
{
\algcaption{#1}\label{#2}
\begin{algorithmic}[1]
}
{
\end{algorithmic}
\bottomrule
}
\newcommand{\eat}[1]{}
\newenvironment{repeatresult}[2]
{\vskip0.5em\par\textbf{#1 #2.}\em}
{\vskip1em}
\def\fpsharpp{\mathrm{FP}^{\mathrm{\#P}}}
\def\I{\mathcal{I}}
\def\MI{\mathsf{MI}}
\def\MC{\mathsf{MC}}
\def\MR{\mathsf{MR}}
\def\Id{\I_{\mathsf{d}}}
\def\Imr{\I_{\mathsf{R}}}
\def\Imi{\I_{\MI}}
\def\Imc{\I_{\MC}}
\def\Ip{\I_{\mathsf{P}}}
\newcommand{\depset}{\mathrm{\Delta}}
\def\match{\mathsf{M}}
\newenvironment{citeddefinition}[1]
{\begin{defiC}[{\cite{#1}}]}
{\end{defiC}}
\def\Exp{\mathbb{E}}
\algnewcommand{\NULL}{\textsc{null}}
\algnewcommand{\IfThenElse}[3]{
  \State \algorithmicif\ #1\ \algorithmicthen\ #2\ \algorithmicelse\ #3}
  \algnewcommand{\IfThen}[2]{
  \State \algorithmicif\ #1\ \algorithmicthen\ #2}
\def\eqed{\hfill$\Diamond$}
\begin{document}

\title[Shapley Value of Inconsistency Measures]{The Shapley Value of Inconsistency Measures\texorpdfstring{\\}{}
  for Functional Dependencies}

\author[E.~Livshits]{Ester Livshits}	
\address{Technion, Haifa, Israel}	
\email{\{esterliv,bennyk\}@cs.technion.ac.il}  

\author[B.~Kimelfeld]{Benny Kimelfeld}	





\begin{abstract}
  \noindent    Quantifying the inconsistency of a database is motivated by various
  goals including reliability estimation for new datasets and progress
  indication in data cleaning. Another goal is to attribute to
  individual tuples a level of responsibility to the overall
  inconsistency, and thereby prioritize tuples in the explanation or
  inspection of errors.  Therefore, inconsistency quantification and
  attribution have been a subject of much research in knowledge
  representation and, more recently, in databases. As in many other
  fields, a conventional responsibility sharing mechanism is the
  Shapley value from cooperative game theory.  In this article, we
  carry out a systematic investigation of the
  complexity of the Shapley value in common inconsistency measures for
  functional-dependency (FD) violations. For several measures we
  establish a full classification of the FD sets into tractable and
  intractable classes with respect to Shapley-value computation. We
  also study the complexity of approximation in intractable cases.
\end{abstract}

\maketitle


\section{Introduction}

Inconsistency measures for knowledge bases have received considerable attention from the Knowledge Representation (KR) and Logic communities~\cite{DBLP:conf/ijcai/KoniecznyLM03,DBLP:journals/jolli/Knight03,DBLP:conf/kr/HunterK06,DBLP:journals/jiis/GrantH06,DBLP:conf/kr/HunterK08,DBLP:journals/ai/HunterK10,DBLP:journals/ijar/GrantH17,DBLP:journals/ki/Thimm17}. More recently, inconsistency measures have also been studied from the database viewpoint~\cite{DBLP:conf/sum/Bertossi18,DBLP:conf/sigmod/LivshitsKTIKR21}.  Such measures quantify the extent to which the database violates a set of integrity constraints. There are multiple reasons why one might be using such measures.  For one, the measure can be used for estimating the usefulness or reliability of new datasets for data-centric applications such as business intelligence~\cite{DBLP:conf/atal/CholvyPRT15}. Inconsistency measures have also been proposed as the basis of progress indicators for data-cleaning systems~\cite{DBLP:conf/sigmod/LivshitsKTIKR21}.  Finally, the measure can be used for attributing to individual tuples a level of responsibility to the overall inconsistency~\cite{10.1093/jigpal/exr002 ,DBLP:conf/uai/Thimm09},  thereby prioritize tuples in the explanation/inspection/correction of errors.

{
  \begin{figure}[t]
  \centering
\def\arraystretch{1.1}
\begin{tabular}{c||c|c|c|c|c}\hline
fact  & $\att{train}$ & $\att{departs}$ & $\att{arrives}$ & $\att{time}$ & $\att{duration}$\\\hline\hline
$f_1$ & $\val{16}$ & $\val{NYP}$ & $\val{BBY}$ & $\val{1030}$ & $\val{315}$\\
$f_2$ & $\val{16}$ & $\val{NYP}$ & $\val{PVD}$ & $\val{1030}$ & $\val{250}$\\
$f_3$ & $\val{16}$ & $\val{PHL}$ & $\val{WIL}$ & $\val{1030}$ & $\val{20}$\\
$f_4$ & $\val{16}$ & $\val{PHL}$ & $\val{BAL}$ & $\val{1030}$ & $\val{70}$\\
$f_5$ & $\val{16}$ & $\val{PHL}$ & $\val{WAS}$ & $\val{1030}$ & $\val{120}$\\
$f_6$ & $\val{16}$ & $\val{BBY}$ & $\val{PHL}$ & $\val{1030}$ & $\val{260}$\\
$f_7$ & $\val{16}$ & $\val{BBY}$ & $\val{NYP}$ & $\val{1030}$ & $\val{260}$\\
$f_8$ & $\val{16}$ & $\val{BBY}$ & $\val{WAS}$ & $\val{1030}$ & $\val{420}$\\
$f_9$ & $\val{16}$ & $\val{WAS}$ & $\val{PVD}$ & $\val{1030}$ & $\val{390}$\\\hline
\end{tabular}
\caption{\label{fig:trains} The inconsistent database of our running example.}
\end{figure}
}

\begin{exa}\label{example:introduction}
 Figure~\ref{fig:trains} depicts an inconsistent
  database
  that stores a train schedule.
  For example, the tuple $f_1$
  states that train number 16 will depart from the New York Penn Station at time 1030 and arrive at the Boston Back Bay Station after 315 minutes. Assume that we have the functional dependency stating that
  the train number and departure time determine the departure station. All tuples in the database are involved in violations of this constraint, as they all agree on the train number and departure time, but there is some disagreement on the departure station. Hence, one can argue that every fact in the database affects the overall level of inconsistency in the database. But how should we measure the \e{responsibility} of the tuples to this inconsistency? For example, which of the tuples $f_1$ and $f_3$ has a greater contribution to inconsistency? To this end, we can adopt some conventional concepts for responsibility sharing, and in this article we study the computational aspects involved in the measurement of those. 
\eqed
\end{exa}

A conventional approach to dividing the responsibility for a quantitative property (here an inconsistency measure) among entities (here the database tuples) is the \e{Shapley value}~\cite{shapley:book1952}, which is a game-theoretic formula for wealth distribution in a cooperative game. The Shapley value has been applied in a plethora of domains, including economics~\cite{gul1989bargaining}, law~\cite{nenova2003value}, environmental science~\cite{petrosjan2003time,liao2015case}, social network analysis~\cite{narayanam2011shapley}, physical network analysis~\cite{ma2010internet}, and advertisement~\cite{DBLP:conf/www/BesbesDGIS19}.  In data management, the Shapley value has been used for determining the relative contribution of features in machine-learning predictions~\cite{DBLP:conf/ijcai/LabreucheF18,DBLP:conf/nips/LundbergL17}, the responsibility of tuples to database queries~\cite{DBLP:conf/pods/ReshefKL20,DBLP:conf/icdt/LivshitsBKS20,DBLP:journals/jdiq/BertossiG20},
and the reliability of data sources~\cite{DBLP:conf/atal/CholvyPRT15}.

The Shapley value has also been studied in a context similar to the one we adopt in this article---assigning a level of inconsistency to statements in an inconsistent knowledge base~\cite{DBLP:journals/ai/HunterK10,DBLP:conf/ijcai/YunVCB18, 10.1093/jigpal/exr002, DBLP:conf/uai/Thimm09}.
Hunter and Konieczny~\cite{DBLP:conf/kr/HunterK06,DBLP:journals/ai/HunterK10,DBLP:conf/kr/HunterK08} use the maximal Shapley value of one inconsistency measure in order to define a new inconsistency measure.
Grant and Hunter~\cite{DBLP:conf/ecsqaru/GrantH15} considered information systems distributed along data sources of different reliabilities,
and apply the Shapley value to determine the expected blame of each statement to the overall inconsistency.
Yet, with all the investigation that has been conducted on the Shapley value of inconsistency,
we are not aware of any results or efforts regarding the computational complexity of calculating this value.

\begin{exa}
Let us define the following cooperative game over the database of Figure~\ref{fig:trains}. We have nine players---the tuples of the database. One of the measures that we consider for quantifying the level of inconsistency of a coalition of players is the number of tuple pairs in this group that violate the constraints. For example, consider the constraint defined in Example~\ref{example:introduction}. The inconsistency level of the group $\{f_1,f_3,f_5\}$ is $2$, as there are two conflicting tuple pairs: $\{f_1,f_3\}$ and $\{f_1,f_5\}$. The inconsistency level of the entire database is $29$, as this is the total number of conflicting pairs in the database. The Shapley value allows us to measure the contribution of each individual tuple to the overall inconsistency level. For example, the Shapley value of the tuple $f_1$, in this case, will be lower than the Shapley value of the tuple $f_3$ (we will later show how this value is computed), which indicates that $f_3$ has a higher impact on the inconsistency than $f_1$.
\end{exa}

In this work, we embark on a systematic analysis of the complexity of the Shapley value of database tuples relative to inconsistency measures, where the goal is to calculate the contribution of a tuple to inconsistency. Our main results are summarized in Table~\ref{table:complexity}.  We consider inconsistent databases with respect to functional dependencies (FDs), and basic measures of inconsistency following Bertossi~\cite{DBLP:conf/lpnmr/Bertossi19} and Livshits, Ilyas, Kimelfeld and Roy~\cite{DBLP:conf/sigmod/LivshitsKTIKR21}.
We note that these measures are all adopted from the measures studied in the aforementioned KR research.
In our setting, an individual tuple affects the inconsistency of only its containing relation, since the constraints are FDs. Hence, our analysis focuses on databases with a single relation; in the end of each relevant section, we discuss the generalization to multiple relations. While most of our results easily extend to multiple relations, some extensions require a more subtle proof.

  \begin{table}[t]
  \renewcommand{\arraystretch}{1.2}
  \centering
  \caption{The complexity of the (exact ; approximate) Shapley value of different inconsistency measures.
    \label{table:complexity}}
\begin{tabular}{c||c|c|c}
        & \textbf{lhs chain} & \textbf{no lhs chain, PTime c-repair} & \textbf{other}\\\hline\hline
        $\Id$ & PTime & \multicolumn{2}{c}{$\fpsharpp$-complete ;  FPRAS} \\ \hline
        $\Imi$ & \multicolumn{3}{c}{ PTime } \\ \hline
        $\Ip$ & \multicolumn{3}{c}{ PTime } \\ \hline
        $\Imr$ & PTime & \textbf{?} ; FPRAS &
                             NP-hard~\cite{DBLP:journals/tods/LivshitsKR20} ; no FPRAS \\ \hline
        $\Imc$ & PTime & \multicolumn{2}{c}{$\fpsharpp$-complete~\cite{DBLP:conf/pods/LivshitsK17} ; \textbf{?}} \\ \hline
    \end{tabular}
  \end{table}

More formally, we investigate the following computational problem for any fixed combination of
a relational signature, 
a set of FDs, and an inconsistency measure: given a database and a tuple, compute the Shapley value of the tuple with respect to the inconsistency measure. As Table~\ref{table:complexity} shows, two of these measures are computable in polynomial time: $\Imi$ (number of FD violations) and $\Ip$ (number of problematic facts that participate in violations).  For two other measures, we establish a full dichotomy in the complexity of the Shapley value: $\Id$ (the drastic measure---0 for consistency and 1 for inconsistency) and $\Imc$ (number of maximal consistent subsets, a.k.a.~repairs). The dichotomy in both cases is the same: when the FD set has, up to equivalence, an lhs chain (i.e., the left-hand sides form a chain w.r.t.~inclusion~\cite{DBLP:conf/pods/LivshitsK17}), the Shapley value can be computed in polynomial time; in any other case, it is $\fpsharpp$-hard (hence, requires at least exponential time under conventional complexity assumptions). In the case of $\Imr$ (the minimal number of tuples to delete for consistency), the problem is solvable in polynomial time in the case of an lhs chain, and NP-hard whenever it is intractable to find a cardinality repair~\cite{DBLP:journals/tods/LivshitsKR20}; however, the problem is open for every FD set in between, for example, the bipartite matching constraint $\set{A\ra B,B\ra A}$.

We also study the complexity of approximating the Shapley value
and show the following (as described in Table~\ref{table:complexity}). First, in the case of $\Id$, there is a (multiplicative) fully polynomial-time approximation scheme (FPRAS) for every set of FDs. In the case of $\Imc$, approximating the Shapley value of \e{any} intractable (non-lhs-chain) FD set is at least as hard as approximating the number of maximal matchings of a bipartite graph---a long standing open problem~\cite{DBLP:journals/corr/abs-1807-04803}. In the case of $\Imr$, we establish a full dichotomy, namely FPRAS vs.~hardness of approximation, that has the same separation as the problem of finding a cardinality repair.

This article is the full version of a conference publication~\cite{DBLP:conf/icdt/LivshitsK21}.
We have added all of the
proofs, intermediate results and algorithms that were excluded from the 
conference version. In particular, we have included in this version the proofs of Observation~\ref{obs:reductionexp}, Lemma~\ref{lem:drastic-hard}, Lemma~\ref{lemma:imr_help}, Lemma~\ref{lemma:mc1}, and Lemmea~\ref{lemma:mc2}, and the algorithms of Figures~\ref{alg:DrasticShapleyF},~\ref{alg:MRShapleyF}, and~\ref{alg:MCShapleyF}.
Furthermore, the results of the conference publication have been restricted to schemas with a single relation symbol. While some of the results (e.g., all of the lower bounds) immediately generalize to schemas with multiple relation symbols, some generalizations (in particular, the upper bounds for $\Id$ and $\Imc$) require a more subtle analysis that we provide in this article. We generalize the upper bounds for all the measures to schemas with multiple relation symbols, in the corresponding sections.

The rest of the article is organized as follows. After presenting the basic notation and terminology in Section~\ref{sec:preliminaries}, we formally define the studied problem and give initial observations
in Section~\ref{sec:shap}. In Section~\ref{sec:tractable}, we describe polynomial-time algorithms for $\Imi$ and $\Ip$.
Then, we explore the measures $\Id$, $\Imr$ and $\Imc$ in Sections~\ref{sec:drastic}, \ref{sec:minrep} and~\ref{sec:counting}, respectively. We conclude and discuss future directions in Section~\ref{sec:conclusions}.

\section{Preliminaries}\label{sec:preliminaries}

We begin with preliminary concepts and notation that we use throughout the article. 

\subsection {Database Concepts.}
By a \e{relational schema} we refer to a sequence $(A_1,\dots,A_n)$ of attributes. A database $D$ over
$(A_1,\dots,A_n)$ is a finite set of tuples, or \e{facts}, of the form $(c_1,\dots,c_n)$, where each $c_i$ is a constant
from a countably infinite domain.  For a fact $f$ and an attribute $A_i$, we denote by $f[A_i]$ the value associated by $f$ with the attribute $A_i$ (that is, $f[A_i]=c_i$). Similarly, for a sequence $X=(A_{j_1},\dots,A_{j_m})$ of attributes, we denote by $f[X]$ the tuple $(f[A_{j_1}],\dots,f[A_{j_m}])$. Generally, we use letters from the beginning of the English alphabet (i.e., $A,B,C,...$) to denote single attributes and letters from the end of the alphabet (i.e., $X,Y,Z,...$) to denote sets of attributes.
We may omit stating the relational schema
of a database $D$ when it is clear from the context or
irrelevant.

A \e{Functional Dependency} (FD for short) over $(A_1,\dots,A_n)$ is an expression of the form $X\rightarrow Y$, where $X,Y\subseteq\set{A_1,\dots,A_m}$.  We may also write the attribute sets $X$ and $Y$ by concatenating the attributes
(e.g., $AB\rightarrow C$ instead of $\set{A,B}\rightarrow\set{C}$).  A database $D$ satisfies $X\rightarrow Y$ if every two facts $f,g\in D$ that agree on the values of the attributes of $X$ also agree on the values of the attributes of $Y$ (that is, if $f[X]=g[X]$ then $f[Y]=g[Y]$). A database $D$ \e{satisfies} a set $\depset$ of FDs, denoted by $D\models\depset$, if $D$ satisfies every FD of $\depset$. Otherwise, $D$ \e{violates} $\depset$ (denoted by $D\not\models\depset$). Two FD sets over the same relational schema are \e{equivalent} if every database that satisfies one of them also satisfies the other.

Let $\depset$ be a set of FDs and $D$ a database (which may violate $\depset$).  A \e{repair} (\e{of $D$ w.r.t.~$\depset$}) is a maximal consistent subset of $D$; that is, $E\subseteq D$ is a repair if $E\models\depset$ but $E'\not\models\depset$ for every $E\subsetneq E'$. A \e{cardinality repair} (or \e{c-repair} for short) is a repair of maximum cardinality; that is, it is a repair $E$ such that $|E|\ge |E'|$ for every repair $E'$.

\begin{exa}\label{example:train-intro}
 Consider again the database of Figure~\ref{fig:trains}
  over the relational schema
  $$(\att{train},\att{departs},\att{arrives},\att{time},\att{duration}).$$
The FD set $\depset$ consists of the two FDs:
$$\circ\,\,\,\, \att{train}\,\,\att{time}\rightarrow \att{departs}
\quad\quad\quad
\circ\,\,\,\,\att{train}\,\,\att{time}\,\,\att{duration}\rightarrow \att{arrives}$$
The first FD states that the departure station is determined by the train number and departure time, and the
second FD states that the arrival station is determined by the train number, the departure time, and the duration of the ride.

Observe that the database of Figure~\ref{fig:trains} violates the FDs as all the facts refer to the same train number and departure time, but there is
no agreement on the departure station. Moreover, the facts $f_6$ and $f_7$ also agree on the duration, but disagree on the arrival station. The database has five repairs: \e{(a)} $\set{f_1,f_2}$, \e{(b)} $\set{f_3,f_4,f_5}$,
\e{(c)} $\set{f_6,f_8}$, \e{(d)} $\set{f_7,f_8}$, and \e{(e)} $\set{f_9}$; only the second one is a cardinality repair.
\eqed
\end{exa}

\subsection{Shapley Value.}
A \e{cooperative game} of a set $A$ of players is a function $v:\pall(A)\rightarrow \mathbb{R}$, where $\pall(A)$ is the power set of $A$, such that $v(\emptyset)=0$. The value $v(B)$ should be thought of as the joint wealth obtained by the players of $B$ when they cooperate.  The \e{Shapley value} of a player $a\in A$ measures the contribution of $a$ to the total wealth $v(A)$ of the game~\cite{shapley:book1952}, and is formally defined by
$$\shapley(A,v,a)\eqdef \dfrac{1}{|A|!}\sum_{\sigma \in \Pi_A} (v(\sigma_a\cup \{a\})-v(\sigma_a))$$
where $\Pi_A$ is the set of all permutations over the players of $A$ and $\sigma_a$ is the set of players that appear before $a$ in the permutation $\sigma$.
Intuitively, the Shapley value of a player $a$ is the expected contribution of $a$ to a subset constructed by drawing players randomly one by one (without replacement), where the contribution of $a$ is the change to the value of $v$ caused by the addition of $a$.
An alternative formula for the Shapley value, that we will use in this article, is the following.
$$\shapley(A,v,a)\eqdef \sum_{B\subseteq
  A\setminus\set{a}}\frac{|B|!\cdot (|A|-|B|-1)!}{|A|!}
\Big(v(B\cup\set{a})-v(B)\Big)$$
Observe that $|B|!\cdot (|A|-|B|-1)!$ is the number of permutations where the players of $B$ appear first, then $a$, and then the rest of the players.

\subsection{Complexity.}
In this article, we focus on the standard notion of \e{data complexity}, where the relational schema and set of FDs are considered fixed and the input consists of a database and a fact. In particular, a polynomial-time algorithm may be exponential in the number of attributes or FDs. Hence, each combination of a relational schema and an FD set defines a distinct problem, and different combinations may have different computational complexities.
We discuss both exact and approximate algorithms for computing Shapley values. 

Recall that a \e{Fully-Polynomial Randomized Approximation Scheme} (FPRAS, for short) for a function $f$ is a randomized algorithm $A(x,\epsilon,\delta)$ that returns an $\epsilon$-approximation of $f(x)$ with probability at least $1-\delta$, given an input $x$ for $f$ and $\epsilon,\delta\in(0,1)$, in time polynomial in $x$, $1/\epsilon$, and $\log(1/\delta)$. Formally, an FPRAS, satisfies:
\[\Pr{{f(x)}/{(1+\epsilon)}\leq A(x,\epsilon,\delta)\leq
    (1+\epsilon)f(x)}\geq 1-\delta\,.\]
Note that this notion of FPRAS refers to a \e{multiplicative} approximation, and we adopt this notion implicitly unless stated otherwise. We may also
write ``multiplicative'' explicitly for stress.
In cases where the function $f$ has a bounded range, it also makes sense to discuss 
an \e{additive} FPRAS where $\Pr{f(x)-\epsilon\leq A(x,\epsilon,\delta)\leq f(x)+\epsilon}\geq
  1-\delta$.
We refer to an additive FPRAS, and explicitly state so, in cases where the Shapley value is in the range $[0,1]$.

\eat{
\paragraph*{Relation Schemas and Databases}
We denote by $R(A_1,\dots,A_n)$ a \e{relational schema} that consists of a \e{relation symbol} $R$ and a sequence $(A_1,\dots,A_n)$ of \e{attributes}. We refer to $n$ as the \e{arity} of $R$. A \e{database} $D$ over a relational schema $R(A_1,\dots,A_n)$ is a set of \e{facts} of the form $f=R(c_1,\dots,c_n)$, where each $c_i$ is a constant. For a fact $f$ and an attribute $A_i$, we denote by $f[A_i]$ the value associated by $f$ with the attribute $A_i$ (that is, $f[A_i]=c_i$). Similarly, for a sequence $X=(A_{j_1},\dots,A_{j_m})$ of attributes, we denote by $f[X]$ the tuple $(f[A_{j_1}],\dots,f[A_{j_m}])$. Generally, we use letters from the beginning of the English alphabet (i.e., $A,B,C,...$) to denote single attributes and letters from the end of the alphabet (i.e., $X,Y,Z,...$) to denote sets of attributes.
We may omit stating the relational schema
of a database $D$ when it is clear from the context or
irrelevant.

\paragraph*{Functional Dependencies}
A \e{Functional Dependency} (FD, for short) over a relation schema $R(A_1,\dots,A_n)$ is an expression of the form $X\rightarrow Y$, where $X,Y\subseteq\set{A_1,\dots,A_m}$.
 We may also write the attribute sets $X$ and $Y$ by concatenating the attributes; as an example, if $X=\set{A,B}$ and $Y=\set{C}$, then we may write $AB\rightarrow C$ instead of $\set{A,B}\rightarrow\set{C}$.
A database $D$ satisfies an FD $X\rightarrow Y$ if every two facts $f,g\in D$ that agree on the values of the attributes of $X$ also agree on the values of the attributes of $Y$ (that is, if $f[X]=g[X]$ then $f[Y]=g[Y]$). A database $D$ \e{satisfies} a set $\depset$ of FDs, denoted by $D\models\depset$, if $D$ satisfies every FD of $\depset$. Otherwise, $D$ \e{violates} $\depset$ (denoted by $D\not\models\depset$).
Two FDs over the same relation schema are \e{equivalent} if every database that satisfies one of them also satisfies the other.

Next, we give a non-standard definition that we need for this article. Following Livshits et al.~\cite{DBLP:conf/pods/LivshitsK17}, we say that an FD set $\depset$ has a \e{left-hand-side chain} (lhs chain, for short) if the FDs of $\depset$ can be arranged in an order $X_1\rightarrow Y_1,\dots,X_n\rightarrow Y_n$ such that $X_i\subseteq X_j$ for all $i<j$. We refer to this order as the \e{chain order} of $\depset$.

\paragraph*{Repairs}
Let $D$ be a database and $\depset$ a set of FDs over the same relation schema. A \e{subset repair} (of $D$ w.r.t.~$\depset$) is a maximal consistent subset of $D$; that is, $E\subseteq D$ is a subset repair if $E\models\depset$, but $E'\not\models\depset$ for every $E'\subset E$. A \e{cardinality repair} is a subset repair of maximum cardinality; that is, $E\subseteq D$ is a cardinality repair if it is a subset repair and it holds that $|E|\ge |E'|$ for every subset repair $E'$.

\paragraph*{Shapley Value} 
For a set $A$ of players, a \e{cooperative game} is a function $v:\pall(A)\rightarrow \mathbb{R}$ (where $\pall(A)$ is the power set of $A$) mapping every subset $B$ of players from $A$ to a number $v(B)$, that represents the joint wealth obtained by the players of $B$ when they cooperate. The function $v$ should satisfy $v(\emptyset)=0$. The \e{Shapley value}~\cite{shapley:book1952} of a player $a\in A$ measures the contribution of $a$ to the total wealth $v(A)$ of the game. Formally, this value is defined as follows.
$$\shapley(A,v,a)\eqdef \dfrac{1}{|A|!}\sum_{\sigma \in \Pi_A} (v(\sigma_a\cup \{a\})-v(\sigma_a))$$
where $\Pi_A$ is the set of all permutations over the players of $A$ and $\sigma_a$ is the set of players that appear before $a$ in the permutation $\sigma$. Intuitively, the Shapley value of a player $a$ is the expected contribution of $a$ in a random permutation of the players, where the contribution of $a$ is the change to the value of $v$ caused by the addition of $a$. An alternative formula for the Shapley value, that we will use throughout this article, is the following.
$$\shapley(A,v,a)\eqdef \sum_{B\subseteq
  A\setminus\set{a}}\frac{|B|!\cdot (|A|-|B|-1)!}{|A|!}
\Big(v(B\cup\set{a})-v(B)\Big)$$
Observe that $|B|!\cdot (|A|-|B|-1)!$ is the number of permutations where the players of $B$ appear first, then $a$, and then the rest of the players.

\paragraph*{Inconsistency Measures}
An \e{inconsistency measure} $\I$ is a function that maps pairs $(D,\depset)$ of a database $D$ an a set $\depset$ of FDs to a number $\I(D,\depset)\in [0,\infty)$. Intuitively, the higher the value $\I(D,\depset)$ is, the more inconsistent the database $D$ i w.r.t.~$\depset$. We make two standard assumptions: \e{(1)} $\I(D,\depset)=0$ if and only if $D\models\depset$, and \e{(2)} $\I(D,\depset)=\I(D,\depset')$ if $\depset$ and $\depset'$ are equivalent.

\paragraph*{Complexity}
In this article, we focus on the standard notion of \e{data complexity}, where the relational schema and set of FDs are considered fixed and the input consists of a database and a fact. In particular, a polynomial-time algorithm may be exponential in the number of attributes or FDs. Hence, each combination of a relational schema $R(A_1,\dots,A_n)$ and an FD set $\depset$ defines a distinct problem, and different combinations may have different computational complexities.

We also consider approximate computations. In particular, a \e{Fully-Polynomial Randomized Approximation Scheme} (FPRAS, for short) for a function $f$ is a randomized algorithm $A(x,\epsilon,\delta)$ that returns an $\epsilon$-approximation of $f(x)$ with probability at least $1-\delta$, given an input $x$ for $f$ and $\epsilon,\delta\in(0,1)$, in time polynomial in $x$, $1/\epsilon$, and $\log(1/\delta)$. Formally, an \e{additive} FPRAS satisfies:
\[\Pr{f(x)-\epsilon\leq A(x,\epsilon,\delta)\leq f(x)+\epsilon)}\geq
  1-\delta\,,\] and a \e{multiplicative} FPRAS satisfies: \[\Pr{{f(x)}/{(1+\epsilon)}\leq A(x,\epsilon,\delta)\leq
    (1+\epsilon)f(x)}\geq 1-\delta\,.\]
}    

\section{The Shapley Value of Inconsistency Measures}\label{sec:shap}
In this article, we study the Shapley value of facts with respect to measures of database inconsistency. More precisely, the cooperative game that we consider here is determined by an inconsistency measure $\I$, and the facts of the database take the role of the players.
In turn, an \e{inconsistency measure} $\I$ is a function that maps pairs $(D,\depset)$ of a database $D$ and a set $\depset$ of FDs to a number $\I(D,\depset)\in [0,\infty)$. Intuitively, the higher the value $\I(D,\depset)$ is, the more inconsistent (or, the less consistent) the database $D$ is w.r.t.~$\depset$.
The Shapley value of a fact $f$ of a database $D$ w.r.t.~an FD set $\depset$ and inconsistency measure $\I$ is then defined as follows.
\begin{equation}\label{eq:shapley}
   \shapley(D,\depset,f,\I)\eqdef \sum_{E\subseteq
  (D\setminus\set{f})}\hskip-0.5em\frac{|E|!\cdot (|D|-|E|-1)!}{|D|!}
\Big(\I(E\cup\set{f},\depset)-\I(E,\depset)\Big)
\end{equation}
We note that the definition of the Shapley value requires the cooperative game to be zero on the empty set~\cite{shapley:book1952}
and this is indeed the case for all of the inconsistency measures $\I$ that we consider in this work.
Next, we introduce each of these measures.

\begin{itemize}
    \item $\Id$  is the \e{drastic measure} that takes the value $1$ if the database is inconsistent and the value $0$ otherwise~\cite{DBLP:journals/ki/Thimm17}.
    \item $\Imi$ counts the \e{minimal inconsistent subsets}~\cite{DBLP:conf/kr/HunterK08,DBLP:journals/ai/HunterK10}; in the case of FDs, these subsets are simply the pairs of tuples that jointly violate an FD.
    \item $\Ip$ is the number of \e{problematic facts}, where a fact is problematic
      if it belongs to a minimal inconsistent subset~\cite{DBLP:conf/ecsqaru/GrantH11}; in the case of FDs, a fact is problematic if and only if it
      participates in a pair of facts that jointly violate $\depset$.      
    \item $\Imr$ is the minimal number of facts that we need to delete from the database for $\depset$ to be satisfied (similarly to the concept of a cardinality repair and proximity in Property Testing)~\cite{DBLP:conf/ecsqaru/GrantH13,DBLP:journals/jacm/GoldreichGR98,DBLP:conf/lpnmr/Bertossi19}.
    \item $\Imc$ is the number of \e{maximal consistent subsets} (i.e., repairs)~\cite{DBLP:conf/ecsqaru/GrantH11,DBLP:journals/ijar/GrantH17}.
\end{itemize}

Table~\ref{table:complexity} summarizes the complexity results for the different measures. The first column (lhs chain) refers to FD sets that have a left-hand-side chain---a notion that was introduced by Livshits et al.~\cite{DBLP:conf/pods/LivshitsK17}, and we recall in the next section. The second column
(no lhs chain, PTime c-repair) refers to FD sets that do not have a left-hand-side chain, but entail a polynomial-time cardinality repair computation according to the dichotomy of Livshits et al.~\cite{DBLP:journals/tods/LivshitsKR20} that we discuss in more details in Section~\ref{sec:minrep}.

\begin{exa}
  Consider again the database of our running example. Since the database is inconsistent w.r.t.~the FD set defined in Example~\ref{example:train-intro}, we have that $\Id(D,\depset)=1$. As for the measure $\Imi$, the reader can easily verify that there are twenty nine pairs of tuples that jointly violate the FDs; hence, we have that $\Imi(D,\depset)=29$. Since each tuple participates in at least one violation of the FDs, it holds that $\Ip(D,\depset)=9$. Finally, as we have already seen in Example~\ref{example:train-intro}, the database has five repairs and a single cardinality repair obtained by deleting six facts. Thus, $\Imr(D,\depset)=6$ and $\Imc(D,\depset)=5$. In the next sections, we discuss the computation of the Shapley value for each one of these measures.
  \eqed\end{exa}

\paragraph*{Preliminary analysis.}
We study the \e{data complexity} of computing $\shapley(D,\depset,f,\I)$ for different inconsistency measures $\I$. To this end, we give here two important observations that we will use throughout the article. The first observation is that the computation of $\shapley(D,\depset,f,\I)$ can be easily reduced to the computation of the expected value of the inconsistency measure over all 
Here, we denote by $\Exp_{D'\sim U_m(D\setminus\set{f})}\big(\I(D'\cup\set{f},\depset)\big)$ the expected value of $\I(D'\cup\set{f},\depset)$ over all subsets $D'$ of $D\setminus\set{f}$ of a given size $m$, assuming a uniform distribution. Similarly, $\Exp_{D'\sim U_m(D\setminus\set{f})}\big(\I(D',\depset)\big)$ is the expected value of $\I(D',\depset)$ over all such subsets $D'$.

\def\obsreduction{
Let $\I$ be an inconsistency measure. The following holds.
{\small
\[\shapley(D,\depset,f,\I)=
    \frac{1}{|D|}\sum_{m=0}^{|D|-1} \left[ \Exp_{D'\sim U_m(D\setminus\set{f})}\big(\I(D'\cup\set{f},\depset)\big)- \Exp_{D'\sim U_m(D\setminus\set{f})}\big(\I(D',\depset)\big)\right]\]
}}

\begin{obs}\label{obs:reductionexp}
\obsreduction
\end{obs}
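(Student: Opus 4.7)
The plan is to start from the second (equivalent) formula for the Shapley value given in the preliminaries, namely
\[
\shapley(D,\depset,f,\I)=\sum_{E\subseteq D\setminus\{f\}}\frac{|E|!\,(|D|-|E|-1)!}{|D|!}\bigl(\I(E\cup\{f\},\depset)-\I(E,\depset)\bigr),
\]
and rewrite it by grouping the subsets $E$ of $D\setminus\{f\}$ according to their cardinality. Specifically, I would split the sum as
\[
\shapley(D,\depset,f,\I)=\sum_{m=0}^{|D|-1}\frac{m!\,(|D|-m-1)!}{|D|!}\sum_{\substack{E\subseteq D\setminus\{f\}\\ |E|=m}}\bigl(\I(E\cup\{f\},\depset)-\I(E,\depset)\bigr).
\]
This is valid because the Shapley coefficient depends on $E$ only through $|E|$.

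Next I would recognize the inner sum as a uniform expectation. Since there are exactly $\binom{|D|-1}{m}$ subsets of $D\setminus\{f\}$ of size $m$, the inner sum equals
\[
\binom{|D|-1}{m}\Bigl(\Exp_{D'\sim U_m(D\setminus\{f\})}\bigl(\I(D'\cup\{f\},\depset)\bigr)-\Exp_{D'\sim U_m(D\setminus\{f\})}\bigl(\I(D',\depset)\bigr)\Bigr),
\]
by definition of the uniform distribution $U_m(D\setminus\{f\})$ over size-$m$ subsets and linearity of expectation.

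The final step is a routine cancellation of factorials: the coefficient in front of the expectation becomes
\[
\frac{m!\,(|D|-m-1)!}{|D|!}\cdot\binom{|D|-1}{m}=\frac{m!\,(|D|-m-1)!}{|D|!}\cdot\frac{(|D|-1)!}{m!\,(|D|-m-1)!}=\frac{1}{|D|},
\]
which is independent of $m$. Pulling this constant factor outside of the sum over $m$ yields the stated identity. There is no real obstacle here---the argument is purely an algebraic regrouping of the defining sum---so the write-up is essentially the three steps above: partition by $|E|$, identify the inner sum as $\binom{|D|-1}{m}$ times the uniform expectation of the marginal contribution at level $m$, and simplify the binomial coefficient to obtain the uniform factor $1/|D|$.
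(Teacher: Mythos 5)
Your proposal is correct and follows essentially the same route as the paper's own proof: group the defining sum by the cardinality $m$ of $E$, insert the factor $\binom{|D|-1}{m}$ to recognize the inner sums as uniform expectations, and simplify the coefficient $\frac{m!\,(|D|-m-1)!}{|D|!}\binom{|D|-1}{m}$ to $\frac{1}{|D|}$. No gaps; nothing further is needed.
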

\begin{proof}
  We have the following.
  \begin{align}
  &\shapley(D,\depset,f,\I)=\sum_{D'\subseteq
      (D\setminus\set{f})}\frac{|D'|!  (|D|-|D'|-1)!}{|D|!}
    \Big(\I(D'\cup\set{f},\depset)-\I(D',\depset)\Big)\notag\\
    &=\sum_{m=0}^{|D|-1}\underset{\substack{D'\subseteq
      (D\setminus\set{f}) \\ |D'|=m}}{\sum}\frac{m!  (|D|-m-1)!}{|D|!}
    \Big(\I(D'\cup\set{f},\depset)-\I(D',\depset)\Big)\notag\\
    &=\sum_{m=0}^{|D|-1} \frac{m!  (|D|-m-1)!}{|D|!}{{|D|-1}\choose m} \underset{\substack{D'\subseteq
      (D\setminus\set{f}) \\ |D'|=m}}{\sum}
    \frac{1}{{{|D|-1}\choose m}}\Big(\I(D'\cup\set{f},\depset)\Big)\label{eq:uniform}\\
    &- \sum_{m=0}^{|D|-1}\frac{m!  (|D|-m-1)!}{|D|!}{{|D|-1}\choose m} \underset{\substack{D'\subseteq
      (D\setminus\set{f}) \\ |D'|=m}}{\sum}
    \frac{1}{{{|D|-1}\choose m}}\Big(\I(D',\depset)\Big)\notag\label{eq:exp}\\
    &=\sum_{m=0}^{|D|-1} \frac{m!  (|D|-m-1)!}{|D|!}{{|D|-1}\choose m} \Exp_{D'\sim U_m(D\setminus\set{f})}\big(\I(D'\cup\set{f},\depset)\big)\\
    &- \sum_{m=0}^{|D|-1}\frac{m!  (|D|-m-1)!}{|D|!}{{|D|-1}\choose m} \Exp_{D'\sim U_m(D\setminus\set{f})}\big(\I(D',\depset)\big)\notag\\
        &=\frac{1}{|D|}\sum_{m=0}^{|D|-1} \left[\Exp_{D'\sim U_m(D\setminus\set{f})}\big(\I(D'\cup\set{f},\depset)\big)
    -  \Exp_{D'\sim U_m(D\setminus\set{f})}\big(\I(D',\depset)\big)\right]\notag
\end{align}
Note that in Equation~\eqref{eq:uniform} we multiply and divide by the value ${{|D|-1}\choose m}$. The expectation expression of Equation~\eqref{eq:exp} is due to the fact that $1/{{{|D|-1}\choose m}}$ is the probability of a random subset of size $m$ of $D\setminus\set{f}$ in the uniform distribution.
\end{proof}

Observation~\ref{obs:reductionexp} implies that to compute the Shapley value of $f$, it suffices to compute the expectations of the amount of inconsistency over subsets $D'$ and $D'\cup\set{f}$, where $D'$ is drawn uniformly from the space of subsets of size $m$, for every $m$.
  More precisely, the computation of the Shapley value is Cook reducible\footnote{Recall that a \e{Cook reduction} from a function $F$ to a function $G$ is a
    polynomial-time \e{Turing reduction} from $F$ to $G$, that is, an algorithm that computes $F$ with an oracle to a solver of $G$.}
    to the computation of these expectations. Our algorithms will, indeed, compute these expectations instead of the Shapley value.

The second observation is the following. One of the basic properties of the Shapley value is one termed ``efficiency''---the sum of the Shapley values over all the players equals the total wealth~\cite{shapley:book1952}. This property implies that
$\sum_{f\in D}\shapley(D,\depset,f,\I)=\I(D,\depset)$.
Thus, whenever the measure itself is computationally hard, so is the Shapley value of facts. 

\begin{fact}\label{fact:measuretoshap}
  Let $\I$ be an inconsistency measure. The computation of $\I$  is Cook reducible to the computation of the Shapley value of facts under $\I$.
\end{fact}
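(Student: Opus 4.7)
The plan is to exploit the efficiency property of the Shapley value, which the excerpt recalls just before the statement: for any cooperative game $v$ on a player set $A$, the Shapley values of the players sum to the total wealth. Applying this to the game $v(E) \eqdef \I(E,\depset)$ on the player set $D$ (which is a legitimate game because, as noted in the excerpt, $\I(\emptyset,\depset)=0$ for all measures considered), the identity
\[\sum_{f\in D}\shapley(D,\depset,f,\I) \;=\; \I(D,\depset)\]
holds. This expresses $\I(D,\depset)$ as a sum of $|D|$ Shapley values, each of which is an instance of the target problem on an input of the same size as $(D,\depset)$.

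The Cook reduction is then immediate: given an oracle for $\shapley$, on input $(D,\depset)$ the algorithm enumerates the facts $f_1,\ldots,f_{|D|}$ of $D$, issues one oracle call $\shapley(D,\depset,f_i,\I)$ per fact, and outputs the sum of the returned values. This uses $|D|$ oracle calls on inputs polynomially bounded by $|(D,\depset)|$, and the arithmetic on the returned rationals is polynomial in the combined bit length of the answers, so the whole procedure runs in polynomial time with oracle access, as required. There is no real obstacle here; the only mild subtlety is to note that Shapley's efficiency identity applies to our setting, which follows from the assumption that $\I$ vanishes on the empty database.
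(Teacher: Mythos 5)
Your proposal is correct and matches the paper's argument exactly: the paper justifies this fact by the efficiency property of the Shapley value, i.e., $\sum_{f\in D}\shapley(D,\depset,f,\I)=\I(D,\depset)$, which immediately yields the Cook reduction by summing $|D|$ oracle calls. Your additional remark that the efficiency identity applies because $\I$ vanishes on the empty database mirrors the paper's own observation that all measures considered satisfy this requirement.
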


This observation can be used for showing lower bounds on the complexity of the Shapley value, as we will see in the next sections.

\section{Measures $\Imi$ and $\Ip$: The Tractable Measures}\label{sec:tractable}

We start by discussing two tractable measures, namely $\Imi$ and $\Ip$. We first give algorithms for computing the Shapley value for these measures, and then discuss the generalization to multiple relations.

\subsection{Computation}
Recall that $\Imi$ counts the pairs of facts that jointly violate at least one FD.
An easy observation is that a fact $f$ increases the value of the measure $\Imi$ by $i$ in a permutation $\sigma$ if and only if $\sigma_f$ contains exactly $i$ facts that are in conflict with $f$. Hence, assuming that $D$ contains $N_f$ facts that conflict with $f$, the Shapley value for this measure can be computed in the following way:
\begin{align*}
&\shapley(D,\depset,f,\Imi)=\sum_{E\subseteq
  (D\setminus\set{f})}\hskip-0.5em\frac{|E|!\cdot (|D|-|E|-1)!}{|D|!}
\Big(\I(E\cup\set{f},\depset)-\I(E,\depset)\Big)\\
&=\frac{1}{|D|!}\sum_{i=1}^{N_f}\underset{\substack{E\subseteq
  (D\setminus\set{f}) \\ |E\cap N_f|=i}}\sum\hskip-0.5em
|E|!\cdot (|D|-|E|-1)!\cdot i
=\frac{1}{|D|!}\sum_{i=1}^{N_f}\sum_{m=i}^{|D|-1}\underset{\substack{E\subseteq
  (D\setminus\set{f}) \\ |E|=m \\ |E\cap N_f|=i}}\sum\hskip-0.5em
m!\cdot (|D|-m-1)!\cdot i\\
&=\frac{1}{|D|!}\sum_{i=1}^{N_f}\sum_{m=i}^{|D|-1}{N_f \choose i}{|D|-N_f-1 \choose m-i}\cdot m!\cdot (|D|-m-1)!\cdot i
\end{align*}
Therefore, we immediately obtain the following result.

\begin{thm}
  Let $\depset$ be a set of FDs.  $\shapley(D,\depset,f,\Imi)$ is computable in polynomial time, given $D$ and $f$.
\end{thm}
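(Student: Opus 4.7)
The plan is to exploit the fact that $\Imi$ is additive over conflicting pairs, so that the marginal contribution of $f$ to any coalition $E$ depends only on how many facts of $E$ conflict with $f$. Concretely, first I would note that $\Imi(E\cup\set{f},\depset)-\Imi(E,\depset)$ equals exactly the number of facts in $E$ that, together with $f$, violate some FD in $\depset$. This reduces the problem to counting, for each possible marginal increment $i$, the number of subsets $E\subseteq D\setminus\set{f}$ of each size $m$ whose intersection with the conflict neighborhood of $f$ has size exactly $i$.

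Second, I would define $N_f$ as the number of facts in $D\setminus\set{f}$ that jointly violate some FD with $f$. Since $\depset$ is fixed, $N_f$ is computed in polynomial time by iterating over all $g\in D\setminus\set{f}$ and checking whether $\set{f,g}\not\models\depset$, a constant-time test per candidate. Once $N_f$ is known, a subset $E$ of $D\setminus\set{f}$ of size $m$ containing exactly $i$ facts that conflict with $f$ can be built in $\binom{N_f}{i}\binom{|D|-N_f-1}{m-i}$ ways (choose the $i$ conflicting facts, then fill with $m-i$ non-conflicting ones).

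Third, I would substitute this count into the Shapley formula of Equation~\eqref{eq:shapley}, grouping summands of equal $(m,i)$. This yields exactly the closed form displayed just above the theorem statement: a double sum over $1\le i\le N_f$ and $i\le m\le |D|-1$, with each summand a product of two binomials and two factorials divided by $|D|!$. Since this sum has $O(|D|^2)$ terms and each term is evaluable in polynomial time (e.g.\ by precomputing factorials up to $|D|$ and using them to produce the binomial coefficients), the overall computation is polynomial in $|D|$, proving the theorem.

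There is no real obstacle here; the only care needed is the combinatorial counting step, namely justifying the $\binom{N_f}{i}\binom{|D|-N_f-1}{m-i}$ factor and the range of the indices (in particular that subsets with $i>N_f$ or $m<i$ contribute zero and can be omitted). Everything else is a direct rearrangement of the Shapley sum using the linearity observation above.
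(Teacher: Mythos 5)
Your proposal is correct and follows essentially the same route as the paper: both identify the marginal contribution of $f$ to a coalition $E$ as the number of facts of $E$ conflicting with $f$, count the coalitions of size $m$ with exactly $i$ conflicting facts via $\binom{N_f}{i}\binom{|D|-N_f-1}{m-i}$, and plug this into the Shapley sum to get the polynomial-size double sum. The combinatorial justification you flag as the one point needing care is exactly the step the paper's derivation performs.
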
 

We now move on to $\Ip$ that counts the ``problematic'' facts; that is, facts that participate in a violation of $\depset$.
Here, a fact $f$ increases the measure by $i$ in a permutation $\sigma$ if and only if $\sigma_f$ contains precisely $i-1$ facts that are in conflict with $f$, but not in conflict with any other fact of $\sigma_f$ (hence, all these facts and $f$ itself are added to the group of problematic facts).
 We prove the following.
 
 \begin{thm}\label{thm:ponefd}
Let $\depset$ be a set of FDs. $\shapley(D,f,\depset,\Ip)$ is computable in polynomial time, given $D$ and $f$.
\end{thm}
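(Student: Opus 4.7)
The plan is to exploit the linearity of the Shapley value in the underlying characteristic function, which lets us split $\Ip$ into a sum of simple per-fact indicators. Specifically, for each $g\in D$ define $v_g(E)\eqdef 1$ if $g\in E$ and there exists some $g'\in E\setminus\{g\}$ with $\{g,g'\}\not\models\depset$, and $v_g(E)\eqdef 0$ otherwise. Then $\Ip(E,\depset)=\sum_{g\in D} v_g(E)$, and since the Shapley formula of Equation~\eqref{eq:shapley} is linear in the inner characteristic function, we get
\[
\shapley(D,\depset,f,\Ip)=\sum_{g\in D}\shapley(D,\depset,f,v_g).
\]
It therefore suffices to give a closed-form, polynomial-time expression for each summand.

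For each $g\in D$ let $C_g\eqdef\{h\in D\setminus\{g\}\mid \{g,h\}\not\models\depset\}$ and $N_g\eqdef|C_g|$; these are computable in polynomial time (under data complexity, checking whether two facts jointly violate a fixed $\depset$ is constant-time). I would then distinguish three cases for the summand $\shapley(D,\depset,f,v_g)$. Case (i): $g\neq f$ and $g\notin C_f$. Adding $f$ to any $E\subseteq D\setminus\{f\}$ cannot change the status of $g$ in $v_g$, since $f$ is not a conflict partner of $g$, so every marginal contribution vanishes and the Shapley value is $0$. Case (ii): $g=f$. The marginal contribution $v_f(E\cup\{f\})-v_f(E)$ equals the indicator that $E\cap C_f\neq\emptyset$. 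By the standard symmetry argument applied to the $N_f+1$ elements of $\{f\}\cup C_f$ in a uniformly random permutation, $f$ is first among them with probability $1/(N_f+1)$, giving $\shapley(D,\depset,f,v_f)=N_f/(N_f+1)$. Case (iii): $g\in C_f$. The marginal contribution is $1$ exactly when $g\in\sigma_f$ and $\sigma_f\cap C_g=\emptyset$, i.e.\ $g$ has already been added but none of its other conflict partners has. Restricting a uniformly random permutation to the set $\{g\}\cup C_g$ of $N_g+1$ elements (which contains $f$, since $g\in C_f$ iff $f\in C_g$), the desired event is exactly ``$g$ is first and $f$ is second,'' whose probability is $1/\bigl(N_g(N_g+1)\bigr)$.

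Summing the three cases gives the closed form
\[
\shapley(D,\depset,f,\Ip)=\frac{N_f}{N_f+1}+\sum_{g\in C_f}\frac{1}{N_g(N_g+1)},
\]
which is evaluable in time polynomial in $|D|$ once the conflict graph has been computed. The main obstacle I expect is justifying Case (iii) carefully: one must argue that adding $f$ only newly renders problematic those $g\in C_f$ that had no other conflict in $\sigma_f$, and that it never un-marks an already-problematic fact. The text preceding the statement already describes this increment informally (``$i-1$ facts in conflict with $f$ but not in conflict with any other fact of $\sigma_f$,'' plus $f$ itself), so the argument reduces to spelling out these two observations and confirming that the marginal of $v_g$ matches exactly the indicator used in the symmetry argument. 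Everything else is a routine application of the uniform-ordering trick inside a fixed subset of the permutation.
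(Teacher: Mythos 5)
Your proposal is correct, and it reaches the result by a somewhat different route than the paper. Both arguments rest on the same additive structure of $\Ip$ --- a sum over facts $g$ of the indicator that $g$ is problematic --- but you apply linearity at the level of the Shapley operator itself, writing $\shapley(D,\depset,f,\Ip)=\sum_g\shapley(D,\depset,f,v_g)$ and then evaluating each summand in closed form by the permutation-symmetry trick (restricting a uniformly random permutation to the relevant set $\set{g}\cup C_g$). The paper instead first invokes Observation~\ref{obs:reductionexp} to reduce the Shapley value to the expectations $\Exp_{D'\sim U_m(D\setminus\set{f})}\big(\Ip(D',\depset)\big)$ and $\Exp_{D'\sim U_m(D\setminus\set{f})}\big(\Ip(D'\cup\set{f},\depset)\big)$ for each size $m$, applies linearity of expectation to indicator variables $Y_g$, and evaluates $\Exp(Y_g)$ as a ratio of sums of binomial coefficients, finally summing over all $m$. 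Your case analysis is sound: in case (ii) the marginal of $v_f$ is the indicator that $\sigma_f\cap C_f\neq\emptyset$, giving $N_f/(N_f+1)$; in case (iii) the marginal of $v_g$ for $g\in C_f$ is $+1$ exactly when $g$ precedes $f$ and $f$ is the earliest element of $C_g$ (adding $f$ never un-marks a problematic fact, since it only adds conflict partners), which within $\set{g}\cup C_g$ is the event ``$g$ first, $f$ second,'' of probability $1/(N_g(N_g+1))$; and $N_g\ge 1$ there because $f\in C_g$. One can confirm the resulting formula satisfies the efficiency property $\sum_{f\in D}\shapley(D,\depset,f,\Ip)=\Ip(D,\depset)$ on small examples. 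What your approach buys is a clean closed-form expression computable directly from the conflict graph, avoiding the per-size summation and binomial manipulations; what the paper's approach buys is uniformity with the treatment of the other measures, all of which are routed through the expectation reduction of Observation~\ref{obs:reductionexp}.
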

\begin{proof}
We now show how the expected values of Observation~\ref{obs:reductionexp} can be computed in polynomial time.
We start with $\Exp_{D'\sim U_m(D\setminus\set{f})}\big(\Ip(D',\depset)\big)$.
We consider the uniform distribution $U_m(D\setminus\set{f})$
over the subsets of size $m$ of $D\setminus\set{f}$.
We denote by $X$ the random variable holding the number of
problematic facts in the random subset.
We denote by $Y_g$ the random variable that holds $1$ if the fact
$g$ is in the random subset and, moreover, participates there in a violation of the FDs.
In  addition, we denote the expectations of these variables by $\Exp(X)$ and
$\Exp(Y_g)$, respectively (without explicitly stating the distribution
$D'\sim U_m(D\setminus\set{f})$ in the
subscript). 
Due to the linearity of the expectation we have:

\begin{align*}
  \Exp_{D'\sim U_m(D\setminus\set{f})}\big(\Ip(D',\depset)\big)&=\Exp(X)=\Exp\left(\sum_{g\in {D\setminus\set{f}}}Y_g\right)=\sum_{g\in {D\setminus\set{f}}}\Exp(Y_g)
\end{align*}
Hence, the computation of $\Exp_{D'\sim U_m(D\setminus\set{f})}\big(\Ip(D',\depset)\big)$ reduces to the computation of $\Exp(Y_g)$, and this value can be computed as follows.

\begin{align*}
  \Exp(Y_g)&=\Pr{g\mbox{ is selected}}\times
  \Pr{\mbox{a conflicting fact is selected}\mid g\mbox{ is selected}}\\
  &=\frac{{{|D|-2}\choose {m-1}}}{{{|D|-1}\choose m}}\cdot \frac{\sum_{k=1}^{N_g}{{N_g}\choose k}\cdot {{|D|-1-N_g}\choose{m-k-1}}}{{{|D|-2}\choose {m-1}}}=\frac{\sum_{k=1}^{N_g}{{N_g}\choose k}\cdot {{|D|-1-N_g}\choose{m-k-1}}}{{{|D|-1}\choose m}}
\end{align*}
where $N_g$ is the number of facts in $D\setminus\set{f}$ that are in conflict with $g$.

We can similarly consider the distribution $U_m(D\setminus\set{f})$ and show that the expectation 
$\Exp_{D'\sim U_m(D\setminus\set{f})}\big(\Ip(D'\cup\set{f},\depset)\big)$ is equal to $\sum_{g\in {D\setminus\set{f}}}\Exp(Y'_g)$,
where $Y'_g$ is a random variable that holds $1$ if $g$ is selected in the random subset and, moreover, participates in a violation of the FDs, and $0$ otherwise.
For a fact $g$ that is not in conflict with $f$ it holds that
$\Exp(Y'_g)=\Exp(Y_g)$,
while for a fact $g$ that is in conflict with $f$ it holds that
\begin{align*}
  \Exp(Y'_g)&=\Pr{g\mbox{ is selected}}={{|D|-2}\choose {m-1}}/{{{|D|-1}\choose m}}\,.
    \tag*{\qedhere}
\end{align*}
\end{proof}

\subsection{Generalization to Multiple Relations}
 The results of this section immediately generalize to schemas with multiple relation symbols. This is true since one of the basic properties of the Shapley value is linearity~\cite{shapley:book1952}:
$$\shapley(D,f,\depset,a\cdot\alpha+b\cdot\beta)=a\cdot\shapley(D,f,\depset,\alpha)+b\cdot\shapley(D,f,\depset,\beta)$$
and both measures, $\Imi$ and $\Ip$, are additive over multiple relations, that is, the value of the measure on the entire database is the sum of the values over the individual relations.

\section{Measure $\Id$: The Drastic Measure}\label{sec:drastic}

In this section, we consider the drastic measure $\Id$.  While the measure itself is extremely simple and, in particular, computable in polynomial time (testing whether $\depset$ is satisfied), it might be intractable to compute the Shapley value of a fact. In particular, we prove a dichotomy for this measure, classifying FD sets into ones where the Shapley value can be computed in polynomial time and the rest where the problem is $\fpsharpp$-complete.\footnote{Recall that $\fpsharpp$ is the class of polynomial-time functions with an oracle to a problem in \#P (e.g., count the satisfying assignments of a propositional formula).}

\subsection{Dichotomy}
Before giving our dichotomy, we recall the definition of a \e{left-hand-side chain} (lhs chain, for short), introduced by Livshits et al.~\cite{DBLP:conf/pods/LivshitsK17}.

\begin{citeddefinition}{DBLP:conf/pods/LivshitsK17}
An FD set $\depset$ has a left-hand-side chain if for every two FDs $X\rightarrow Y$ and $X'\rightarrow Y'$ in $\depset$, either $X\subseteq X'$ or $X'\subseteq X$. 
\end{citeddefinition}

\begin{exa}
  The FD set of our running example (Example~\ref{example:train-intro}) has an lhs chain.
We could also define $\depset$ with redundancy by adding the following FD: $\att{train}\,\,\att{time}\,\,\att{arrives}\rightarrow \att{departs}$.
The resulting FD set does not have an lhs chain, but it is \e{equivalent} to an FD set with an lhs chain. An example of an FD set that does not have an lhs chain,
not even up to equivalence, is $\set{\att{train}\,\,\att{time}\rightarrow\att{departs}, \att{train}\,\,\att{departs}\rightarrow\att{time}}$.
\eqed\end{exa}

We prove the following.

\def\theoremdrastic{
Let $\depset$ be a set of FDs.
  If $\depset$ is equivalent to an FD set with an lhs chain, then $\shapley(D,f,\depset,\Id)$ is computable in polynomial time, given $D$ and $f$. Otherwise, the problem is $\fpsharpp$-complete.}

\begin{thm}\label{thm:drastic}
  \theoremdrastic
\end{thm}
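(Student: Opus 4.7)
The plan is to reduce the computation of $\shapley(D,f,\depset,\Id)$ to counting the \emph{consistent} subsets of a sub-database by cardinality, and then to handle each side of the dichotomy via this counting problem. By Observation~\ref{obs:reductionexp}, writing $[\cdot]$ for the Iverson bracket,
\[
\shapley(D,\depset,f,\Id)=\frac{1}{|D|}\sum_{m=0}^{|D|-1}\Bigl(\Pr_{|E|=m}[E\models\depset]-\Pr_{|E|=m}[E\cup\{f\}\models\depset]\Bigr),
\]
with $E$ uniform over $m$-subsets of $D\setminus\{f\}$. Since $E\cup\{f\}\models\depset$ iff $E\models\depset$ and $E$ avoids the set $N_f$ of facts conflicting with $f$, both terms are expressible through counts $C_m(D')$ of consistent subsets of size $m$ of two sub-databases, namely $D'=D\setminus\{f\}$ and $D'=D\setminus(\{f\}\cup N_f)$. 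Hence it suffices to compute, for any given $D'$, the vector $(C_0(D'),\dots,C_{|D'|}(D'))$, equivalently the generating polynomial $P_{D'}(x)=\sum_m C_m(D')x^m$.

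For the tractability direction, assume $\depset$ is equivalent to an FD set with an lhs chain $X_1\subseteq X_2\subseteq\cdots\subseteq X_k$; we may work with this representative. I would show that under such a chain, the facts of any database $D'$ admit a recursive block decomposition: first partition $D'$ into $X_1$-blocks; in each block a consistent subset must pick facts lying in a common $Y_1$-class; within each $Y_1$-class we recurse using $X_2\subseteq\cdots\subseteq X_k$. This yields a rooted-tree structure whose leaves are single facts. A consistent subset corresponds to choosing, at every internal node of the block tree, a single child-class (or none), and then picking a consistent subset inside that class. Following this structure bottom-up, the polynomial $P_{v}(x)$ at a node $v$ is obtained either by convolving the polynomials of independent children (when the node represents parallel $X_1$-blocks) or by summing polynomials of alternative children minus a constant to account for the empty choice (when all selected facts must share a $Y_i$-value). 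Both operations are polynomial-time, so $P_{D'}(x)$, and hence $\shapley(D,\depset,f,\Id)$, are computable in polynomial time.

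For the $\fpsharpp$-hardness direction, I would leverage the dichotomy of Livshits and Kimelfeld~\cite{DBLP:conf/pods/LivshitsK17}: if $\depset$ is not equivalent to any lhs-chain FD set, counting the consistent (in fact, the maximal consistent) subsets is already $\sharpp$-hard. The reduction from such a counting problem to Shapley-value queries proceeds by interpolation: given a target database $D^\star$ for the hard problem, I would pad $D^\star$ with $n$ fresh ``inert'' facts (facts on brand-new attribute values that conflict with nothing) and add a distinguished fact $f$ together with a controlled number of copies of facts conflicting with $f$; by varying the number of inert facts added and querying the Shapley value of $f$ each time, I obtain a linear system whose unknowns are the counts $C_m(D^\star\setminus\{f\})$ and whose coefficient matrix is essentially a Vandermonde-type matrix built from the binomial weights $\binom{|D|-1}{m}^{-1}$, which is invertible. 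Solving the system yields the full polynomial $P_{D^\star}(x)$, from which the $\sharpp$-hard count is read off; $\fpsharpp$-membership follows from the upper bound of the reduction of Observation~\ref{obs:reductionexp} to $\sharpp$ counts of consistent subsets of $D$. The main obstacle will be the hardness construction: we must keep the FD set $\depset$ fixed (we cannot enlarge the signature) while ensuring that (i)~the ``inert'' padding can be carried out with the given FDs, (ii)~the controlled conflicts with $f$ produce the intended weighting of terms, and (iii)~the resulting linear system has full rank so that all $C_m$ values can be recovered; this is where the non-lhs-chain hypothesis on $\depset$ will be crucially exploited to build the needed conflict structure.
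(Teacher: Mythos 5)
Your reduction to counting consistent subsets by cardinality and your treatment of the tractable side are sound and essentially coincide with the paper's proof: the block/subblock tree induced by the chain $X_1\subseteq\cdots\subseteq X_k$, with convolution across independent blocks and summation across mutually exclusive subblocks, is exactly the data structure $T$ and the bottom-up dynamic program of \algname{DrasticShapley}/\algname{DrasticShapleyF} (the paper tracks violation probabilities rather than consistency counts, but that is an immaterial complementation).

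The hardness side, however, has two genuine gaps. First, you conflate two different counting problems. The quantity recoverable from $\shapley(D,f,\depset,\Id)$ via your interpolation is the vector of counts of \emph{all} consistent subsets of each size, whereas the $\sharpp$-hard problem in the Livshits--Kimelfeld dichotomy you invoke is counting \emph{maximal} consistent subsets (repairs). Knowing $P_{D^\star}(x)$ does not let you ``read off'' the number of repairs, so the hardness of your target count is not supplied by that citation and must be established independently. The paper does this by observing that for $\depset=\set{A\ra B,B\ra A}$ the consistent subsets are exactly the matchings of a bipartite graph, reducing from the $\sharpp$-hard problem of counting matchings (Valiant), and then transferring to every other non-lhs-chain FD set by the fact-wise reductions of the earlier work --- a step your plan omits by instead attempting the construction directly for arbitrary $\depset$, where the ``inert padding'' and ``controlled conflicts'' you need may not even be expressible. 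Second, the invertibility of your linear system is asserted, not proved, and it is the technical crux: the paper's Lemma~\ref{lem:drastic-hard} varies the number $r$ of facts conflicting with $f$ (not inert facts), obtains coefficients $r\cdot(k+1)!\cdot(m-k+r-1)!$, and after normalization reduces non-singularity to the known fact that the matrix with entries $(i+j)!$ has determinant $\prod_i (i!)^2\neq 0$. Your ``Vandermonde-type matrix built from binomial weights'' is not that matrix, and without exhibiting it and proving full rank the interpolation argument does not go through. Restricting first to the canonical pair $\set{A\ra B,B\ra A}$, reducing from counting matchings, and only then generalizing via fact-wise reductions would close both gaps.
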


Interestingly, this is the exact same dichotomy that we obtained in prior work~\cite{DBLP:conf/pods/LivshitsK17} for the problem of counting subset repairs. We also showed that this tractability criterion is decidable in polynomial time by computing a minimal cover: if $\depset$ is equivalent to an FD set with an lhs chain, then every minimal cover of $\depset$ has an lhs chain.
In the remainder of this section, we prove
Theorem~\ref{thm:drastic}.

\subsubsection{Hardness Side.}
The proof of the hardness side of Theorem~\ref{thm:drastic} has two steps. We first show hardness
for the matching constraint $\set{A\rightarrow B,B\rightarrow A}$ over the schema $(A,B)$, and this proof is 
similar to the proof of 
Livshits et al.~\cite{DBLP:conf/icdt/LivshitsBKS20} for the problem of computing the Shapley contribution of facts to the result of the query $q()\dl R(x),S(x,y),T(y)$.
Then, from this case to the remaining cases we apply the \e{fact-wise reductions} that have been devised in prior work~\cite{DBLP:conf/pods/LivshitsK17}.
We start by proving hardness for  $\set{A\rightarrow B, B\rightarrow A}$.

\begin{figure}
  \centering
  \input{bipartite.pspdftex}
  \caption{The databases constructed in the reduction of the proof of
  Lemma~\ref{lem:drastic-hard}.}
  \label{fig:bipartite}
  \end{figure}

\def\drastichard{
Computing $\shapley(D,f,\depset,\Id)$ for the FD set $\depset=\set{A\rightarrow B, B\rightarrow A}$ over the relational schema $(A,B)$ is $\fpsharpp$-complete.
}

\begin{lem}\label{lem:drastic-hard}
\drastichard
\end{lem}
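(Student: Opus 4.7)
The plan is to reduce from the $\sharpp$-hard problem of counting matchings in a bipartite graph, adapting the template that Livshits et al.~\cite{DBLP:conf/icdt/LivshitsBKS20} used for the query $q()\dl R(x),S(x,y),T(y)$.

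First I would fix the graph picture: for $\depset=\set{A\ra B,\,B\ra A}$, a subset $E\subseteq D$ is consistent iff the facts of $E$, viewed as edges joining $A$-values to $B$-values, form a matching in the induced bipartite multigraph. Hence $\Id(E\cup\set{f},\depset)-\Id(E,\depset)\in\set{0,1}$, and equals $1$ precisely when $E$ is a matching of $D\setminus\set{f}$ that contains some fact sharing an endpoint with $f$. Grouping the terms in~(\ref{eq:shapley}) by $|E|=k$ then yields
\[\shapley(D,f,\depset,\Id) \;=\; \sum_{k=0}^{|D|-1} \frac{k!\,(|D|-k-1)!}{|D|!}\, N_k,\]
where $N_k$ counts size-$k$ matchings of $D\setminus\set{f}$ touching an endpoint of $f$. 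Each $N_k$ is a difference of two $\sharpp$-type matching counts, so the Shapley value is in $\fpsharpp$, giving the upper bound.

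For the hardness direction I would reduce from counting all matchings of a bipartite graph $G=(U,V,E(G))$. Writing $m_k$ for the number of size-$k$ matchings of $G$ and $n=|E(G)|$, the goal is to recover $\sum_{k=0}^n m_k$. As suggested by Figure~\ref{fig:bipartite}, the construction would build, for each parameter $\ell\in\set{0,1,\dots,n+1}$, a database $D_\ell$ consisting of a common ``probe'' fact $g$ together with $\ell$ pairwise vertex-disjoint copies of an edge-encoding gadget for $G$, where each copy is augmented with a small attachment to $g$ (a copy-specific ``link'' fact that shares a value with $g$ and so conflicts with it). Because the copies use disjoint fresh $A$- and $B$-values, matchings split independently across copies, and the attachments guarantee that the ``touches $g$'' side condition defining $N_k^{(\ell)}$ becomes a clean count over a single distinguished copy combined with an unconstrained matching in the remaining $\ell-1$ copies.

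Querying the Shapley oracle on $g$ in each $D_\ell$ then expresses $\shapley(D_\ell,g,\depset,\Id)$ as an explicit linear combination of the $m_k$'s whose coefficients are polynomials in $\ell$ of degree at most $n$; the $\ell$-dependence arises from the number of ways to distribute a matching across the ``free'' copies, contributing factors of the form $\binom{\ell-1}{\cdot}$. Collecting the $n+2$ equations for $\ell=0,1,\dots,n+1$ produces a Vandermonde-like linear system which is invertible, and solving it over $\rationals$ in polynomial time recovers every $m_k$ individually, and hence $\sum_k m_k$. This yields a Cook reduction from counting bipartite matchings to the Shapley computation, establishing $\fpsharpp$-hardness.

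The main obstacle is designing the attachment and gadget so that $N_k^{(\ell)}$ factors cleanly across copies. The key leverage is that at most two facts (one per endpoint of $g$) can simultaneously conflict with $g$ in any matching, so a case split on \emph{which} copies contribute a conflicting fact yields precisely a convolution of copy-local matching counts with a binomial-in-$\ell$ number of choices for the ``free'' copies. Once this convolution structure is secured, invertibility of the coefficient matrix follows from the standard fact that a nonzero polynomial of degree $n$ in $\ell$ is determined by its values at $n+1$ points.
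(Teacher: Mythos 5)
Your overall strategy---reduce from counting bipartite matchings, build a one-parameter family of instances, read the resulting Shapley values as a linear system in the matching counts, and invert it---is exactly the strategy of the paper's proof, and your $\fpsharpp$ membership remark is fine. The problem is that the concrete construction you propose does not produce a \emph{linear} system. If $D_\ell$ contains $\ell$ vertex-disjoint copies of the gadget encoding $G$, then a qualifying subset $E$ (a matching of $D_\ell\setminus\set{g}$ containing a fact that conflicts with the probe $g$) decomposes as a choice of link fact(s) plus an \emph{independent} choice of matching inside each of the $\ell$ copies. The number of such subsets of size $k$ is therefore governed by convolutions of the form $\sum_{j_1+\cdots+j_\ell=k-1}\prod_i m_{j_i}$, i.e.\ a degree-$\ell$ polynomial in the unknowns $m_0,\dots,m_n$, not a linear combination whose coefficients are polynomials in $\ell$. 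Your claimed factor $\binom{\ell-1}{\cdot}$ would only appear if each free copy could contribute at most one edge in a single way, which is false once each copy encodes all of $G$. Consequently the system of equations is polynomial in the $m_k$, the Vandermonde-style invertibility argument does not apply, and there is no evident way to recover the individual $m_k$.

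The repair is to keep a \emph{single} copy of $G$ and let only the attachment to the probe vary. In the paper's construction, the $r$-th instance adds one fresh left vertex $v_1$ joined to $r+1$ fresh right vertices $u_1,\dots,u_r,v_2$; the probe is $f=(v_1,v_2)$ and the $r$ pendant facts $(v_1,u_i)$ pairwise conflict and conflict with $f$. A qualifying prefix $\sigma_f$ is then exactly a size-$k$ matching of $G$ together with exactly one of the $r$ pendant facts, which gives $\shapley(D_r,f,\depset,\Id)$ as $\sum_{k}|\match(G,k)|\cdot r\cdot(k+1)!\,(n-k+r-1)!$ up to the normalization $1/|D_r|!$---genuinely linear in the unknowns, with $r$-dependent coefficients. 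Even granting linearity, your invertibility step is too thin: the coefficient matrix is not Vandermonde, and the paper has to normalize it to the matrix with entries $(i+j)!$ and invoke the explicit determinant $\prod_{i}(i!)^2\neq 0$. So as written your proof has a genuine gap both in the gadget (loss of linearity from the disjoint copies) and in the justification that the resulting system can be solved.
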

\begin{proof}
  We construct a reduction from the problem of computing the number $|\match(g)|$ of matchings in a bipartite graph $g$~\cite{DBLP:journals/siamcomp/Valiant79}. Note that we consider partial matchings; that is, subsets of edges that consist of mutually-exclusive edges.
  Given an input bipartite graph $g$, we construct $m+1$ input instances $(D_1,f_1),\dots,(D_{m+1},f_{m+1})$ to our problem, where $m$ is the number of edges in $g$, in the following way. For every $r\in\set{1,\dots,m+1}$, we add one vertex $v_1$ to the left-hand side of $g$ and $r+1$ vertices $u_1,\dots,u_r,v_2$ to the right-hand side of $g$. Then, we connect the vertex $v_1$ to every new vertex on the right-hand side of $g$. We construct the instance $D_r$ from the resulting graph by adding a fact $(u,v)$ for every edge $(u,v)$ in the graph. We will compute the Shapley value of the fact $f$ corresponding to the edge $(v_1,v_2)$. The reduction is illustrated in Figure~\ref{fig:bipartite}.

In every instance $D_r$, the fact $f$ will increase the value of the measure by one in a permutation $\sigma$ if and only if $\sigma_f$ satisfies two properties: \e{(1)} the facts of $\sigma_f$ jointly satisfy the FDs in $\Delta$, and \e{(2)} $\sigma_f$ contains at least one fact that is in conflict with $f$. Hence, for $f$ to affect the value of the measure in a permutation, we have to select a set of facts corresponding to a matching from the original graph $g$, as well as exactly one of the facts corresponding to an edge $(v_1,u_i)$ (since the facts $(v_1,u_i)$ and $(v_1,u_j)$ for $i\neq j$ jointly violate the FD $A\rightarrow B$). We have the following.

$$\shapley(D_r,f,\depset,\Id)=\sum_{k=0}^m |\match(g,k)|\cdot r\cdot (k+1)!\cdot(m-k+r-1)!$$
where $\match(g,k)$ is the set of matchings of $g$ containing precisely $k$ edges.

Hence, we obtain $m+1$ equations from the $m+1$ constructed instances, and get the following system of equations.

\begin{gather*}
  \left(\!\! {\begin{array}{cccc}
        1\cdot1!m! & 1\cdot2!(m-1)! & \dots & 1\cdot(m+1)!0! \\
        2\cdot1!(m+1)! & 2\cdot2!m! & \dots & 2\cdot(m+1)!1! \\
        \vdots & \vdots & \vdots & \vdots \\
        (m+1)\cdot1!2m! & (m+1)\cdot2!(m-1)! & \dots & (m+1)\cdot(m+1)!m!
      \end{array} }\!\! \right)
  \left( {\begin{array}{c}
  |\match(g,0)| \\
   |\match(g,1)| \\
   \vdots \\
   |\match(g,m)|
 \end{array} } \right)\\
= \left(\!\! {\begin{array}{c}
      \shapley(D_1,f,\depset,\Id) \\
      \shapley(D_2,f,\depset,\Id) \\
      \vdots \\
      \shapley(D_{m+1},f,\depset,\Id)
  \end{array} }\!\! \right)
\end{gather*}

Let us divide each column in the above matrix by the constant $(j+1)!$ (where $j$ is the column number, starting from $0$) and each row by $i+1$ (where $i$ is the row number, starting from $0$), and reverse the order of the columns. We then get the following matrix.

\begin{gather*}
A=
  \left(\!\! {\begin{array}{cccc}
        0! & 1! & \dots & m! \\
        1! & 2! & \dots & (m+1)! \\
        \vdots & \vdots & \vdots & \vdots \\
        m! & (m+1)! & \dots & 2m!
      \end{array} }\!\! \right)
\end{gather*}

This matrix has coefficients $a_{i,j}=(i+j)!$, and the determinant of $A$ is $det(A)=\prod_{i=0}^{m} i!i!\neq 0$; hence, the matrix is non-singular~\cite{determinants2002}. Since dividing a column by a constant divides the determinant by a constant, and reversing the order of the columns can only change the sign of the determinant, the determinant of the original matrix is not zero as well, and the matrix is non-singular. Therefore, we can solve the system of equations, and compute the value $\sum_{k=0}^m \match(g,k)$, which is precisely the number of matchings in $g$.
\end{proof}

\paragraph{Generalization via Fact-Wise Reductions}
Using the concept of a fact-wise reduction~\cite{DBLP:conf/pods/Kimelfeld12}, we can prove hardness for any FD set that is not equivalent to an FD set with an lhs chain. We first give the formal definition of a fact-wise reduction. Let
$(R,\depset)$ and $(R',\depset')$ be two pairs of a relational schema and an FD set. A \e{mapping} from $R$ to $R'$ is a
function $\mu$ that maps facts over $R$ to facts over
$R'$. (We say that $f$ is a fact \e{over} $R$ if $f$
is a fact of some database $D$ over $R$.) We extend a mapping
$\mu$ to map databases $D$ over $R$ to databases over
$R'$ by defining $\mu(D)$ to be $\set{\mu(f)\mid f\in D}$.  A
\e{fact-wise reduction} from $(R,\depset)$ to
$(R',\depset')$ is a mapping $\Pi$ from $R$ to
$R'$ with the following properties.
\begin{enumerate}
\item $\Pi$ is injective; that is, for all facts $f$ and $g$ over
  $R$, if $\Pi(f) = \Pi(g)$ then $f = g$.
\item $\Pi$ preserves consistency and inconsistency; that is, for all facts $f$ and $g$ over
  $R$, $\set{f,g}$ satisfies $\depset$ if and only if $\set{\Pi(f),\Pi(g)}$ satisfies $\depset'$.
\item $\Pi$ is computable in polynomial time.
\end{enumerate}

We have previously shown a fact-wise reduction from $((A,B),\set{A\rightarrow B,B\rightarrow A})$ to any $(R,\depset)$, where $\depset$ is not equivalent to an FD set with an lhs chain~\cite{DBLP:conf/pods/LivshitsK17}. Clearly, fact-wise reductions preserve the Shapley value of facts, that is, $\shapley(D,f,\I,\depset)=\shapley(\Pi(D),\Pi(f),\I,\depset')$. It thus follows that there is a polynomial-time reduction from the problem of computing the Shapley value over $\set{A\rightarrow B,B\rightarrow A}$ to the problem of computing the Shapley value over any $\depset$ that has no lhs chain (even up to equivalence), and that concludes our proof of hardness.

\subsubsection{Tractability Side}
 For the tractability side of Theorem~\ref{thm:drastic}, we present a polynomial-time algorithm to compute the Shapley value. 
As stated in Observation~\ref{obs:reductionexp}, the computation of $\shapley(D,f,\depset,\Id)$ reduces in polynomial time to the computation of the expected value of the measure over all subsets of the database of a given size $m$. In this case it holds that $\Exp_{D'\sim U_m(D\setminus\set{f})}\big(\Id(D'\cup\set{f},\depset)\big)$ and $\Exp_{D'\sim U_m(D\setminus\set{f})}\big(\Id(D',\depset)\big)$ are the probabilities
that a uniformly chosen $D'\subseteq D\setminus\set{f}$ of size $m$ is such that $(D'\cup\set{f})\not\models\depset$ and $D'\not\models\depset$, respectively. Due to the structure of FD sets with an lhs chain, we can compute these probabilities efficiently, as we explain next.

Our main observation is that for an FD $X\rightarrow Y$, if we group the facts of $D$ by $X$ (i.e., split $D$ into maximal subsets of facts that agree on the values of all attributes in $X$), then this FD and the FDs that appear later in the chain may be violated only among facts from the same group. Moreover, when we group by $XY$ (i.e., further split each group of $X$ into maximal subsets of facts that agree on the values of all attributes in $Y$), facts from different groups always violate this FD, and hence, violate $\depset$.  We refer to the former groups as \e{blocks} and the latter groups as \e{subblocks}.  This special structure allows us to split the problem into smaller problems, solve each one of them separately, and then combine the solutions via dynamic programming.

We define a data structure $T$ where each vertex $v$ is associated with a subset of $D$ that we denote by $D[v]$. The root $\root$ is associated with $D$ itself, that is, $D[r]=D$. 
At the first level, each child $c$ of $r$ is associated with a block of $D[r]$ w.r.t.~$X_1\ra Y_1$, and each child $c'$ of $c$ is associated with a subblock of $D[c]$ w.r.t.~$X_1\ra Y_1$. At the second level, each child $c''$ of $c'$ is associated with a block of $D[c']$ w.r.t.~$X_2\ra Y_2$, and each child $c'''$ of $c''$ is associated with a subblock of $D[c'']$ w.r.t.~$X_2\ra Y_2$.  This continues all the way to the $n$th FD, where at the $i$th level, each child $u$ of an $(i-1)$th level subblock
vertex $v$ is associated with a block of $D[v]$ w.r.t.~$X_i\ra Y_i$ and each child
$u'$ of $u$  is associated with a subblock of $D[u]$ w.r.t.~$X_i\ra Y_i$.

We assume that the data structure $T$ is constructed in a preprocessing phase. Clearly, the number of vertices in $T$ is polynomial in $|D|$ and $n$ (recall that $n$ is the number of FDs in $\depset$) as the height of the tree is $2n$, and each level contains at most $|D|$ vertices;
hence, this preprocessing phase requires polynomial time (even under combined complexity). Then, we compute both $\Exp_{D'\sim U_m(D\setminus\set{f})}\big(\Id(D',\depset)\big)$ and $\Exp_{D'\sim U_m(D\setminus\set{f})}\big(\Id(D'\cup\set{f},\depset)\big)$ by going over the vertices of $T$ from bottom to top, as we will explain later. Note that for the computation of these values, we construct $T$ from the database $D\setminus\set{f}$. Figure~\ref{fig:T} depicts the data structure $T$ used for the computation of $\shapley(D,f_9,\depset,\Id)$ for the database $D$ and fact $f_9$ of our running example. Next, we explain the meaning of the values stored in each vertex.

  {
\begin{figure}[t]
\begin{malgorithm}{$\algname{DrasticShapley}(D,\depset,m,T)$}{alg:drastic}
\ForAll{vertices $v$ of $T$ in a bottom-up order}
\State $\algname{UpdateProb}(v,m)$
\EndFor
\State \textbf{return} $\root.\prob[m]$
\end{malgorithm}
\begin{msubroutine}{$\algname{UpdateProb}(v,m)$}{alg:updateprob}
\ForAll{children $c$ of $v$ in $T$}
\For{$j\in\set{m,\dots,1}$}
\State $v.\prob[j]\assn
\hspace{-2em}
\underset{\substack{j_1+j_2=j\\0\le j_1\le |D[c]|\\ 0\le j_2\le|D[\prev(c)]|}}{\sum}
\hspace{-2em}
\big(c.\prob[j_1]+(1-c.\prob[j_1])\cdot v.\prob[j_2]\big)\cdot \frac{{|D[c]|\choose j_1}\cdot{|D[\prev(c)]|\choose j_2}}{{{|D[\prev(c)]|+|D[c]|}\choose j}}$
\If{$v$ is a block node}
\State $v.\prob[j]\pluseq
\hspace{-2em}
\underset{\substack{j_1+j_2=j\\0<j_1\le |D[c]|\\0<j_2\le|D[\prev(c)]|}}{\sum}
\hspace{-2em}
\big((1-c.\prob[j_1])\cdot(1-v.\prob[j_2])\big)\cdot \frac{{|D[c]|\choose j_1}\cdot{|D[\prev(c)]|\choose j_2}}{{{|D[\prev(c)]|+|D[c]|}\choose j}}$
\EndIf
\EndFor
\EndFor
\end{msubroutine}
\caption{\label{alg:DrasticShapley} An algorithm for computing $\Exp_{D'\sim U_m(D\setminus\set{f})}\big(\Id(D',\depset)\big)$ for $\depset$ with an lhs chain.}
\vspace{-1.2em}
\end{figure}
}

Each vertex $v$ in $T$ stores an array $v.\prob$ with $|D[v]|+1$ entries (that is initialized with zeros) such that $v.\prob[j]=\Exp_{D'\sim U_j(D[v])}\big(\Id(D',\depset)\big)$ for all $j\in\set{0,\dots,|D[v]|}$ at the end of the execution. For this measure, we have that:
$$v.\prob[j]\eqdef\Pr{\mbox{a random subset of size } j \mbox{ of }D[v]\mbox{ violates }\depset}$$
Our final goal is to compute $\root.\prob[m]$, where $\root$ is the root of $T$. For that purpose, in the algorithm \algname{DrasticShapley}, depicted in Figure~\ref{alg:DrasticShapley}, we go over the vertices of $T$ in a bottom-up order and compute the values of $v.\prob$ for every vertex $v$ in the \algname{UpdateProb} subroutine. 
Observe that we only need one execution of \algname{DrasticShapley} with $m=|D|-1$ to compute the required values for all $m\in\set{1,\dots,|D|-1}$, as we calculate all these values in our intermediate computations. 

To compute $v.\prob$ for a subblock vertex $v$, we iterate over its children in $T$ (which are the $(i+1)$th level blocks) according to an arbitrary order defined in the construction of $T$. For a child $c$ of $v$, we denote by $\prev(c)$ the set of children of $v$ that occur before $c$ in that order, and by $D[\prev(c)]$ the database $\bigcup_{c'\in \prev(c)}D[c']$. When considering $c$ in the for loop of line~1, we compute the expected value of the measure on a subset of $D[\prev(c)]\cup D[c]$. Hence, when we consider the last child of $v$ in the for loop of line~1, we compute the expected value of the measure on a subset of the entire database $D[v]$.

For a child $c$ of $v$, there are $N_1={{{|D[\prev(c)]|+|D[c]|}\choose j}}$ subsets of size $j$ of all the children of $v$ considered so far (including $c$ itself). Each such subset consists of $j_1$ facts of the current $c$ (there are $N_2={|D[c]|\choose j_1}$ possibilities) and $j_2$ facts of the previously considered children (there are $N_3={|D[\prev(c)]|\choose j_2}$ possibilities), for some $j_1,j_2$ such that $j_1+j_2=j$, with probability
$N_2N_3/N_1$.
Moreover, such a subset violates $\depset$ if either the facts of the current $c$ violate $\depset$ (with probability $c.\prob[j_1]$ that was computed in a previous iteration) or these facts satisfy $\depset$, but the facts of the previous children violate $\depset$ (with probability $(1-c.\prob[j_1])\cdot v.\prob[j_2]$). Observe that since we go over the values $j$ in reverse order in the for loop of line~2 (i.e., from $m$ to $1$), at each iteration of this loop, we have that $v.\prob[j_2]$ (for all considered $j_2\le j$) still holds the expected value of $\Id$ over subsets of size $j_2$ of the previous children of $v$, which is indeed the value that we need for our computation.

This computation of $v.\prob$ also applies to the block vertices. However, the addition of line~5 only applies to blocks. Since the children of a block belong to different subblocks, and two facts from the same $i$th level block but different $i$th level subbblocks always jointly violate $X_i\rightarrow Y_i$, a subset of size $j$ of a block also violates the constraints if we select a non-empty subset of the current child $c$ and a non-empty subset of the previous children, even if each of these subsets by itself is consistent w.r.t.~$\depset$. Hence, we add this probability in line~5. Note that all the three cases that we consider are disjoint, so we sum the probabilities. Observe also that the leaves of $T$ have no children and we do not update their probabilities, and, indeed the probability to select a subset from a leaf $v$ that violates the constraints is zero, as all the facts of $D[v]$ agree on the values of all the attributes that occur in $\depset$.

\begin{figure}
  \centering
  \scalebox{1.0}{\input{T.pspdftex}}
  \caption{The data structure $T$ of our running example.}\label{fig:T}
  \end{figure}

\begin{exa}
We now illustrate the computation of $\Exp_{D'\sim U_m(D\setminus\set{f_9})}\big(\Id(D',\depset)\big)$ on the database $D$ and the fact $f_9$ of our running example for $m=3$. Inside each node of the data structure $T$ of Figure~\ref{fig:T}, we show the values $[v.\prob[0],v.\prob[1],v.\prob[2],v.\prob[3]]$ used for this computation. Below them, we present the corresponding values used in the computation of $\Exp_{D'\sim U_m(D\setminus\set{f_9})}\big(\Id(D'\cup\{f\},\depset)\big)$. For the leaves $v$ and each vertex $v\in\{v_5,\dots,v_9,v_{11}\}$, we have that $v.\prob[j]=0$ for every $j\in\set{0,1,2,3}$, as $D[v]$ has a single fact. As for $v_{10}$, when we consider its first child $v_{17}$ in the for loop of line~1 of \algname{UpdateProb}, all the values in $v_{10}.\prob$ remain zero (since $v_{17}.\prob[j_1]=v_{10}.\prob[j_2]=0$ for any $j_1,j_2$, and $|D[\prev(c)]|=0$). However, when we consider its second child $v_{18}$, while the computation of line~3 again has no impact on $v_{10}.\prob$, after the computation of line~5 we have that $v_{10}.\prob[2]=1$. And, indeed, there is a single subset of size two of $D[v_{10}]$, which is $\set{f_6,f_7}$, and it violates the FD $\att{train}\,\,\att{time}\,\,\att{duration}\rightarrow \att{arrives}$. This also affects the values of $v_4.\prob$. In particular, when we consider the first child $v_{10}$ of $v_4$, we have that $v_4.\prob[j]=1$ for $j=2$ and $v_4.\prob[j]=0$ for any other $j$. Then, when we consider the second child $v_{11}$ of $v_4$, it holds that $v_4.\prob[2]=\frac{1}{3}$ (as the only subset of size two of $D[v_4]$ that violates the FDs is $\set{f_6,f_7}$, and there are three subsets in total) and $v_4.\prob[3]=1$ (as every subset of size three contains both $f_6$ and $f_7$).
Finally, we have that $\Exp_{D'\sim U_3(D\setminus\set{f_9})}\big(\Id(D',\depset)\big)=\frac{55}{56}$ and $\Exp_{D'\sim U_3(D\setminus\set{f_9})}\big(\Id(D'\cup\set{f_9},\depset)\big)=1$.
\eqed\end{exa}

  {
\begin{figure}[t!]
\begin{malgorithm}{$\algname{DrasticShapleyF}(D,\depset,m,T,f)$}{alg:drasticf}
\State $\algname{DrasticShapley}(D,\depset,m,T)$
\ForAll{vertices $v$ of $T$ in a bottom-up order}
\State $\algname{UpdateProbF}(v,m,f)$
\EndFor
\State \textbf{return} $\root.\prob[m]$
\end{malgorithm}
\begin{msubroutine}{$\algname{UpdateProbF}(v,m,f)$}{alg:updateprobf}
\If{$f$ conflicts with $v$}
\State $v.\probf[j]=1$ for all $1\le j\le |D[v]|$
\State \textbf{return}
\EndIf
\If{$f$ does not match $v$}
\State $v.\probf[j]=v.\prob[j]$ for all $1\le j\le m$
\State \textbf{return}
\EndIf
\ForAll{children $c$ of $v$ in $T$}
\For{$j\in\set{m,\dots,1}$}
\State $v.\probf[j]
\assn
\hspace{-2em}
\underset{\substack{j_1+j_2=j\\0\le j_1\le |D[c]|\\ 0\le j_2\le|D[\prev(c)]|}}{\sum}
\hspace{-2em}
\big(c.\probf[j_1]+(1-c.\probf[j_1])\cdot v.\probf[j_2]\big)\cdot \frac{{|D[c]|\choose j_1}\cdot{|D[\prev(c)]|\choose j_2}}{{{|D[\prev(c)]|+|D[c]|}\choose j}}$
\If{$v$ is a block node}
\State $v.\probf[j]\pluseq
\hspace{-2em}
\underset{\substack{j_1+j_2=j\\0<j_1\le |D[c]|\\0<j_2\le|D[\prev(c)]|}}{\sum}
\hspace{-2em}
\big((1-c.\probf[j_1])\cdot(1-v.\probf[j_2])\big)\cdot \frac{{|D[c]|\choose j_1}\cdot{|D[\prev(c)]|\choose j_2}}{{{|D[\prev(c)]|+|D[c]|}\choose j}}$
\EndIf
\EndFor
\EndFor
\end{msubroutine}
\caption{\label{alg:DrasticShapleyF} An algorithm for computing $\Exp_{D'\sim U_m(D\setminus\set{f})}\big(\Id(D'\cup\set{f},\depset)\big)$ for $\depset$ with an lhs chain.}
\end{figure}
}

To compute $\Exp_{D'\sim U_m(D\setminus\set{f})}\big(\Id(D'\cup\set{f},\depset)\big)$, we use the algorithm $\algname{DrasticShapleyF}$ of Figure~\ref{alg:DrasticShapleyF}. There, we distinguish between several types of vertices w.r.t.~$f$, and show how this expectation can be computed for each one of these types.
Before elaborating on the algorithm, we give some non-standard definitions. Recall that all the facts in $D[v]$, for an $i$th level block vertex $v$, agree on the values of all the attributes in $X_1Y_1\dots X_i$. Moreover, all the facts in $D[u]$, for an $i$th level subblock vertex $u$, agree on the values of all the attributes in $X_1Y_1\dots X_iY_i$. We say that $f$ conflicts with an $i$th level block vertex $v$ if for some $X_j\rightarrow Y_j$ such that $j\in\set{1,\dots,i-1}$ it holds that $f$ agrees with the facts of $D[v]$ on all the values of the attributes in $X_j$ but disagrees with them on the attributes of $Y_j$. Note that in this case, \e{every} fact of $D[v]$ conflicts with $f$. Similarly, we say that $f$ conflicts with an $i$th level subblock vertex $u$ if it violates an FD $X_j\rightarrow Y_j$ for some $j\in\set{1,\dots,i}$ with the facts of $D[u]$.
We also say that $f$ matches an $i$th level block or subblock vertex $v$ if it agrees with the facts of $D[v]$ on the values of all the attributes in $X_1Y_1\dots X_i$.

In $\algname{DrasticShapleyF}$, we define
$v.\probf[j]$ to be the probability that a random subset of size $j$ of $D[v]$ violates $\depset$ when $f$ is added. 
We first compute $v.\prob$ for all vertices $v$ of $T$ using $\algname{DrasticShapley}$, and then we use these values to compute $v.\probf$ for some vertices $v$.
First, we observe that for vertices $v$ that conflict with $f$ we have that $v.\probf[j]=1$ for every $1\le j\le D[v]$, as every non-empty subset of $D[v]$ violates the FDs with $f$. Note that this computation also applies to the leaves of $T$ that are in conflict with $f$. For vertices $v$ that do not conflict with $f$ but also do not match with $f$, we have that $v.\probf[j]=v.\prob[j]$ for every $1\le j\le m$, as no fact of $D[v]$ agrees with $f$ on the left-hand side of an FD in $\depset$ (for $j=0$ we clearly have that $v.\probf[0]=v.\prob[0]=0$).

For the rest of the vertices, the arguments given in Section~\ref{sec:drastic} for the computation of $v.\prob$ still hold in this case; hence, the computation of $v.\probf$ is similar. In particular, for a child $c$ of $v$, a subset $E$ of size $j$ of $D[\prev(c)]\cup D[c]$ is such that $E\cup \set{f}$ violates $\depset$ if either $(E\cap D[c])\cup\set{f}$ violates $\depset$ or $(E\cap D[c])\cup\set{f}$ satisfies $\depset$ but $(E\cap D[\prev(c)])\cup\set{f}$ violates $\depset$. If $v$ is a block vertex, then $E\cup\set{f}$ also violates $\depset$ if we choose a non-empty subset from both $(E\cap D[c])$ and $(E\cap D[\prev(c)])$. Therefore, the main difference between the computation of $v.\prob$ in \algname{DrasticShapley} and the computation of $v.\probf$ in \algname{DrasticShapleyF} is the use of the value $c.\probf$ instead of the value $c.\prob$ in lines~9 and~11.

\def\drastichard{
Computing $\shapley(D,f,\depset,\Id)$ is $\fpsharpp$-complete for the FD set $\depset=\set{A\rightarrow B, B\rightarrow A}$ over the relational schema $(A,B)$.
}
\eat{
\paragraph*{Hardness Side}

Livshits et al.~\cite{DBLP:conf/pods/LivshitsK17} proved that there is a fact-wise reduction from the relational schema $(A,B)$ and FD set $\set{A\rightarrow B,B\rightarrow A}$ to any relational schema $R$ and FD set $\depset$ that does not satisfiy the tractability condition of Theorem~\ref{thm:drastic}.
A fact-wise reduction~\cite{10.1145/2213556.2213584} from $(R,\depset)$ to $(R',\depset')$ is a mapping from facts over $R$ to facts over $R'$ that is injective, computable in polynomial time, and preserves consistency and inconsistency w.r.t.~$\depset$ and $\depset'$. It is rather straightforward that fact-wise reductions preserve the Shapley value of facts. Hence, it is only left to prove the following.

\begin{proposition}\label{prop:drastic-hard}
\drastichard
\end{proposition}

The proof of Proposition~\ref{prop:drastic-hard} is very similar to the proof of hardness given by Livshits et al.~\cite{DBLP:conf/icdt/LivshitsBKS20} for the problem of computing the Shapley contribution of facts to the result of the query $q()\dl R(x),S(x,y),T(y)$. In the proof, we construct a reduction from the problem of computing the number of matchings in a bipartite graph. Given an input graph $g$, we construct $m+1$ input instances to our problem (where $m$ is the number of edges in $g$). In each instance, we add one vertex to the left-hand side of $g$ and $r+1$ vertices to the right-hand side, and compute the Shapley value of one of these vertices for the database that contains a fact for every edge in $g$. Then, we obtain a system of independent equations from which we extract the number of matchings in $g$.
}

\subsection{Approximation}\label{sec:drastic-approx}

We now consider an approximate computation of the Shapley value.
Using the Chernoff-Hoeffding bound, we can easily obtain an additive FPRAS of the value $\shapley(D,f,\depset,\Id)$, by sampling $O(\log(1/\delta)/\epsilon^2)$ permutations and computing the average contribution of $f$ in a permutation. As observed by Livshits et al.~\cite{DBLP:conf/icdt/LivshitsBKS20}, a multiplicative FPRAS can be obtained using the same
  algorithm
  (possibly with a different number of samples)
  if the ``gap'' property holds: nonzero Shapley values are guaranteed to be large enough compared to the utility value (which is at most $1$ in the case of the drastic measure). 
This is indeed the case here, as we now prove the following gap property of $\shapley(D,f,\depset,\Id)$.

\begin{prop}\label{prop:drastic-gap}
There is a polynomial $p$ such that for
  all databases $D$ and facts $f$ of $D$ the value $\shapley(D,f,\depset,\Id)$ is either zero or at least $1/(p(|D|))$.
\end{prop}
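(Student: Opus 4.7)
The plan is to exploit the fact that, for the drastic measure, every summand in the Shapley formula is nonnegative, so a lower bound on any single term yields a lower bound on the whole sum. First I would observe that for any $E\subseteq D\setminus\{f\}$, the difference $\Id(E\cup\{f\},\depset)-\Id(E,\depset)$ lies in $\{0,1\}$: if $E\not\models\depset$ then $E\cup\{f\}\not\models\depset$ as well (a violating pair in $E$ remains a violating pair in $E\cup\{f\}$), so the difference is $0$; if $E\models\depset$ then the difference is either $0$ or $1$, depending on whether $E\cup\{f\}$ still satisfies $\depset$. Hence every summand in the definition (\ref{eq:shapley}) is nonnegative.

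Next I would characterize when $\shapley(D,f,\depset,\Id)>0$. Since all summands are nonnegative, the Shapley value is nonzero iff there exists some consistent $E\subseteq D\setminus\{f\}$ for which $E\cup\{f\}$ is inconsistent, which (for FDs, where a singleton is always consistent) happens iff there is at least one fact $g\in D\setminus\{f\}$ that jointly violates $\depset$ with $f$. Then I would single out the term corresponding to $E=\{g\}$: the set $\{g\}$ satisfies $\depset$ while $\{f,g\}$ does not, so the difference contributes $1$, and its coefficient is
\[
\frac{1!\cdot(|D|-2)!}{|D|!}=\frac{1}{|D|(|D|-1)}\,.
\]

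Combining these two observations, whenever $\shapley(D,f,\depset,\Id)$ is nonzero it is bounded below by this single summand, giving
\[
\shapley(D,f,\depset,\Id)\;\geq\;\frac{1}{|D|(|D|-1)}\,,
\]
so one may take $p(n)=n(n-1)$ (or simply $n^2$). There is no real obstacle here: the only subtle step is the monotonicity argument ensuring all summands are nonnegative, which is what lets us lower-bound the full sum by any single witness term; the rest is an elementary calculation of the Shapley coefficient for a singleton coalition.
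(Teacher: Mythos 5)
Your proposal is correct and follows essentially the same route as the paper: identify a single conflicting fact $g$, note that $\set{g}\models\depset$ while $\set{f,g}\not\models\depset$, and lower-bound the Shapley value by the one term for $E=\set{g}$, whose coefficient is $1/(|D|(|D|-1))$. The only difference is cosmetic---the paper phrases the witness term as the probability that a random permutation places exactly $g$ before $f$, while you use the subset-sum coefficient directly, and you make explicit the nonnegativity of all summands, which the paper leaves implicit.
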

\begin{proof}
If no fact of $D$ is in conflict with $f$, then
  $\shapley(D,f,\depset,\Id)=0$. Otherwise, let $g$ be a fact that violates an FD of $\depset$ jointly with $f$. Clearly, it holds that $\set{g}\models\depset$, while $\set{g,f}\not\models\depset$. The probability to choose a permutation $\sigma$, such
  that $\sigma_f$ is exactly
  $\set{g}$ is $\frac{(|D|-2)!}{|D|!}=\frac{1}{|D|\cdot(|D|-1)}$ (recall that $\sigma_f$ is the set of facts that appear before $f$ in $\sigma$). Therefore, we have
  that $\shapley(D,f,\depset,\Id)\ge \frac{1}{|D|\cdot(|D|-1)}$,
  and that concludes our proof.
\end{proof}

From Proposition~\ref{prop:drastic-gap} we conclude that we can obtain an upper bound on the multiplicative 
error $\epsilon$ for 
$\shapley(D,f,\depset,\Id)$
by requiring an additive gap of $\epsilon$ divided by a polynomial. 
Hence, we get the following.

\begin{cor}\label{cor:drastic-approx}
  $\shapley(D,f,\depset,\Id)$ has both an additive and a multiplicative FPRAS.
\end{cor}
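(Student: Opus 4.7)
The plan is to obtain the additive FPRAS first via a straightforward Monte Carlo simulation over permutations, and then to upgrade it to a multiplicative FPRAS by exploiting the gap property established in Proposition~\ref{prop:drastic-gap}.

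For the additive part, I would observe that $\Id$ is monotone under addition of facts (adding a fact to an inconsistent database keeps it inconsistent), so for every permutation $\sigma$ the marginal contribution $\Id(\sigma_f \cup \set{f}, \depset) - \Id(\sigma_f, \depset)$ lies in $\set{0,1}$. By definition, $\shapley(D,f,\depset,\Id)$ is the expectation of this $\set{0,1}$-valued random variable when $\sigma$ is drawn uniformly at random from $\Pi_D$. The additive FPRAS draws $N = \Theta(\log(1/\delta)/\epsilon^2)$ independent uniform permutations, evaluates the marginal contribution for each (which amounts to two consistency checks against $\depset$, each in polynomial time), and returns the empirical mean. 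The Hoeffding bound gives that this estimate is within additive error $\epsilon$ of the true value with probability at least $1-\delta$, and the overall runtime is polynomial in $|D|$, $1/\epsilon$, and $\log(1/\delta)$.

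For the multiplicative part, I would first test in polynomial time whether $f$ conflicts with any fact of $D$. If not, then $\shapley(D,f,\depset,\Id)=0$ and the algorithm returns $0$ exactly; this case must be handled separately since a multiplicative approximation of zero is vacuous. Otherwise, by Proposition~\ref{prop:drastic-gap}, the true value $v \eqdef \shapley(D,f,\depset,\Id)$ satisfies $v \ge 1/p(|D|)$ for the fixed polynomial $p$ from the proposition. I would then invoke the additive scheme above with target additive error $\epsilon' \eqdef \epsilon/((1+\epsilon)\,p(|D|))$ and confidence $\delta$, producing an estimate $\hat v$ with $|\hat v - v| \le \epsilon'$ with probability at least $1-\delta$. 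Since $\epsilon' \le \epsilon v/(1+\epsilon)$, it follows that $v/(1+\epsilon) \le \hat v \le (1+\epsilon)\, v$, which is exactly the multiplicative guarantee required of an FPRAS. The number of samples is $\Theta(\log(1/\delta)/(\epsilon')^2) = \mathrm{poly}(|D|, 1/\epsilon, \log(1/\delta))$, so the overall procedure runs within the required budget.

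No step here is really an obstacle: the only mildly delicate point is the zero/nonzero case split, which is needed because multiplicative approximations cannot accommodate a true value of zero. Everything else is a routine combination of the standard permutation-sampling estimator for the Shapley value, the boundedness of the marginal contributions in $[0,1]$ required to apply Hoeffding, and the polynomial gap of Proposition~\ref{prop:drastic-gap} that lets us convert additive error into multiplicative error at the price of only a polynomial blow-up in the sample size.
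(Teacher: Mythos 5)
Your proposal is correct and follows essentially the same route as the paper: an additive FPRAS by sampling permutations and applying Hoeffding to the $\{0,1\}$-valued marginal contributions, upgraded to a multiplicative FPRAS via the gap property of Proposition~\ref{prop:drastic-gap} by shrinking the additive error to $\epsilon$ divided by a polynomial. The explicit zero/nonzero case split you add (checking whether $f$ conflicts with some fact of $D$) is implicit in the paper's argument and is a correct, polynomial-time test.
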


\subsection{Generalization to Multiple Relations}

We now generalize our results to schemas with multiple relation symbols. More formally, we consider (relational) schemas $\scs$ that consists of a finite set $\set{R_1,\dots,R_n}$ of relation symbols, each associated with a sequence of attributes. For a set $\Delta$ of FDs over $\scs$ and a relation symbol $R_j$ of $\scs$, we denote by $\Delta_{R_j}$ the restriction of $\Delta$ to the FDs over $R_j$. Similarly, for a database $D$ over $\scs$, we denote by $D_{R_j}$ the restriction of $D$ to the facts over $R_j$. Finally, we denote $\depset_{R_1}\cup\dots\cup \depset_{R_j}$ by $\depset^j$ and $D_{R_1}\cup\dots\cup D_{R_j}$ by $D^j$.

It is straightforward that the lower bound provided in this section also holds for schemas with multiple relation symbols. That is, given an FD set $\depset$ over a schema $\scs$, if for at least one relation symbol $R$ of $\scs$, the FD set $\depset_R$ is not equivalent to an FD set with an lhs chain, then the problem of computing $\shapley(D,f,\depset,\Id)$ is $\fpsharpp$-complete.
We now generalize our upper bound to schemas with multiple relations; that is, we focus on the case where the FD set $\depset_{R}$ of \e{every} relation symbol $R$ of the schema has an lhs chain (up to equivalence), and show that the Shapley value can be computed in polynomial time.

The formula given in  Observation~\ref{obs:reductionexp} for computing the Shapley value is general and also applies to databases over schemas with multiple relation symbols.
As aforementioned, for the drastic measure, this computation boils down to computing two probabilities---the probability that a uniformly chosen subset of $D\setminus\{f\}$ of size $m$ violates the constraints, and the probability that such a subset $D'$ satisfies $D'\cup\set{f}\not\models\Delta$. Since we consider FDs, there are no violations among facts over different relation symbols; hence, we can compute these probabilities separately for each one of the relation symbols (i.e., for every pair $(D_{R_j},\Delta_{R_j})$ of a database and its corresponding FD set), and then we combine these results using dynamic programming, as we explain next.

Let $R_1,\dots,R_n$ be an arbitrary order of the relation symbols. For each $j\in\set{1,\dots,n}$ we denote by $T_j^m$ the probability that a uniformly chosen subset of size $m$ of $D_{R_j}\setminus\set{f}$ violates $\Delta_{R_j}$. This value can be computed in polynomial time for every relation symbol, using the algorithm of Figure~\ref{alg:DrasticShapley}, as we assume that $\Delta_{R_j}$ has an lhs chain. Next, we denote by $P_j^m$ the probability that a uniformly chosen subset of size $m$ of $D^{j}\setminus\set{f}$ violates the constraints of $\Delta_{R_1}\cup\dots\cup\Delta_{R_j}$. Hence, the value $P_n^m$ is needed for the computation of the Shapley value. We compute this value using dynamic programming.
Clearly, we have that:
$$P_1^m=T_1^m$$
and for every $j>1$ we prove the following.
\begin{lem}
For $j\in\set{2,\dots,n}$ we have that:
\begin{align*}
    P_j^m=\frac{1}{{|D^j\setminus\set{f}|\choose m}}\sum_{\substack{0\le m_1\le |D_{R_j}\setminus\set{f}|\\0\le m_2\le |D^{j-1}\setminus\set{f}|\\m_1+m_2=m}}\bigg[{|D_{R_j}\setminus\set{f}|\choose m_1}&\times {|D^{j-1}\setminus\set{f}|\choose m_2}\\
    &\times
        \left(1-\left(1-T_j^{m_1}\right)\times\left(1-P_{j-1}^{m_2}\right)\right)\bigg]
\end{align*}
\end{lem}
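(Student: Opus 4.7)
The proof plan is a direct combinatorial decomposition, exploiting the fact that FDs in $\Delta^j = \Delta_{R_j} \cup \Delta^{j-1}$ only relate facts that share a relation symbol, so no cross-relation violations exist.

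First, I would observe that every subset $E$ of $D^j \setminus \{f\}$ of size $m$ decomposes uniquely as $E = E_j \cup E'$, where $E_j = E \cap (D_{R_j} \setminus \{f\})$ and $E' = E \cap (D^{j-1} \setminus \{f\})$. Writing $|E_j| = m_1$ and $|E'| = m_2$, we have $m_1 + m_2 = m$, and the number of subsets $E$ of size $m$ of $D^j \setminus \{f\}$ with this particular decomposition sizes is $\binom{|D_{R_j}\setminus\{f\}|}{m_1} \binom{|D^{j-1}\setminus\{f\}|}{m_2}$. Summing over all valid $(m_1, m_2)$ recovers $\binom{|D^j\setminus\{f\}|}{m}$ by Vandermonde's identity, which is the normalization factor in the stated formula.

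Next, I would exploit the fact that $\Delta^j$ consists only of FDs, so a violation requires two facts over the same relation symbol. Consequently, $E \models \Delta^j$ if and only if $E_j \models \Delta_{R_j}$ \emph{and} $E' \models \Delta^{j-1}$. Therefore
\[
\Pr[E \not\models \Delta^j \mid |E_j|=m_1, |E'|=m_2] = 1 - \Pr[E_j \models \Delta_{R_j} \text{ and } E' \models \Delta^{j-1} \mid |E_j|=m_1, |E'|=m_2].
\]
Conditioning on the sizes $(m_1, m_2)$, the subset $E_j$ is uniformly distributed over size-$m_1$ subsets of $D_{R_j} \setminus \{f\}$ and $E'$ is uniformly distributed over size-$m_2$ subsets of $D^{j-1} \setminus \{f\}$, and they are conditionally independent (since each $E$ of size $m$ with the given decomposition sizes corresponds to an independent choice on each side). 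Hence the conditional probability factors as $(1 - T_j^{m_1})(1 - P_{j-1}^{m_2})$.

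Finally, I would apply the law of total probability, weighting each decomposition $(m_1, m_2)$ by its probability under the uniform distribution on subsets of size $m$, which is precisely $\binom{|D_{R_j}\setminus\{f\}|}{m_1}\binom{|D^{j-1}\setminus\{f\}|}{m_2} / \binom{|D^j\setminus\{f\}|}{m}$. Summing yields the claimed expression for $P_j^m$. The argument is essentially routine once the within-relation observation is made; there is no real obstacle beyond verifying the conditional independence of the two uniform samples given the sizes, which is immediate from the bijection between pairs $(E_j, E')$ and subsets $E$ of the given profile.
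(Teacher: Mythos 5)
Your proposal is correct and follows essentially the same route as the paper: the key observation that FD violations are confined within a single relation symbol, hence $E\not\models\depset^j$ iff $E_j\not\models\depset_{R_j}$ or $E'\not\models\depset^{j-1}$, followed by conditioning on the split sizes $(m_1,m_2)$ with the hypergeometric weights. The paper carries this out by explicit algebraic manipulation of the sum defining the expectation, whereas you phrase it as conditional independence plus the law of total probability, but the two arguments are the same decomposition.
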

\begin{proof}
Each subset $D'$ of size $m$ of $D^j\setminus\set{f}$ contains a subset $E_1$ of size $m_1$ of $D_{R_j}\setminus\set{f}$ and a subset $E_2$ of size $m_2$ of $D^{j-1}\setminus\set{f}$, for some $m_1,m_2$ such that $m_1+m_2=m$. Clearly, $D'$ violates the constraints if and only if at least one of $E_1$ or $E_2$ violates the constraints. That is, $\Id(D',\depset^j)=1$ if either $\Id(E_1,\depset_{R_j})=1$ or $\Id(E_2,\depset^{j-1})=1$ (or both). Therefore,
$$\Id(D',\depset^j)=1-\left(1-\Id(E_1,\depset_{R_j})\right)\times \left(1-\Id(E_2,\depset^{j-1})\right)$$
Then, we have the following:
\small
\begin{align*}
    P_j^m=
    &\Exp_{D'\sim U_m({D^{j}}\setminus\set{f})}\big(\Id(D',\depset^j)\big)
    =\sum_{\substack{D'\subseteq D^j\setminus\set{f}\\|D'|=m}}\frac{1}{{|D^{j}\setminus\set{f}|\choose m}}\Id(D',\depset^j)\\
    =&\sum_{\substack{0\le m_1\le |D_{R_j}\setminus\set{f}|\\0\le m_2\le |D^{j-1}\setminus\set{f}|\\m_1+m_2=m}}\sum_{\substack{E_1\subseteq D_{R_j}\setminus\set{f}\\
    E_2\subseteq D^{j-1}\setminus\set{f}\\|E_1|=m_1,|E_2|=m_2}}\frac{1}{{|D^{j}\setminus\set{f}|\choose m}}\times\left(1-(1-\Id(E_1,\depset_{R_j}))\times (1-\Id(E_2,\depset^{j-1}))\right)\\
    =&
    \frac{1}{{|D^{j}\setminus\set{f}|\choose m}}\sum_{\substack{0\le m_1\le |D_{R_j}\setminus\set{f}|\\0\le m_2\le |D^{j-1}\setminus\set{f}|\\m_1+m_2=m}}\Bigg[{|D_{R_j}\setminus\set{f}|\choose m_1}\times {|D^{j-1}\setminus\set{f}|\choose m_2}\times\\
    &\sum_{\substack{E_1\subseteq D_{R_j}\setminus\set{f}\\
    E_2\subseteq D^{j-1}\setminus\set{f}\\|E_1|=m_1,|E_2|=m_2}}\bigg[\frac{1}{{|D_{R_j}\setminus\set{f}|\choose m_1}}
    \times\frac{1}{{|D^{j-1}\setminus\set{f}|\choose m_2}}\times\left(1-(1-\Id(E_1,\depset_{R_j}))\times (1-\Id(E_2,\depset^{j-1}))\right)\bigg]\Bigg]\\
    =&\frac{1}{{|D^{j}\setminus\set{f}|\choose m}}\sum_{\substack{0\le m_1\le |D_{R_j}\setminus\set{f}|\\0\le m_2\le |D^{j-1}\setminus\set{f}|\\m_1+m_2=m}}\Bigg[{|D_{R_j}\setminus\set{f}|\choose m_1}\times {|D^{j-1}\setminus\set{f}|\choose m_2}\times\\
    &
    \bigg(\sum_{\substack{E_1\subseteq D_{R_j}\setminus\set{f}\\
    E_2\subseteq D^{j-1}\setminus\set{f}\\|E_1|=m_1,|E_2|=m_2}}\bigg[\frac{1}{{|D_{R_j}\setminus\set{f}|\choose m_1}}\times\frac{1}{{|D^{j-1}\setminus\set{f}|\choose m_2}}\times 1\bigg]-\\
    &\sum_{\substack{E_1\subseteq D_{R_j}\setminus\set{f}\\
    E_2\subseteq D^{j-1}\setminus\set{f}\\|E_1|=m_1,|E_2|=m_2}}\bigg[\frac{1}{{|D_{R_j}\setminus\set{f}|\choose m_1}}\times\frac{1}{{|D^{j-1}\setminus\set{f}|\choose m_2}}\times(1-\Id(E_1,\depset_{R_j}))
    \times (1-\Id(E_2,\depset^{j-1}))\bigg]\bigg)\Bigg]\\
    =&\frac{1}{{|D^{j}\setminus\set{f}|\choose m}}\sum_{\substack{0\le m_1\le |D_{R_j}\setminus\set{f}|\\0\le m_2\le |D^{j-1}\setminus\set{f}|\\m_1+m_2=m}}\Bigg[{|D_{R_j}\setminus\set{f}|\choose m_1}\times {|D^{j-1}\setminus\set{f}|\choose m_2}\times\\
    &\bigg(1-
    \left(\sum_{\substack{E_1\subseteq D_{R_j}\setminus\set{f}\\|E_1|=m_1}}\frac{1}{{|D_{R_j}\setminus\set{f}|\choose m_1}}\times(1-\Id(E_1,\depset_{R_j}))\right)\times\\
    &\left(\sum_{\substack{E_2 \subseteq D^{j-1}\setminus\set{f}\\|E_2|=m_2}}\frac{1}{{|D^{j-1}\setminus\set{f}|\choose m_2}}\times(1-\Id(E_2,\depset^{j-1}))\right)\bigg)\Bigg]\\
        =&\frac{1}{{|D^j\setminus\set{f}|\choose m}}\sum_{\substack{0\le m_1\le |D_{R_j}\setminus\set{f}|\\0\le m_2\le |D^{j-1}\setminus\set{f}|\\m_1+m_2=m}}\bigg[{|D_{R_j}\setminus\set{f}|\choose m_1}\times {|D^{j-1}\setminus\set{f}|\choose m_2}\times
        \left(1-\left(1-T_j^{m_1}\right)\times\left(1-P_{j-1}^{m_2}\right)\right)\bigg]
\end{align*}
\normalsize
This concludes our proof.
\end{proof}

\eat{
This holds since a uniformly chosen subset $E$ of size $m$ of $D^{j}$ consists of a uniformly chosen subset $E_1$ of size $m_1$ of $D_{R_j}$ and a uniformly chosen subset $E_2$ of size $m_2$ of $D^{j-1}$ for some $m_1,m_2$ such that $m_1+m_2=m$. \ester{Is this correct or should we treat each pair $(m_1,m_2)$ differently (i.e., multiply by some value that depends on $m_1$ and $m_2$)?} As the databases over different relation symbols are independent, the probability that the constraints are violated is one minus the probability that none of $E_1,E_2$ violates the constraints.}

We can similarly compute the second probability required for the Shapley value computation. The only difference is that if the fact $f$ that we consider is over the relation symbol $R_j$, then $T_j^m$ will be the probability that a uniformly chosen $D'\subset D_{R_j}$ of size $m$ is such that $D'\cup\set{f}$ violates $\Delta_{R_j}$. This value can be computed in polynomial time using the algorithm of Figure~\ref{alg:DrasticShapleyF}. Note that the results of Section~\ref{sec:drastic-approx} on the approximate computation of the Shapley value trivially generalize to schemas with multiple relation symbols; hence, there is an additive FPRAS and a multiplicative FPRAS for any set of FDs.

\section{Measure $\Imr$: The Cost of a Cardinality Repair}\label{sec:minrep}

{\color{blue}
  {
\begin{figure}[t!]
\begin{malgorithm}{$\algname{Simplify(\depset)}$}{alg:simplify1}
\State Remove trivial FDs from $\depset$
\If{$\depset$ is not empty}
\State find a removable pair $(X,Y)$ of attribute sets
\State $\Delta \assn \Delta-XY$
\EndIf
\Return $\depset$
\end{malgorithm}
\caption{\label{alg:simplify} A simplification algorithm used for deciding whether a cardinality repair w.r.t.~$\depset$ can be computed in polynomial time~\cite{DBLP:journals/tods/LivshitsKR20}.}
\end{figure}
}
}

In this section, we study the measure $\Imr$ that is based on the cost of a \e{cardinality repair}, that is, the minimal number of facts that should be deleted from the database in order to obtain a consistent subset. Unlike the other inconsistency measures considered in this article, we do not have a full dichotomy for the measure $\Imr$.

\subsection{Complexity Results}
 Livshits et al.~\cite{DBLP:journals/tods/LivshitsKR20} established a dichotomy for the problem of computing a cardinality repair, classifying FD sets into those for which the problem is solvable in polynomial time, and those for which it is NP-hard. They presented a polynomial-time algorithm, which we refer to as \algname{Simplify}, that takes as input an FD set $\depset$, finds a \e{removable} pair $(X,Y)$ of attribute sets (if such a pair exists), and removes every attribute of $X\cup Y$ from every FD in $\Delta$ (we denote the result by $\depset-XY$). A pair $(X,Y)$ of attribute sets is considered removable if it satisfies the following three conditions:
  \begin{itemize}
\item $\mathrm{Closure_{\Delta}}(X)=\mathrm{Closure_{\Delta}}(Y)$,
\item $XY$ is nonempty,
 \item  every FD in $\Delta$ contains either $X$ or $Y$ on the left-hand side.
 \end{itemize}
 Note that it may be the case that $X=Y$, and then the conditions imply that every FD of $\depset$ contains $X$ on the left-hand side. The algorithm is depicted in Figure~\ref{alg:simplify}.

Livshits et al.~\cite{DBLP:journals/tods/LivshitsKR20} have shown that if it is possible to transform $\Delta$ to an empty set by repeatedly applying $\algname{Simplify(\Delta)}$, then a cardinality repair can be computed in polynomial time. Otherwise, the problem is NP-hard (and, in fact, APX-complete).

 Fact~\ref{fact:measuretoshap} implies that computing $\shapley(D,f,\depset,\Imr)$ is hard whenever computing $\Imr(D,\depset)$ is hard. Hence, we immediately obtain the following.

\begin{thm}
Let $\depset$ be a set of FDs. If $\depset$ cannot be emptied by repeatedly applying $\algname{Simplify(\depset)}$, then computing $\shapley(D,f,\depset,\Imr)$ is NP-hard.
\end{thm}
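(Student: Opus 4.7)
The plan is essentially a direct chaining of two results already stated in the excerpt: the efficiency property encapsulated in Fact~\ref{fact:measuretoshap}, and the NP-hardness dichotomy of Livshits et al.~\cite{DBLP:journals/tods/LivshitsKR20} for computing a cardinality repair. First, I would recall that for the measure $\Imr(D,\depset) = |D| - \max_{E \in \Rep(D,\depset)} |E|$, computing $\Imr(D,\depset)$ is polynomial-time equivalent to computing the size of a cardinality repair: given $\Imr(D,\depset)$, the c-repair size is $|D| - \Imr(D,\depset)$, and vice versa. By the Livshits et al.\ dichotomy, if $\depset$ cannot be emptied by repeatedly applying $\algname{Simplify}(\depset)$, then computing the size of a cardinality repair is NP-hard, and so is computing $\Imr(D,\depset)$.

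Second, I would invoke Fact~\ref{fact:measuretoshap}: the efficiency axiom of the Shapley value gives
\[
\Imr(D,\depset) \;=\; \sum_{f \in D} \shapley(D,\depset,f,\Imr)\,,
\]
so an oracle that computes $\shapley(D,\depset,f,\Imr)$ for a given $(D,f)$ can be called $|D|$ times to recover $\Imr(D,\depset)$. This is a polynomial-time Turing reduction from the computation of $\Imr$ to the computation of the Shapley value under $\Imr$. Composing with the hardness in the previous paragraph, computing $\shapley(D,f,\depset,\Imr)$ is NP-hard under the stated condition on $\depset$.

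There is essentially no obstacle here, since both ingredients have been stated (Fact~\ref{fact:measuretoshap} for the reduction from measure to Shapley value, and the cited dichotomy for the hardness of $\Imr$ itself). The only minor subtlety worth noting is that NP-hardness is typically phrased for decision problems, whereas Fact~\ref{fact:measuretoshap} is a Cook reduction between functional problems; this is handled by observing that the decision version ``is $\Imr(D,\depset) \ge k$?'' is itself NP-hard by the Livshits et al.\ dichotomy (indeed, the hardness there is proved via reductions from NP-complete problems such as vertex cover), and a polynomial-time algorithm for $\shapley(D,\depset,f,\Imr)$ would, through the efficiency sum above, yield a polynomial-time algorithm for this decision problem.
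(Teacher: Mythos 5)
Your proposal is correct and follows essentially the same route as the paper: the authors likewise obtain this theorem immediately by combining Fact~\ref{fact:measuretoshap} (the Cook reduction from computing the measure to computing its Shapley values, via the efficiency axiom) with the NP-hardness of computing a cardinality repair from the dichotomy of Livshits et al. Your additional remark about the decision-versus-function-problem formalities is a reasonable extra precaution but not something the paper dwells on.
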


In the remainder of this section, we focus on the tractable cases of the dichotomy of Livshits et al.~\cite{DBLP:journals/tods/LivshitsKR20}. In particular, we start by proving that the Shapley value can again be computed in polynomial time for an FD set that has an lhs chain. Note that 
FD sets with an lhs chain are a special case of FD sets that can be emptied via \algname{Simplify} steps. This holds since every FD set with an lhs chain has either an FD of the form $\emptyset\ra X$ or a set $X$ of attributes that occurs on the left-hand side of every FD. In the first case, $(\emptyset,X)$ is a removable pair, while in the second case, $(X,X)$ is a removable pair.

\begin{thm}\label{thm:minrep}
Let $\depset$ be a set of FDs. If $\depset$ is equivalent to an FD set with an lhs chain, then computing $\shapley(D,f,\depset,\Imr)$ can be done in polynomial time, given $D$ and $f$.
\end{thm}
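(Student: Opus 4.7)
The plan is to mirror the structure of the proof of Theorem~\ref{thm:drastic}: reduce to expected values via Observation~\ref{obs:reductionexp}, build the block/subblock tree $T$ for $\depset$ as in Section~\ref{sec:drastic}, and exploit the recursive structure imposed by the lhs chain. Thus I need polynomial-time procedures for $\Exp_{D'\sim U_m(D\setminus\set{f})}\big(\Imr(D',\depset)\big)$ and $\Exp_{D'\sim U_m(D\setminus\set{f})}\big(\Imr(D'\cup\set{f},\depset)\big)$ for every $m$.

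Write $\rho(E)\eqdef |E|-\Imr(E,\depset)$ for the cardinality of a c-repair of $E$; since $m$ is fixed in each expectation, it suffices to compute expectations of $\rho$. The key property is that $\rho$ decomposes recursively along $T$. At a leaf $v$, the facts of $D[v]$ pairwise agree on every attribute appearing in $\depset$, so $\rho(E)=|E|$. At a subblock vertex $v$ with children $c_1,\dots,c_k$ (the next-level blocks), facts from different $D[c_j]$ never violate any FD of $\depset$, so $\rho(E)=\sum_j \rho(E\cap D[c_j])$. At a block vertex $v$ at level $i$ with children $c_1,\dots,c_k$ (its subblocks), any two facts from different children jointly violate $X_i\ra Y_i$, so every c-repair of $E$ is contained in a single $D[c_j]$ and $\rho(E)=\max_j \rho(E\cap D[c_j])$. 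Because of the max, marginal expectations do not suffice; instead I maintain, for each vertex $v$ of $T$, a two-dimensional table $P_v[m,r]$ counting the subsets $E\subseteq D[v]$ with $|E|=m$ and $\rho(E)=r$. For leaves, $P_v[m,m]={|D[v]|\choose m}$ and all other entries are zero. For a subblock vertex, $P_v$ is obtained by convolving the children's tables in both $m$ and $r$, processed one child at a time as in \algname{UpdateProb}. For a block vertex, it is more convenient to work with the cumulative table $Q_v[m,r]=\sum_{r'\le r}P_v[m,r']$; the max-decomposition gives $Q_v[m,r]=\sum_{m_1+\dots+m_k=m}\prod_j Q_{c_j}[m_j,r]$, a convolution in $m$ only (with $r$ held fixed) that is computed incrementally over children, and then $P_v$ is recovered from $Q_v$ by a first difference in $r$. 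Since all relevant $m$ and $r$ are bounded by $|D|$, each table has polynomial size and the whole bottom-up DP runs in polynomial time, yielding $\Exp_{D'\sim U_m(D\setminus\set{f})}\big(\Imr(D',\depset)\big)$ for every $m$.

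For the second expectation, I extend the DP in the spirit of \algname{DrasticShapleyF}. I classify each vertex $v$ of $T$ as \emph{conflicting with}, \emph{matching}, or \emph{unrelated to} $f$ in the exact sense introduced in Section~\ref{sec:drastic}, and define a second table $P^f_v[m,r]$ counting subsets $E\subseteq D[v]$ with $|E|=m$ and $\rho(E\cup\set{f})=r$. At unrelated vertices $P^f_v=P_v$; at conflicting vertices the presence of $f$ forces all facts in $D[v]$ that share the violating left-hand side with $f$ to be excluded from any c-repair containing $f$, which is handled by a small local modification of the recurrence. At matching vertices the fact $f$ is absorbed into exactly one child's subtree, or creates a new singleton block/subblock if no child matches $f$; the recurrence at $v$ is adjusted by using $P^f$ of that distinguished child and $P$ of the rest, combined via the same convolution-or-max pattern as above. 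The modification propagates along the unique root-to-leaf path in $T$ traced by $f$, so the extra computation costs only polynomially more than the base DP.

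The main obstacle is the conceptual shift from tracking a single probability, as in \algname{DrasticShapley} where $\rho(E)$ essentially took only two values, to tracking a joint distribution of $(|E|,\rho(E))$; this is forced by the max at block vertices, and it enlarges the per-vertex state from $O(|D|)$ to $O(|D|^2)$. Once this two-dimensional DP is in place, combining children via convolutions in one or two indices, and handling $f$ along its matching path, are routine adaptations of the arguments of Section~\ref{sec:drastic}, and the polynomial time bound follows from the polynomial size of $T$ together with standard convolution-cost estimates.
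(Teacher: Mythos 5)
Your proposal is correct and follows essentially the same route as the paper: the same block/subblock tree $T$, the same reduction via Observation~\ref{obs:reductionexp} to counting subsets by the pair (subset size, cardinality-repair size) --- the paper's Lemma~\ref{lemma:imr_help} and tables $v.\prob[j,t]$ are exactly your $P_v$ reparametrized by cost $t=j-r$ --- and the same conflict/match/unrelated case analysis for $f$. The only substantive difference is cosmetic: at block vertices the paper resolves the max by an explicit case split on which child hosts the cardinality repair (with a tie-breaking condition $w_1+j_2\le w_2+j_1$ to avoid double counting), whereas you use the cumulative table $Q_v$; both work, though note that under your own definition an unrelated vertex should have $P^f_v[m,r]=P_v[m,r-1]$ (since $f$ joins any repair for free) rather than $P^f_v=P_v$ --- a harmless slip, as your combination rule only ever consults $P$, not $P^f$, of the non-distinguished children.
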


  {
\begin{figure}[t]
\begin{malgorithm}{$\algname{RShapley}(D,\depset,m,T)$}{alg:mr}
\ForAll{vertices $v$ of $T$ in a bottom-up order}
\State $\algname{UpdateCount}(v,m)$
\EndFor
\State \textbf{return} $\sum_{k=0}^m\frac{k}{{{|D|-1}\choose m}}\cdot\root.\prob[m,k]$
\end{malgorithm}
\begin{msubroutine}{$\algname{UpdateCount}(v,m)$}{alg:updatecount}
\State $v.\prob[0,0]=1$
\IfThen{$v$ is a leaf}{$v.\prob[j,0]={|D[v]|\choose j}$ for all $j\in\set{1,\dots,|D[v]|}$}
\ForAll{children $c$ of $v$ in $T$}
\For{$j\in\set{m,\dots,1}$}
\For{$t\in\set{j,\dots,0}$}
\If{$v$ is a block vertex}
\State $v.\prob[j,t]=\underset{\substack{j_1+j_2=j\\0\le j_1\le |D[c]|\\ t-j_1\le j_2\le\min\{t,|D[\prev(c)]|\}}}{\sum}\underset{\substack{t-j_1\le w_2\le j_2}}{\sum}\big(c.\prob[j_1,t-j_2]\cdot v.\prob[j_2,w_2]\big)$
\State $v.\prob[j,t]\pluseq\underset{\substack{j_1+j_2=j\\t-j_2\le j_1\le \min\{t,|D[c]|\}\\ 0\le j_2\le|D[\prev(c)]|}}{\sum}\underset{\substack{t-j_2< w_1\le j_1}}{\sum}\big(c.\prob[j_1,w_1]\cdot v.\prob[j_2,t-j_1]\big)$
\Else
\State $v.\prob[j,t]=\underset{\substack{j_1+j_2=j\\0\le j_1\le |D[c]|\\ 0\le j_2\le|D[\prev(c)]|}}{\sum}\;\;\;\underset{\substack{t_1+t_2=t\\0\le t_1\le j_1\\ 0\le t_2\le j_2}}{\sum}\big(c.\prob[j_1,t_1]\cdot v.\prob[j_2,t_2]\big)$
\EndIf
\EndFor
\EndFor
\EndFor
\end{msubroutine}
\caption{\label{alg:MRShapley} An algorithm for computing $\Exp_{D'\sim U_m(D\setminus\set{f})}\big(\Imr(D',\depset)\big)$ for $\depset$ with an lhs chain.}
\vspace{-1.2em}
\end{figure}
}

Our polynomial-time algorithm \algname{RShapley}, depicted in Figure~\ref{alg:MRShapley}, is very similar in structure to \algname{DrasticShapley}. However, to compute the expected value of $\Imr$, we take the reduction of Observation~\ref{obs:reductionexp} a step further, and show, that the problem of computing the expected value of the measure over subsets of size $m$ can be reduced to the problem of computing the number of subsets of size $m$ of $D$ that have a cardinality repair of cost $k$, given $m$ and $k$. Recall that we refer to the number of facts that are removed from $D$ to obtain a cardinality repair $E$ as the \e{cost} of $E$. In the subroutine \algname{UpdateCount}, we compute this number. In what follows, we denote by $\MR(D,\depset)$ the cost of a cardinality repair of $D$ w.r.t.~$\depset$.

\begin{lem}\label{lemma:imr_help}
The following holds.
$$\shapley(D,f,\depset,\Imr)=\frac{1}{|D|}\sum_{m=0}^{|D|-1} \sum_{k=0}^m\frac{k}{{{|D|-1}\choose m}}|S_{m,k}^f|-|S_{m,k}|$$
where: 
$$S_{m,k}=\set{D'\subseteq D\setminus\set{f}\mid |D'|=m,\MR(D'\cup\set{f},\depset)=k}$$
$$S_{m,k}^f\set{D'\subseteq D\setminus\set{f}\mid |D'|=m,\MR(D',\depset)=k}$$
\end{lem}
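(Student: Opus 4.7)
My plan is to apply Observation~\ref{obs:reductionexp} directly and then unfold each expectation as a sum over subsets, reorganized by the integer value of the repair cost. Substituting $\I = \Imr$ into the observation gives
\[\shapley(D,f,\depset,\Imr) = \frac{1}{|D|}\sum_{m=0}^{|D|-1}\left[\Exp_{D'\sim U_m(D\setminus\set{f})}\big(\MR(D'\cup\set{f},\depset)\big) - \Exp_{D'\sim U_m(D\setminus\set{f})}\big(\MR(D',\depset)\big)\right].\]
Each expectation is a uniform average $\frac{1}{\binom{|D|-1}{m}}\sum_{D'} \MR(\cdot,\depset)$ over the $\binom{|D|-1}{m}$ subsets $D'\subseteq D\setminus\set{f}$ of size $m$. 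Partitioning that summation according to the integer value $k$ taken by the repair cost recovers exactly the level sets $S_{m,k}$ and $S_{m,k}^f$ from the statement, and each expectation rewrites as a weighted count $\frac{1}{\binom{|D|-1}{m}}\sum_{k} k\cdot(\mbox{size of level set})$. Plugging these two rewrites back into the Shapley expansion yields the claimed identity.

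The only subtle point is justifying the summation range $k=0,\dots,m$: a priori $\MR(D'\cup\set{f},\depset)$ could be as large as $|D'\cup\set{f}|=m+1$. The resolution is that for FDs any singleton satisfies every constraint vacuously (violations require two distinct facts), so every nonempty $E$ admits a consistent singleton subset and hence $\MR(E,\depset)\leq |E|-1$. Applied to $E = D'\cup\set{f}$ of size $m+1$, this gives $\MR(D'\cup\set{f},\depset)\leq m$, so the $k=m+1$ level set is empty and the stated range captures every nonzero contribution.

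Beyond that, I expect no real obstacle: the bulk of the work is pure algebra once each expectation has been rewritten as a weighted sum over level sets. The only thing to be careful about is matching the two sets $S_{m,k}$ and $S_{m,k}^f$ against the correct summand in the expansion of Observation~\ref{obs:reductionexp} so that the signs align with the formula in the lemma; once that correspondence is fixed, the identity follows by direct substitution.
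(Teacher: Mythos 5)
Your proposal is correct and follows essentially the same route as the paper's proof: apply Observation~\ref{obs:reductionexp}, expand each expectation as a uniform average over the $\binom{|D|-1}{m}$ subsets of size $m$, and partition that sum by the value $k$ of the repair cost to obtain the level sets $S_{m,k}$ and $S^f_{m,k}$. Your extra justification of the range $k\le m$ (via $\MR(E,\depset)\le|E|-1$ for nonempty $E$, since any singleton satisfies all FDs) is a detail the paper leaves implicit, and your caution about matching each level set to the correct signed summand is warranted, as the two definitions in the lemma statement are in fact swapped relative to the final line of the paper's derivation.
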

\begin{proof}
We further develop the reduction of Observation~\ref{obs:reductionexp}.
\begin{align*}
  &\shapley(D,f,\depset,\Imr)=
    \frac{1}{|D|}\sum_{m=0}^{|D|-1} \underset{\substack{D'\subseteq
      (D\setminus\set{f}) \\ |D'|=m}}{\sum}
    \frac{1}{{{|D|-1}\choose m}}\Big(\I(D'\cup\set{f},\depset)-\I(D',\depset)\Big)\\
    &= \frac{1}{|D|}\sum_{m=0}^{|D|-1} \sum_{k=0}^m\underset{\substack{D'\subseteq
      (D\setminus\set{f}) \\ |D'|=m\\ \MR(D'\cup\set{f},\depset)=k}}{\sum}
    \frac{1}{{{|D|-1}\choose m}}\Big(\I(D'\cup\set{f},\depset)\Big)\\
&- \frac{1}{|D|}\sum_{m=0}^{|D|-1}  \sum_{k=0}^m\underset{\substack{D'\subseteq
      (D\setminus\set{f}) \\ |D'|=m\\ \MR(D',\depset)=k}}{\sum}
    \frac{1}{{{|D|-1}\choose m}}\Big(\I(D',\depset)\Big)\notag\\
    &= \frac{1}{|D|}\sum_{m=0}^{|D|-1} \sum_{k=0}^m\underset{\substack{D'\subseteq
      (D\setminus\set{f}) \\ |D'|=m\\ \MR(D'\cup\set{f},\depset)=k}}{\sum}
    \frac{k}{{{|D|-1}\choose m}}
- \frac{1}{|D|}\sum_{m=0}^{|D|-1}  \sum_{k=0}^m\underset{\substack{D'\subseteq
      (D\setminus\set{f}) \\ |D'|=m\\ \MR(D',\depset)=k}}{\sum}
    \frac{k}{{{|D|-1}\choose m}}\notag\\
        &= \frac{1}{|D|}\sum_{m=0}^{|D|-1} \sum_{k=0}^m\frac{k}{{{|D|-1}\choose m}}|\set{D'\subseteq D\setminus\set{f}\mid |D'|=m,\MR(D'\cup\set{f},\depset)=k}|
    \\
&- \frac{1}{|D|}\sum_{m=0}^{|D|-1}  \sum_{k=0}^m\frac{k}{{{|D|-1}\choose m}}|\set{D'\subseteq D\setminus\set{f}\mid |D'|=m,\MR(D',\depset)=k}|
    \tag*{\qedhere}\\
\end{align*}
\end{proof}

We again use the data structure $T$ defined in the previous section. For each vertex $v$ in $T$, we define:
$$v.\prob[j,t]\;\eqdef\;\mbox{number of subsets of size } j \mbox{ of }D[v]\mbox{ with a cardinality repair of cost }t$$
For the leaves $v$ of $T$, we set $v.\prob[j,0]={|D[v]|\choose j}$ for $0\le j\le |D[v]|$, as every subset of $D[v]$ is consistent, and the cost of a cardinality repair is zero. We also set $v.\prob[0,0]=1$ for each $v$ in $T$ for the same reason. Since the size of the cardinality repair is bounded by the size of the database, in $\algname{UpdateCount}(v,m)$, we compute the value $v.\prob[j,t]$ for every $1\le j\le m$ and $0\le t\le j$. To compute this number, we again go over the children of $v$, one by one. When we consider a child $c$ in the for loop of line~1, the value $v.\prob[j,t]$ is the number of subsets of size $j$ of $D[\prev(c)]\cup D[c]$ that have a cardinality repair of cost $t$. 

The children of a block $v$ are subblocks that jointly violate an FD of $\depset$; hence, when we consider a child $c$ of $v$, a cardinality repair of a subset $E$ of $D[\prev(c)]\cup D[c]$ is either a cardinality repair of $E\cap D[c]$ (in which case we remove every fact of $E\cap D[\prev(c)]$) or a cardinality repair of $E\cap D[\prev(c)]$ (in which case we remove every fact of $E\cap D[c]$). The decision regarding which of these cases holds is based on the following four parameters: \e{(1)} the number $j_1$ of facts in $E\cap D[c]$, \e{(2)} the number $j_2$ of facts in $E\cap D[\prev(c)]$, \e{(3)} the cost $w_1$ of a cardinality repair of $E\cap D[c]$, and \e{(4)} the cost $w_2$ of a cardinality repair of $E\cap D[\prev(c)]$. In particular:
\begin{itemize}
    \item If $w_1+j_2\le w_2+j_1$, then a cardinality repair of $E\cap D[c]$ is preferred over a cardinality repair of $E\cap D[\prev(c)]$, as it requires removing less facts from the database.
    \item If $w_1+j_2> w_2+j_1$, then a cardinality repair of $E\cap D[\prev(c)]$ is preferred over a cardinality repair of $E\cap D[c]$.
\end{itemize}
In fact, since we fix $t$ in the computation of $v.\prob[j,t]$, we do not need to go over all $w_1$ and $w_2$. In the first case, we have that $w_1=t-j_2$ (hence, the total number of removed facts is $t-j_2+j_2=t$), and in the second case we have that $w_2=t-j_1$ for the same reason. Hence, in line~7 we consider the first case where $t\le w_2+j_1$, and in line~8 we consider the second case where $w_1+j_2>t$. To avoid negative costs, we add a lower bound of $t-j_1$ on $j_2$ and $w_2$ in line~7, and, similarly, a lower bound of $t-j_2$ on $j_1$ and $w_1$ in line~8.

For a subblock vertex $v$, a cardinality repair of $D[v]$ is the union of cardinality repairs of the children of $v$, as facts corresponding to different children of $v$ do not jointly violate any FD. Therefore, for such vertices, in line~10, we compute $v.\prob$ by going over all $j_1,j_2$ such that $j_1+j_2=j$ and all $t_1,t_2$ such that $t_1+t_2=t$ and multiply the number of subsets of size $j_1$ of the current child for which the cost of a cardinality repair is $t_1$ by the number of subsets of size $j_2$ of the previously considered children for which the cost of a cardinality repair is $t_2$.

  {
\begin{figure}[t!]
\begin{malgorithm}{$\algname{RShapleyF}(D,\depset,m,T,f)$}{alg:mrf}
\State $\algname{RShapley(D,\depset,m,T)}$
\ForAll{vertices $v$ of $T$ in a bottom-up order}
\State $\algname{UpdateCount}(v,m,f)$
\EndFor
\State \textbf{return} $\sum_{k=0}^m\frac{k}{{{|D|-1}\choose m}}\cdot\root.\prob[m,k]$
\end{malgorithm}
\begin{msubroutine}{$\algname{UpdateCountF}(v,m,f)$}{alg:updatecountf}
\State $v.\probf[0,0]=1$
\If{$f$ conflict with $v$}
\State $v.\probf[j,t]=v.\prob[j,t-1]$ for all $1\le j\le |D[v]|$ and $1\le t\le j$
\State \textbf{return}
\EndIf
\If{$f$ does not match $v$ \textbf{or} $v$ is a leaf}
\State $v.\probf[j,t]=v.\prob[j,t]$ for all $1\le j\le m$ and $0\le t\le j$
\State \textbf{return}
\EndIf
\ForAll{children $c$ of $v$ in $T$}
\For{$j\in\set{m,\dots,1}$}
\For{$t\in\set{j,\dots,1}$}
\If{$v$ is a block vertex}
\State {\small $v.\probf[j,t]=\underset{\substack{j_1+j_2=j\\0\le j_1\le |D[c]|\\ t-j_1\le j_2\le\min\{t,|D[\prev(c)]|\}}}{\sum}\underset{\substack{t-j_1\le w_2\le j_2}}{\sum}\big(c.\probf[j_1,t-j_2]\cdot v.\probf[j_2,w_2]\big)$}
\State {\small $v.\probf[j,t]\pluseq\underset{\substack{j_1+j_2=j\\t-j_2\le j_1\le \min\{t,|D[c]|\}\\ 0\le j_2\le|D[\prev(c)]|}}{\sum}\underset{\substack{t-j_2< w_1\le j_1}}{\sum}\big(c.\probf[j_1,w_1]\cdot v.\probf[j_2,t-j_1]\big)$}
\Else
\State $v.\probf[j,t]=\underset{\substack{j_1+j_2=j\\0\le j_1\le |D[c]|\\ 0\le j_2\le|D[\prev(c)]|}}{\sum}\;\;\;\underset{\substack{t_1+t_2=t\\0\le t_1\le j_1\\ 0\le t_2\le j_2}}{\sum}\big(c.\probf[j_1,t_1]\cdot v.\probf[j_2,t_2]\big)$
\EndIf
\EndFor
\EndFor
\EndFor
\end{msubroutine}
\caption{\label{alg:MRShapleyF} An algorithm for computing $\Exp_{D'\sim U_m(D\setminus\set{f})}\big(\Imr(D'\cup\set{f},\depset)\big)$ for $\depset$ with an lhs chain.}
\vspace{-1.2em}
\end{figure}
}

Next, we give the algorithm \algname{RShapleyF} for computing $\Exp_{D'\sim U_m(D\setminus\set{f})}\big(\Imr(D'\cup\set{f},\depset)\big)$, that again involves a special treatment for vertices that conflict with $f$. The algorithm is depicted in Figure~\ref{alg:MRShapleyF}. We define:
\begin{align*}
v.\probf[j,t]\eqdef&\mbox{number of subsets of size } j \mbox{ of }D[v]\mbox{ that, jointly with }f,\\
&\mbox{ have a cardinality repair of cost }t
\end{align*}
As in the case of \algname{DrasticShapleyF}, we start with the execution of \algname{RShapley}, which allows us to reuse some of the values computed in this execution. For every vertex, we set $v.\probf[0,0]=1$, as the empty set has a single cardinality repair of cost zero. Then, we consider three types of vertices. For vertices $v$ that conflict with $f$ we have that $v.\probf[j,t]=v.\prob[j,t-1]$ for all $1\le j\le D[v]$ and $1\le t\le j$, as every non-empty subset of $v$ conflicts with $f$; hence, we have to remove $f$ in a cardinality repair, and the cost of a cardinality repair increases by one. For vertices $v$ that do not match $f$, we have that $v.\probf[j,t]=v.\prob[j,t]$, as $f$ is not in conflict with any fact of $D[v]$; hence, it can be added to any cardinality repair without increasing its cost. The same holds for the leaves of $T$ that do not conflict with $f$.

For a block vertex $v$, all the arguments given for $\algname{RShapley}$ still apply here. In particular, for a child $c$ of $v$, a cardinality repair of $E\cup\set{f}$ for a subset $E$ of size $j$ of $D[\prev(c)]\cup D[c]$, is either a cardinality repair of $(E\cap D[c])\cup\set{f}$ (in which case we delete all facts of $E\cap D[\prev(c)]$) or a cardinality repair of $(E\cap D[\prev(c)])\cup\set{f}$ (in which case we delete all facts of $E\cap D[c]$). Therefore, the only difference in the computation of $v.\probf$ compared to the computation of $v.\prob$ for such vertices is the use of $c.\probf$ (that takes $f$ into account) rather than $c.\prob$.

For a subblock vertex $v$ (that does not conflict with $f$, and, hence, matches $f$), the computation of $v.\probf$ is again very similar to that of $v.\prob$, with the only difference being the use of $c.\probf$. Observe that in this case, the children of $v$ correspond to different blocks. Each such block that does not match $f$ also does not violate any FD with $f$; hence, when we add $f$ to this block, a cardinality repair of the resulting group of facts does not require the removal of $f$. The only child of $v$ where a cardinality repair might require the removal of $f$ is a child that matches $f$, and, clearly, there is at most one such child. Therefore, we do not count the fact $f$ twice in the computation of the value $v.\probf$.

\subsection{Approximation}
In cases where a cardinality repair can be computed in polynomial time, we can obtain an additive FPRAS in the same way as  the drastic measure. (Note that this Shapley value is also in $[0,1]$.)  Moreover, we can again obtain a multiplicative FPRAS using the same technique due to the following gap property (proved similarly to Proposition~\ref{prop:drastic-gap}).  \begin{prop}
  There is a polynomial $p$ such that for
  all databases $D$ and facts $f$ of $D$ the value $\shapley(D,f,\depset,\Imr)$ is either zero or at least $1/(p(|D|))$.
\end{prop}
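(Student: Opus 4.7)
The plan is to mimic the proof of Proposition~\ref{prop:drastic-gap}, exploiting the fact that the measure $\Imr$ is monotone and that each marginal contribution in the Shapley formula lies in $\{0,1\}$. I would first observe, as a simple lemma, that for every $E\subseteq D\setminus\{f\}$,
\[
0 \;\le\; \Imr(E\cup\{f\},\depset)-\Imr(E,\depset) \;\le\; 1.
\]
The lower bound follows because any maximal consistent subset of $E\cup\{f\}$, after removing $f$ if present, is a consistent subset of $E$, so $|E\cup\{f\}|-\Imr(E\cup\{f\},\depset)\le |E|-\Imr(E,\depset)+1$. The upper bound follows because any c-repair of $E$ together with the single extra deletion of $f$ is a consistent subset of $E\cup\{f\}$, so $\Imr(E\cup\{f\},\depset)\le\Imr(E,\depset)+1$.

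From this it follows that every term in the Shapley sum~\eqref{eq:shapley} is nonnegative, so $\shapley(D,\depset,f,\Imr)=0$ only if every term is $0$. Thus, if $\shapley(D,\depset,f,\Imr)>0$, there must exist a subset $E\subseteq D\setminus\{f\}$ with $\Imr(E\cup\{f\},\depset)>\Imr(E,\depset)$. I would then argue that this forces $f$ to participate in some FD violation with another fact of $D$: indeed, if $f$ is not in conflict with any $g\in D\setminus\{f\}$, then for every $E$ and every c-repair $R$ of $E$, the set $R\cup\{f\}$ is a consistent subset of $E\cup\{f\}$, yielding $\Imr(E\cup\{f\},\depset)\le \Imr(E,\depset)$, and hence equality by monotonicity.

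So whenever $\shapley(D,\depset,f,\Imr)>0$ there is a fact $g\in D\setminus\{f\}$ such that $\{f,g\}\not\models\depset$. For this $g$ we have $\Imr(\{g\},\depset)=0$ and $\Imr(\{g,f\},\depset)=1$, so the singleton $E=\{g\}$ contributes exactly $1$ with weight $\frac{1!\,(|D|-2)!}{|D|!}=\frac{1}{|D|(|D|-1)}$ in~\eqref{eq:shapley}. Since every other term is nonnegative by the lemma above,
\[
\shapley(D,\depset,f,\Imr)\;\ge\;\frac{1}{|D|(|D|-1)},
\]
which gives the claimed polynomial $p(n)=n(n-1)$. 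The main conceptual step is the monotonicity/integrality observation, which isolates the argument from the (possibly intractable) internal structure of $\Imr$; once it is in hand, the rest mirrors the proof for $\Id$ essentially verbatim.
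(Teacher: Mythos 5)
Your proof is correct and follows essentially the same route as the paper, which simply states that the proposition is "proved similarly to Proposition~\ref{prop:drastic-gap}": locate a fact $g$ conflicting with $f$, observe that the term for $E=\set{g}$ contributes weight $\frac{1}{|D|(|D|-1)}$, and note that all other terms are nonnegative. Your explicit monotonicity/integrality lemma for $\Imr$ is exactly the detail needed to transfer the drastic-measure argument, and it is argued correctly.
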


As aforementioned, Livshits et al.~\cite{DBLP:journals/tods/LivshitsKR20} showed that the hard cases of their dichotomy for the problem of computing a cardinality repair are, in fact, APX-complete; hence, there is a polynomial-time constant-ratio approximation, but for some $\epsilon>1$ there is no (randomized) $\epsilon$-approximation or else $\mbox{P}=\mbox{NP}$ ($\mbox{NP}\subseteq\mbox{BPP}$). Since the Shapley value of every fact w.r.t.~$\Imr$ is positive, the existence of a multiplicative FPRAS for $\shapley(D,f,\depset,\Imr)$ would imply the existence of a multiplicative FPRAS for $\Imr(D,\depset)$ (due to Fact~\ref{fact:measuretoshap}), which is a contradiction to the APX-hardness. We conclude the following.

\begin{prop}\label{prop:mr-fpras}
  Let $\depset$ be a set of FDs. If $\depset$ can be emptied by repeatedly applying $\algname{Simplify(\depset)}$, then $\shapley(D,f,\depset,\Imr)$ has both an additive and a multiplicative FPRAS. Otherwise, it has neither multiplicative nor additive FPRAS, unless $\mbox{NP}\subseteq\mbox{BPP}$. 
\end{prop}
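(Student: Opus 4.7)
The plan is to treat the two directions of the dichotomy separately.

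For the positive direction (tractable case), I would first observe that by the Livshits--Kimelfeld--Roy dichotomy, if $\depset$ can be emptied by $\algname{Simplify}$, then $\Imr(D,\depset)$ is computable in polynomial time. I would then obtain an additive FPRAS by Monte Carlo sampling of $O(\log(1/\delta)/\epsilon^2)$ random permutations $\sigma$ of $D$, averaging the marginal contributions $\Imr(\sigma_f\cup\{f\},\depset)-\Imr(\sigma_f,\depset)$. Each marginal lies in $\{0,1\}$ (adding a fact can increase $\Imr$ by at most $1$, and by monotonicity of $\Imr$ by at least $0$), so Hoeffding's inequality yields the required additive accuracy. To upgrade this to a multiplicative FPRAS, I would invoke the stated gap property, whose proof mirrors Proposition~\ref{prop:drastic-gap}: if $f$ has no conflict then the Shapley value is zero (a property detectable in polynomial time), otherwise for any conflicting $g$ the $(|D|-2)!$ permutations with $\sigma_f=\{g\}$ each contribute marginal $1$, so $\shapley(D,f,\depset,\Imr)\ge 1/(|D|(|D|-1))$. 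Running the additive FPRAS at precision $\epsilon'=\epsilon/p(|D|)$ then yields a multiplicative $(1+\epsilon)$-approximation.

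For the negative direction, I would invoke the APX-completeness of the cardinality-repair problem, which supplies a constant $\rho>1$ such that a multiplicative $\rho$-approximation of $\Imr(D,\depset)$ is impossible unless $\mathrm{NP}\subseteq\mathrm{BPP}$. By the efficiency axiom of the Shapley value (Fact~\ref{fact:measuretoshap}), $\Imr(D,\depset)=\sum_{f\in D}\shapley(D,f,\depset,\Imr)$. Hence a multiplicative FPRAS for the Shapley value yields a multiplicative FPRAS for $\Imr$ by summation (using per-fact parameters $\epsilon/2$ and $\delta/|D|$), contradicting APX-hardness. For the additive case, summing per-fact additive estimates at precision $\epsilon'=\epsilon/|D|$ gives an additive-$\epsilon$ approximation of $\Imr(D,\depset)$; since one may in polynomial time restrict attention to inconsistent databases (where $\Imr\ge 1$), this additive error translates into a multiplicative $(1+\epsilon)$-approximation, again contradicting APX-hardness.

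The step that will require the most care is the gap property. While structurally parallel to the drastic case, one must verify that the marginal contribution of $f$ in a permutation with $\sigma_f=\{g\}$ is exactly $1$ rather than $0$; this uses crucially that $\{g\}$ is consistent while $\{g,f\}$ requires deleting exactly one fact. All remaining steps amount to routine reductions, either to the cited Livshits--Kimelfeld--Roy hardness result or to Chernoff-style concentration bounds.
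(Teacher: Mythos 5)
Your proposal is correct and follows essentially the same route as the paper: permutation sampling with the $\{0,1\}$ marginal bound for the additive FPRAS, the gap property (proved exactly as in Proposition~\ref{prop:drastic-gap}) to upgrade to a multiplicative one, and the efficiency axiom combined with APX-hardness of computing a cardinality repair for the negative direction. You in fact spell out the additive non-existence argument (per-fact precision $\epsilon/|D|$ plus $\Imr\ge 1$ on inconsistent instances) more explicitly than the paper does.
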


\paragraph*{Unsolved cases for $\Imr$.}
A basic open problem is the computation of $\shapley(D,f,\depset,\Imr)$ for $\depset=\set{A\rightarrow B, B\rightarrow A}$.
On the one hand, Proposition~\ref{prop:mr-fpras} shows that this case belongs to the tractable side if an approximation is allowed.
On the other hand, our algorithm for exact  $\shapley(D,f,\depset,\Imr)$ is via counting
the subsets of size $m$ that have a cardinality repair of cost $k$. This approach will not work here:

\def\propmatchingmrhard{
Let $\depset=\set{A\rightarrow B,B\rightarrow A}$ be an FD set over $(A,B)$. Counting the subsets of size $m$ of a given database that have a cardinality repair of cost $k$ is \#P-hard.
}

\begin{prop}\label{prop:minrep-mkhard}
\propmatchingmrhard
\end{prop}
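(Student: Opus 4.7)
The plan is to reduce from Valiant's classical \#P-hard problem of counting perfect matchings in bipartite graphs. Given a bipartite graph $G=(U\cup V,E)$ with $|U|=|V|=n$, I build a database $D$ over $(A,B)$ whose facts are exactly the edges of $G$, identifying the $A$-column with $U$ and the $B$-column with $V$. This is essentially the same encoding that already drove the hardness reduction of Lemma~\ref{lem:drastic-hard}.

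The first step is to observe how the FD set $\set{A\rightarrow B,\ B\rightarrow A}$ interacts with the bipartite structure: two facts $(u,v),(u',v')\in D$ jointly violate $\depset$ precisely when they share exactly one coordinate. Hence a subset $D'\subseteq D$ is consistent w.r.t.\ $\depset$ if and only if its edges form a matching in $G$. Consequently, for every $D'\subseteq D$ the cost $\MR(D',\depset)$ equals $|D'|-\nu(D')$, where $\nu(D')$ is the size of a maximum matching among the edges of $D'$; indeed, a maximum consistent subset of $D'$ is exactly a maximum matching inside the corresponding edge set, and the facts deleted to reach it are the remaining edges.

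The key step is then to query the counting oracle on the specific pair $(m,k)=(n,0)$. Subsets of size $n$ whose cardinality repair has cost $0$ are precisely the consistent subsets of $D$ of size $n$, which under the edge correspondence are exactly the perfect matchings of $G$. Thus the oracle returns $|\match(G,n)|$, the number of perfect matchings of $G$. Since the database is built in time linear in $|E|$, this gives a polynomial-time Turing reduction from counting perfect matchings in bipartite graphs to the problem in the proposition, establishing \#P-hardness. I do not anticipate any real obstacle: the bipartite reading of $\set{A\rightarrow B,B\rightarrow A}$ and the identity $\MR(D',\depset)=|D'|-\nu(D')$ are both straightforward, and the rest is simply invoking Valiant's theorem.
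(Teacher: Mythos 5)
Your proposal is correct and matches the paper's proof essentially verbatim: both reduce from counting perfect matchings in a bipartite graph by encoding edges as facts over $(A,B)$ and querying the count at $m=|A|$ and $k=0$, where cost-$0$ subsets of size $|A|$ are exactly the perfect matchings. The extra observation $\MR(D',\depset)=|D'|-\nu(D')$ is a harmless generalization of what is needed for the $k=0$ case.
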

\begin{proof}
The proof is by a reduction from the problem of computing the number of perfect matchings in a bipartite graph, known to be \#P-complete~\cite{VALIANT1979189}. Given a bipartite graph $g=(A\cup B,E)$ (where $|A|=|B|$), we construct a database $D$ over $(A,B)$ by adding a fact $(a,b)$ for every edge $(a,b)\in E$. 
We then define $m=|A|$ and $k=0$. It is rather straightforward that the perfect matchings of $g$ correspond exactly to the subsets $D'$ of size $|A|$ of $D$ such that $D'$ itself is a cardinality repair.
\end{proof}

Observe that the cooperative game for  $\depset=\set{A\rightarrow B,B\rightarrow A}$ can be seen as a game on bipartite graphs where the vertices on the left-hand side represent the values of attribute $A$, the vertices on the right-hand side correspond to the values that occur in attribute $B$, and the edges represent the tuples of the database (hence, the players of the game). This game is different from the well-known matching game~\cite{DBLP:conf/stacs/AzizK14} where the players are the \e{vertices} of the graph (and the value of the game is determined by the maximum weight matching of the subgraph induced by the coalition).
In contrast, in our case the players correspond to the \e{edges} of the graph.
  It is not clear what is the connection between the two games and whether or how
we can use known results on matching games to derive results for the game that we consider here.

\eat{
\begin{proof}
The proof is by a simple reduction from the problem of computing the number of maximum matchings of a bipartite graph. Given a bipartite graph $g=(A\cup B,E)$, we construct a database $D$ over $R(A,B)$ by adding a fact $R(a,b)$ for every edge $(a,b)\in E$. Note that every maximum matching of $g$ corresponds to a cardinality repair of $D$ and vice versa.
We then compute the size $s$ of a maximum matching of $g$ (this can be done in polynomial time), and define $m=k=s$. It is straightforward that the maximum matchings of $g$ correspond exactly to the subset of size $s$ of $D$ that are cardinality repairs themselves.
\end{proof}
}
\eat{
Proposition~\ref{prop:minrep-mkhard} implies that we cannot compute $\shapley(D,f,\depset,\Imr)$ for the FD set $\depset=\set{A\rightarrow B,B\rightarrow A}$ via the expected values of the measure as we have done for FD sets with an lhs chain. Clearly, this does not imply the hardness of the problem, but it strengthens our conjecture that this is a hard problem.
Note that while we do not know the exact complexity of the computation of $\shapley(D,f,\depset,\Imr)$ for the FD set $\depset=\set{A\rightarrow B, B\rightarrow A}$, since this FD set satisfies the tractability criteria of the dichotomy of Livshits et al.~\cite{DBLP:journals/tods/LivshitsKR20} for the problem of computing a cardinality repair, we can efficiently approximate this value, according to Proposition~\ref{prop:mr-fpras}.
}

\subsection{Generalization to Multiple Relations}
As in the case of $\Imi$ and $\Ip$, 
 the results of this section easily generalize to schemas with multiple relations, due to the linearity property of the Shapley value. As in the case of the drastic measure, the (positive and negative) results on the approximate computation of the Shapley value trivially generalize to schemas with multiple relation symbols.

\section{Measure $\Imc$: The Number of Repairs}\label{sec:counting}

The final measure that we consider is $\Imc$ that counts the repairs of the database.

\subsection{Dichotomy}
A dichotomy result from our previous
work~\cite{DBLP:conf/pods/LivshitsK17} states that the problem of
counting repairs can be solved in polynomial time for FD sets with an
lhs chain (up to equivalence), and is \#P-complete for any other FD
set. The hardness side, along with Fact~\ref{fact:measuretoshap},
implies that computing $\shapley(D,f,\depset,\Imc)$ is
$\fpsharpp$-hard whenever the FD set is not equivalent to an FD set
with an lhs chain. Hence, an lhs chain is a necessary condition for
tractability. We show here that it is also sufficient: if the FD set
has an lhs chain, then the problem can be solved in polynomial
time. Consequently, we obtain the following dichotomy.

\def\theoremcounting{
Let $\depset$ be a set of FDs. If $\depset$ is equivalent to an FD set with an lhs chain, then computing $\shapley(D,f,\depset,\Imc)$ can be done in polynomial time, given $D$ and $f$. Otherwise, the problem is $\fpsharpp$-complete.}

\begin{thm}\label{thm:counting}
  \theoremcounting
\end{thm}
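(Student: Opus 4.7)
The hardness side follows immediately: by the prior dichotomy of~\cite{DBLP:conf/pods/LivshitsK17}, computing $\Imc(D,\depset)$ is $\sharpp$-complete whenever $\depset$ is not equivalent to an FD set with an lhs chain, and Fact~\ref{fact:measuretoshap} gives a Cook reduction from computing $\Imc$ to computing the Shapley value, so $\shapley(D,f,\depset,\Imc)$ is $\fpsharpp$-hard in those cases. Membership in $\fpsharpp$ is routine from the explicit formula~\eqref{eq:shapley}.

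For the tractability side, assume $\depset$ has an lhs chain $X_1\ra Y_1,\dots,X_n\ra Y_n$, and build the block/subblock tree $T$ from $D\setminus\set{f}$ exactly as in Section~\ref{sec:drastic}. By Observation~\ref{obs:reductionexp} it suffices to compute, for every $m\in\{0,\dots,|D|-1\}$, the two expectations $\Exp_{D'\sim U_m(D\setminus\set{f})}\big(\Imc(D',\depset)\big)$ and $\Exp_{D'\sim U_m(D\setminus\set{f})}\big(\Imc(D'\cup\set{f},\depset)\big)$. My plan is to fill bottom-up a polynomial-size table
\[
F(v,j)\;\eqdef\;\sum_{D'\subseteq D[v],\;|D'|=j}\Imc(D',\depset)
\]
at every vertex $v$ of $T$; the first expectation is then $F(\root,m)/\binom{|D|-1}{m}$, and an analogous table $F_f(v,j)$ computed in a second pass yields the expectation involving $f$.

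The recurrences are dictated by the block/subblock structure. At a subblock vertex $v$ with children $c_1,\dots,c_k$, no two facts from distinct children jointly violate an FD, so repairs of any $D'\subseteq D[v]$ factor as Cartesian products over the children; hence $F(v,\cdot)$ is the convolution of the $F(c_\ell,\cdot)$, computed child-by-child in direct analogy with \algname{DrasticShapley}. At a block vertex $v$, by contrast, any two facts from distinct children jointly violate $X_i\ra Y_i$, so every repair of a nonempty $D'\subseteq D[v]$ is entirely contained in $D'\cap D[c]$ for a single child $c$. A short case analysis then yields
\[
\Imc(D',\depset)\;=\;[D'=\emptyset]\;+\;\sum_{c}[D'\cap D[c]\neq\emptyset]\cdot \Imc(D'\cap D[c],\depset),
\]
which, upon summing over $D'$ of size $j$ and processing children one at a time, translates into an incremental update whose ``current-child'' and ``previous-children'' contributions combine via binomial weights, with the empty-set indicators providing exactly the boundary corrections needed to avoid double counting.

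For $F_f$, the same bottom-up pass is used with the case analysis of \algname{DrasticShapleyF}: if $f$ does not match $v$ then $f$ is independent of $D[v]$ and $F_f(v,j)=F(v,j)$; if $f$ conflicts with $v$ then $f$ is incompatible with every fact of $D[v]$, and the expression $\Imc(D'\cup\set{f},\depset)$ splits into contributions of repairs that omit $f$ (captured by $F(v,j)$) and repairs that keep $f$ alongside a consistent subset of $D'$ (yielding a fresh binomial term); otherwise $f$ matches $v$ without conflict and propagates into the unique descendant branch that agrees with it on $X_1Y_1\cdots X_iY_i$. The main obstacle I expect is the block case for $F_f$: one must avoid double-counting $f$ when the matching branch could be incorporated either as part of the current child or of a previously processed one, which is the $\Imc$ analogue of the bookkeeping already needed in \algname{DrasticShapleyF} and will require maintaining an auxiliary indicator tracking whether the branch of $f$ has already been processed. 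Correctness is by bottom-up induction on $T$; the polynomial runtime is immediate because $T$ has polynomially many vertices, each table has size $O(|D|)$, and every update performs polynomially many arithmetic operations on already-computed entries.
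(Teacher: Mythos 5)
Your proposal is correct and follows essentially the same route as the paper: the identical hardness argument via Fact~\ref{fact:measuretoshap}, and for tractability the same block/subblock tree $T$ with a bottom-up dynamic program implementing ``repairs multiply across subblock children, and (for nonempty intersections) add across block children,'' together with the same three-way case analysis on whether $f$ conflicts with, fails to match, or matches a vertex. The only difference is cosmetic: you tabulate unnormalized sums $F(v,j)$ and divide by $\binom{|D|-1}{m}$ at the root, whereas the paper tabulates the expectations $v.\prob[j]$ directly (Lemmas~\ref{lemma:mc1} and~\ref{lemma:mc2}); your explicit empty-set indicator in the block-vertex recurrence and your observation that at most one child can match $f$ are exactly the boundary corrections the paper's algorithms \algname{MCShapley} and \algname{MCShapleyF} encode.
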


  {
\begin{figure}[t]
\begin{malgorithm}{$\algname{MCShapley}(D,\depset,m,T)$}{alg:mc}
\ForAll{vertices $v$ of $T$ in a bottom-up order}
\State $\algname{UpdateExpected}(v,m)$
\EndFor
\State \textbf{return} $\root.\prob[m]$
\end{malgorithm}
\begin{msubroutine}{$\algname{UpdateExpected}(v,m)$}{alg:updateexp}
\State $v.\prob[0]=1$
\IfThen{$v$ is a leaf}{$v.\prob[j]=1$ for all $j\in\set{1,\dots,|D[v]|}$}
\ForAll{children $c$ of $v$ in $T$}
\For{$j\in\set{m,\dots,1}$}
\If{$v$ is a block vertex}
\State $v.\prob[j]=\underset{\substack{j_1+j_2=j\\0\le j_1\le |D[c]|\\ 0\le j_2\le|D[\prev(c)]|}}{\sum}\big(c.\prob[j_1]+v.\prob[j_2]\big)$
\Else
\State $v.\prob[j]=\underset{\substack{j_1+j_2=j\\0\le j_1\le |D[c]|\\0\le j_2\le|D[\prev(c)]|}}{\sum}\big(c.\prob[j_1]\cdot v.\prob[j_2]\big)$
\EndIf
\EndFor
\EndFor
\end{msubroutine}
\caption{\label{alg:MCShapley} An algorithm for computing $\Exp_{D'\sim U_m(D\setminus\set{f})}\big(\Imc(D',\depset)\big)$ for $\depset$ with an lhs chain.}
\vspace{-1.2em}
\end{figure}
}
The algorithm \algname{MCShapley}, depicted in Figure~\ref{alg:MCShapley}, for computing $\shapley(D,f,\depset,\Imc)$, has the same structure as \algname{DrasticShapley}, with the only difference being the computations in the subroutine \algname{UpdateExpected} (that replaces \algname{UpdateProb}). 

For a vertex $v$ in $T$ we define:
$$v.\prob[j]=\Exp\left[\mbox{number of repairs of a random subset of size }j \mbox{ of }D[v]\right]$$
As the number of repairs of a consistent database $D$ is one ($D$ itself is a repair), we set $v.\prob[0]=1$ for every vertex $v$ and $v.\prob[j]=1$ for $0\le j\le |D[v]|$ for every leaf $v$.
Now, consider a block vertex $v$ and a child $c$ of $v$. Since the children of $v$ are subblocks, each repair consists of facts of a single child. Hence, the total number of repairs is the sum of repairs of the children of $v$. 

Using standard mathematical manipulations, we obtain the following result:
\begin{lem}\label{lemma:mc1}
For a block vertex $v$ and a child $c$ of $v$, we have that:
\begin{align*}
  &\Exp_{D'\sim U_j(D[\prev(c)]\cup D[c])}\big(\Imc(D',\depset)\big)\\
  &=\underset{\substack{j_1+j_2=j\\0\le j_1\le |D[c]|\\0\le j_2\le D[\prev(c)]}}{\sum}\hspace{-1.5em}\Exp_{D'\sim U_{j_1}(D[c])}\big(\Imc(D',\depset)\big)+\Exp_{D'\sim U_{j_2}(D[\prev(c)]}\big(\Imc(D',\depset)\big)
\end{align*}
\end{lem}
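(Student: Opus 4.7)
The plan is to condition on the split of $D'$ between $D[c]$ and $D[\prev(c)]$, exploit the block-level decomposition of repairs, and then conclude by linearity of expectation in the spirit of the derivation in Observation~\ref{obs:reductionexp}.

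The key structural fact I would use first is that $v$ is a block vertex, so its children---$c$ and the ones collected in $\prev(c)$---are distinct subblocks of $v$ with respect to some FD $X_i \rightarrow Y_i$. Any fact of $D[c]$ and any fact of $D[\prev(c)]$ therefore agree on $X_i$ but disagree on $Y_i$, and so jointly violate $X_i \rightarrow Y_i$. Consequently no consistent subset of $D[v]$ can simultaneously contain a fact from $D[c]$ and a fact from $D[\prev(c)]$. For any nonempty $E_1 \subseteq D[c]$ and any nonempty $E_2 \subseteq D[\prev(c)]$ this yields the clean decomposition $\Imc(E_1 \cup E_2, \depset) = \Imc(E_1, \depset) + \Imc(E_2, \depset)$, since every repair of $E_1 \cup E_2$ must lie entirely within one side.

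With this decomposition in hand, I would partition the size-$j$ subsets $D' \subseteq D[\prev(c)] \cup D[c]$ according to $(j_1, j_2) = (|D' \cap D[c]|, |D' \cap D[\prev(c)]|)$ with $j_1 + j_2 = j$. Conditionally on this split, $D' \cap D[c]$ is uniform over size-$j_1$ subsets of $D[c]$ and, independently, $D' \cap D[\prev(c)]$ is uniform over size-$j_2$ subsets of $D[\prev(c)]$. Substituting the block decomposition into each conditional expectation and applying linearity of expectation makes $\Exp_{D'\sim U_{j_1}(D[c])}(\Imc(D',\depset))$ and $\Exp_{D'\sim U_{j_2}(D[\prev(c)])}(\Imc(D',\depset))$ appear as the two summands indexed by $(j_1, j_2)$, and the identity follows after a routine manipulation of the hypergeometric weights $\binom{|D[c]|}{j_1}\binom{|D[\prev(c)]|}{j_2}/\binom{|D[c]|+|D[\prev(c)]|}{j}$ in the style of the computation in Observation~\ref{obs:reductionexp}.

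The main obstacle I anticipate is the edge cases $j_1 = 0$ and $j_2 = 0$: in these degenerate situations the decomposition $\Imc(E_1 \cup E_2) = \Imc(E_1) + \Imc(E_2)$ overcounts by one, because $\Imc(\emptyset, \depset) = 1$ whereas an empty side contributes no repair to the union. The resolution is to treat the boundary $(j_1, 0)$ and $(0, j_2)$ terms separately, using $\Exp_{D' \sim U_0(\cdot)}(\Imc(D', \depset)) = 1$, and to verify that the resulting contributions align with the corresponding summands on the right-hand side of the claim; once this bookkeeping is done, the identity in the lemma falls out.
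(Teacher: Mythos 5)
Your approach is essentially the paper's: partition the size-$j$ subsets of $D[\prev(c)]\cup D[c]$ by the split $(j_1,j_2)$, use the fact that any two facts from distinct subblocks of a block jointly violate the relevant FD so that $|\MC(E_1\cup E_2,\depset)|=|\MC(E_1,\depset)|+|\MC(E_2,\depset)|$, and finish by independence of the two sides and linearity of expectation. You are, if anything, more careful than the paper on exactly the two points you flag: the paper's proof passes from $\Pr{D'}$ to $\Pr{E_1}\Pr{E_2}$ in one step, which silently discards the hypergeometric weights ${|D[c]|\choose j_1}{|D[\prev(c)]|\choose j_2}/{{|D[c]|+|D[\prev(c)]|}\choose j}$ you correctly identify (and which do not appear in the stated identity), and it never separates out the $j_1=0$ or $j_2=0$ terms, where $\Imc(\emptyset,\depset)=1$ makes the additive decomposition overcount by one. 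So the ``bookkeeping'' you defer at the end is the only genuinely delicate part of the argument, and it is precisely the part the paper's own proof glosses over; the remainder of your plan coincides with the paper's.
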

\begin{proof}
As aforementioned, each repair of a subset $E$ of $D[v]$ contains facts from a single child of $v$, and the number of repairs is the sum of repairs over the children of $v$. Moreover, since our choice of facts from different subblocks is independent, we have the following (where $\MC(D,\depset)$ is the set of repairs of $D$ w.r.t.~$\depset$).
\begin{align*}
  &\Exp_{D'\sim U_j(D[\prev(c)]\cup D[c])}\big(\Imc(D',\depset)\big)=
  \underset{\substack{D'\subseteq D[\prev(c)]\cup D[c]\\|D'|=j}}{\sum}\hspace{-2em}\Pr{D'}\cdot |\MC(D',\depset)|\\
  &= \sum_{\substack{j_1+j_2=j\\0\le j_1\le |D[c]|\\0\le j_2\le D[\prev(c)]}}\;\;\underset{\substack{E_1\subseteq D[c]\\|E_1|=j_1}}{\sum}\;\;\underset{\substack{E_2\subseteq D[\prev(c)]\\|E_2|=j_2}}{\sum}\hspace{-1em}\Pr{E_1}\Pr{E_2}\big(|\MC(E_1,\depset)|+|\MC(E_2,\depset)|\big)\\
    &= \sum_{\substack{j_1+j_2=j\\0\le j_1\le |D[c]|\\0\le j_2\le D[\prev(c)]}}\;\;\underset{\substack{E_1\subseteq D[c]\\|E_1|=j_1}}{\sum}\;\;\underset{\substack{E_2\subseteq D[\prev(c)]\\|E_2|=j_2}}{\sum}\hspace{-1em}\Pr{E_1}\Pr{E_2}|\MC(E_1,\depset)|\\
    &+\sum_{\substack{j_1+j_2=j\\0\le j_1\le |D[c]|\\0\le j_2\le D[\prev(c)]}}\;\;\underset{\substack{E_1\subseteq D[c]\\|E_1|=j_1}}{\sum}\;\;\underset{\substack{E_2\subseteq D[\prev(c)]\\|E_2|=j_2}}{\sum}\hspace{-1em}\Pr{E_1}\Pr{E_2}|\MC(E_2,\depset)|\\
    &= \sum_{\substack{j_1+j_2=j\\0\le j_1\le |D[c]|\\0\le j_2\le D[\prev(c)]}}\;\;\underset{\substack{E_1\subseteq D[c]\\|E_1|=j_1}}{\sum}\Pr{E_1}|\MC(E_1,\depset)|\;\;\underset{\substack{E_2\subseteq D[\prev(c)]\\|E_2|=j_2}}{\sum}\hspace{-1em}\Pr{E_2}\\
    &+\sum_{\substack{j_1+j_2=j\\0\le j_1\le |D[c]|\\0\le j_2\le D[\prev(c)]}}\;\;\underset{\substack{E_2\subseteq D[\prev(c)]\\|E_2|=j_2}}{\sum}\hspace{-1em}\Pr{E_2}|\MC(E_2,\depset)|\underset{\substack{E_1\subseteq D[c]\\|E_1|=j_1}}{\sum}\Pr{E_1}\\
    &= \sum_{\substack{j_1+j_2=j\\0\le j_1\le |D[c]|\\0\le j_2\le D[\prev(c)]}}\Exp_{D'\sim U_{j_1}( D[c])}\big(\Imc(D',\depset)\big)+\sum_{\substack{j_1+j_2=j\\0\le j_1\le |D[c]|\\0\le j_2\le D[\prev(c)]}}\Exp_{D'\sim U_{j_2}( D[\prev(c)])}\big(\Imc(D',\depset)\big)\\
    &= \sum_{\substack{j_1+j_2=j\\0\le j_1\le |D[c]|\\0\le j_2\le D[\prev(c)]}}\Exp_{D'\sim U_{j_1}( D[c])}\big(\Imc(D',\depset)\big)+\Exp_{D'\sim U_{j_2}( D[\prev(c)])}\big(\Imc(D',\depset)\big)
\end{align*}

Recall that in our reduction from the problem of computing the Shapley value to that of computing the expected value of the measure over subsets of a given size of the database, we considered the uniform distribution where $\Pr{E}=\frac{1}{{{|D|}\choose m}}$ for a subset $E$ of size $m$ of $D$. Therefore, we have that $\underset{\substack{E_2\subseteq D[\prev(c)]\\|E_2|=j_2}}{\sum}\Pr{E_2}=\underset{\substack{E_1\subseteq D[c]\\|E_1|=j_1}}{\sum}\Pr{E_1}=1$.
\end{proof}
The result of Lemma~\ref{lemma:mc1} is reflected in line~6 of the \algname{UpdateExpected} subroutine.
Next, we show the following result for subblock vertices, that we use for the calculation of line~8.
\begin{lem}\label{lemma:mc2}
For a subblock vertex $v$ and a child $c$ of $v$, we have that:
\begin{align*}
  &\Exp_{D'\sim U_j(D[\prev(c)]\cup D[c])}\big(\Imc(D',\depset)\big)=\\
  &\underset{\substack{j_1+j_2=j\\0\le j_1\le |D[c]|\\0\le j_2\le D[\prev(c)]}}{\sum}\hspace{-1.5em}\Exp_{D'\sim U_{j_1}(D[c])}\big(\Imc(D',\depset)\big)\cdot\Exp_{D'\sim U_{j_2}(D[\prev(c)]}\big(\Imc(D',\depset)\big)
\end{align*}
\end{lem}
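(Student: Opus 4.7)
The structure of the proof mirrors that of Lemma~\ref{lemma:mc1}, the only essential difference being the combinatorial identity for the number of repairs. The plan is to first argue that repairs factor multiplicatively in this case, and then use this factorization to decouple the two expectations.

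Since $v$ is a subblock vertex at level $i$, all facts in $D[v]$ agree on the values of $X_1 Y_1 \cdots X_i Y_i$. The children of $v$ are $(i+1)$-th level block vertices, each associated with facts sharing a distinct value of $X_{i+1}$. Consequently, two facts that belong to different children of $v$ necessarily differ on $X_{i+1}$ (hence satisfy every FD $X_j \ra Y_j$ with $j\ge i+1$) and already agree on $X_j Y_j$ for $j\le i$ (hence satisfy those FDs as well). In particular, no conflict w.r.t.~$\depset$ can occur between a fact in $D[c]$ and a fact in $D[\prev(c)]$, because these subsets are unions of distinct children of $v$. Therefore, for any $E_1 \subseteq D[c]$ and $E_2 \subseteq D[\prev(c)]$, a set $E\subseteq E_1\cup E_2$ is a repair of $E_1\cup E_2$ if and only if $E\cap D[c]$ is a repair of $E_1$ and $E\cap D[\prev(c)]$ is a repair of $E_2$. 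This gives the identity
\[|\MC(E_1\cup E_2,\depset)|=|\MC(E_1,\depset)|\cdot|\MC(E_2,\depset)|.\]

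Armed with this multiplicative identity, I would compute the expectation by splitting every subset $D'\subseteq D[\prev(c)]\cup D[c]$ of size $j$ as $D'=E_1\cup E_2$ with $E_1\subseteq D[c]$ of size $j_1$ and $E_2\subseteq D[\prev(c)]$ of size $j_2$, where $j_1+j_2=j$. Writing $\Pr{E_1}=1/\binom{|D[c]|}{j_1}$ and $\Pr{E_2}=1/\binom{|D[\prev(c)]|}{j_2}$ for the uniform marginals (conditional on a given size split), the expectation expands as
\[
\sum_{\substack{j_1+j_2=j\\0\le j_1\le |D[c]|\\0\le j_2\le |D[\prev(c)]|}}\;\sum_{\substack{E_1\subseteq D[c]\\ |E_1|=j_1}}\sum_{\substack{E_2\subseteq D[\prev(c)]\\ |E_2|=j_2}}\Pr{E_1}\Pr{E_2}\,|\MC(E_1,\depset)|\cdot |\MC(E_2,\depset)|.
\]
Because the summand factors as a product of a term depending only on $E_1$ and a term depending only on $E_2$, the inner double sum separates into a product of two single sums, each of which is precisely $\Exp_{D'\sim U_{j_1}(D[c])}(\Imc(D',\depset))$ and $\Exp_{D'\sim U_{j_2}(D[\prev(c)])}(\Imc(D',\depset))$, respectively. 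This yields the claimed formula.

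The main conceptual step is the factorization of the repair count, which relies entirely on the lhs-chain structure and the fact that $v$ is a subblock (so its children correspond to disjoint values of $X_{i+1}$). The rest is routine: exchanging sums and recognizing the marginals of the uniform distribution, exactly as in the proof of Lemma~\ref{lemma:mc1}, with products replacing the sums that appeared in the block-vertex case.
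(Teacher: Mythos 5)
Your proof is correct and follows essentially the same route as the paper's: establish that repair counts multiply across the children of a subblock vertex (no two facts from different blocks of $v$ can jointly violate an FD under the lhs-chain structure), then split the expectation by the size of the intersection with $D[c]$ and factor the resulting double sum into a product of the two expectations. You are in fact slightly more explicit than the paper about why $|\MC(E_1\cup E_2,\depset)|=|\MC(E_1,\depset)|\cdot|\MC(E_2,\depset)|$ holds, which the paper merely asserts, but the argument is the same.
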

\begin{proof}
Since the children of $v$ are blocks (that do not jointly violate any FD of $\depset$), each repair of a subset $E$ of $D[v]$ is a union of the repairs of the children of $v$, and the number of repairs is the product of the number of repairs over the children of $v$. Hence, we have the following:

\begin{align*}
  &\Exp_{D'\sim U_j(D[\prev(c)]\cup D[c])}\big(\Imc(D',\depset)\big)=
  \underset{\substack{D'\subseteq D[\prev(c)]\cup D[c]\\|D'|=j}}{\sum}\hspace{-2em}\Pr{D'}\cdot |\MC(D',\depset)|\\
  &= \sum_{\substack{j_1+j_2=j\\0\le j_1\le |D[c]|\\0\le j_2\le D[\prev(c)]}}\;\;\underset{\substack{E_1\subseteq D[c]\\|E_1|=j_1}}{\sum}\;\;\underset{\substack{E_2\subseteq D[\prev(c)]\\|E_2|=j_2}}{\sum}\hspace{-1em}\Pr{E_1}\Pr{E_2}\big(|\MC(E_1,\depset)|\cdot|\MC(E_2,\depset)|\big)\\
    &=\sum_{\substack{j_1+j_2=j\\0\le j_1\le |D[c]|\\0\le j_2\le D[\prev(c)]}}\;\;\underset{\substack{E_1\subseteq D[c]\\|E_1|=j_1}}{\sum}\Pr{E_1}|\MC(E_1,\depset)|\;\;\underset{\substack{E_2\subseteq D[\prev(c)]\\|E_2|=j_2}}{\sum}\hspace{-1em}\Pr{E_2}|\MC(E_2,\depset)|\\
    &= \sum_{\substack{j_1+j_2=j\\0\le j_1\le |D[c]|\\0\le j_2\le D[\prev(c)]}}\Exp_{D'\sim U_{j_1}( D[c])}\big(\Imc(D',\depset)\big)\cdot\Exp_{D'\sim U_{j_2}( D[\prev(c)])}\big(\Imc(D',\depset)\big)
    \tag*{\qedhere}
\end{align*}
\end{proof}

{
\begin{figure}[t!]
\begin{malgorithm}{$\algname{MCShapleyF}(D,\depset,m,T,f)$}{alg:mcf}
\State $\algname{MCShapley(D,\depset,m,T)}$
\ForAll{vertices $v$ of $T$ in a bottom-up order}
\State $\algname{UpdateExpectedF}(v,m,f)$
\EndFor
\State \textbf{return} $\root.\prob[m]$
\end{malgorithm}
\begin{msubroutine}{$\algname{UpdateExpectedF}(v,m,f)$}{alg:updateexpf}
\State $v.\probf[0]=1$
\If{$f$ conflict with $v$}
\State $v.\probf[j]=v.\prob[j]+1$ for all $1\le j\le |D[v]|$
\State \textbf{return}
\EndIf
\If{$f$ does not match $v$ \textbf{or} $v$ is a leaf}
\State $v.\probf[j]=v.\prob[j]$ for all $0\le j\le m$
\State \textbf{return}
\EndIf
\ForAll{children $c$ of $v$ in $T$}
\For{$j\in\set{m,\dots,1}$}
\If{$v$ is a block vertex}
\If{$c$ does not conflict with $f$}
\State $v.\probf[j]=\underset{\substack{j_1+j_2=j\\0\le j_1\le |D[c]|\\ 0\le j_2\le|D[\prev(c)]|}}{\sum}\big(c.\probf[j_1]+v.\probf[j_2]\big)$
\Else
\State $v.\probf[j]=\underset{\substack{j_1+j_2=j\\0\le j_1\le |D[c]|\\ 0\le j_2\le|D[\prev(c)]|}}{\sum}\big(c.\prob[j_1]+v.\probf[j_2]\big)$
\EndIf
\If{all the children of $v$ conflict with $f$}
\State $v.\probf[j]=v.\probf[j]+1$ for all $1\le j\le m$
\EndIf
\Else
\State $v.\probf[j]=\underset{\substack{j_1+j_2=j\\0\le j_1\le |D[c]|\\ 0\le j_2\le|D[\prev(c)]|}}{\sum}\big(c.\probf[j_1]\cdot v.\probf[j_2]\big)$
\EndIf
\EndFor
\EndFor
\end{msubroutine}
\caption{\label{alg:MCShapleyF} An algorithm for computing $\Exp_{D'\sim U_m(D\setminus\set{f})}\big(\Imc(D'\cup\set{f},\depset)\big)$ for $\depset$ with an lhs chain.}
\vspace{-1.2em}
\end{figure}
}

The algorithm \algname{MCShapleyF} that computes $\Exp_{D'\sim U_m(D\setminus\set{f})}\big(\Imc(D'\cup\set{f},\depset)\big)$ is shown in Figure~\ref{alg:MCShapleyF}. We define:
$$v.\probf[j]=\Exp\left[\mbox{number of repairs of }E\cup\set{f}\mbox{ for a random subset }E\mbox{ of size }j \mbox{ of }D[v]\right]$$
First, we set $v.\probf[0]=1$ for every vertex $v$, as when $f$ is added to the empty set we obtain a consistent database that has a single repair---the whole database. Then, we again consider three possible types of vertices. For vertices $v$ that conflict with $f$ we have that $v.\probf[j]=v.\prob[j]+1$, as $f$ violates the FDs with \e{every} non-empty subset of $D[v]$; hence, for each such subset, $\set{f}$ is an additional repair, and the number of repairs increases by one compared to the number of repairs without $f$. For a vertex $v$ that does not match $f$, it holds that $f$ does not violate the constraints with any subset of $D[v]$; thus, it does not affect the number of repairs and we have that $v.\probf[j]=v.\prob[j]$. The same holds for the leaves of $T$ that do not conflict with $f$.

For the rest of the vertices $v$, the computation is similar to the one in \algname{MCShapley}. In particular, we go over the children $c$ of $v$ in the for loop of line~8, and compute $v.\probf$ using dynamic programming. We observe that if a child $c$ of a block vertex $v$ conflicts with $f$, then when $f$ is added to a subset $E$ of $D[c]$, none of the repairs of $E\cup\set{f}$ contains $f$, but $\set{f}$ is an additional repair. For such children $c$, we use the value $c.\prob$ in the calculation of line~14, where we ignore $f$. Hence, if all the children of $v$ conflict with $f$, we compute $v.\probf$ in the exact same way we compute $v.\prob$, while ignoring the fact $f$. Then, we increase the computed value by one in line~18, to reflect the additional repair $\set{f}$.

If one of the children $c$ of $v$ does not conflict with $f$ (note that there is at most one such child), then we take $f$ into account in the computation of line~12, where we use the value $c.\probf$ rather than $c.\prob$. In this case, there is no need to increase the computed value by one, as we have already considered the addition of $f$, and the fact $f$ may appear in some of the repairs of the subset of $D[c]$ (or, again, be a repair on its own).

For a subblock vertex $v$, we compute $v.\probf$ in the same way we compute $v.\prob$ in \algname{MCShapley}, but we use the value $c.\probf$ in the computation. In this case, each repair of a subset $E$ of $D[c]\cup D[\prev(c)]$ is a union of a repair of $(E\cap D[c])\cup\set{f}$ and a repair of $(E\cap D[\prev(c)])\cup\set{f}$. Note that in this case, for a child $c$ of $v$ that does not match $f$ we have that $c.\probf=c.\prob$; hence, the fact $f$ is again only taken into account when considering a child $c$ of $v$ that matches $f$, and there is no risk in counting the repair $\set{f}$ twice.

\subsection{Approximation}

Repair counting for $\depset=\set{A\rightarrow B,B\rightarrow A}$ is
the problem of counting the maximal matchings of a bipartite graph. As
the values $\shapley(D,f,\depset,\Imc)$ are nonnegative and sum up to the
number of repairs, we conclude that an FPRAS for Shapley implies an
FPRAS for the number of maximal matchings. To the best of our
knowledge, existence of the latter is a long-standing open
problem~\cite{DBLP:journals/corr/abs-1807-04803}.  This is also the
case for any $\depset'$ that is not equivalent to an FD set with an
lhs chain, since there is a fact-wise reduction from $\depset$ to such
$\depset'$~\cite{DBLP:conf/pods/LivshitsK17}. 

\subsection{Generalization to Multiple Relations}

As in the case of the drastic measure, we can generalize the upper bound of this section to schemas with multiple relation symbols using dynamic programming. We again consider an arbitrary order $R_1,\dots,R_n$ of the realtion symbols of the schema, and denote:
$$T_j^m=\Exp_{D'\sim U_m({D_{R_j}}\setminus\set{f})}\big(\Imc(D',\depset_{R_j})\big)$$
and:
$$P_j^m=\Exp_{D'\sim U_m({D^j}\setminus\set{f})}\big(\Imc(D',\depset^j)\big)$$
The value $T_j^m$ can be computed in polynomial time, using the algorithm of Figure~\ref{alg:MCShapley}, as we assume that each $\Delta_{R_j}$ has an lhs chain. As for the value $P_j^m$, we have that $P_1^m=T_1^m$, and we prove the following for $j>1$. (Recall that we denote by $\depset^j$ the FD set $\depset_{R_1}\cup\dots\cup\depset_{R_j}$ and by $D^j$ the database $D_{R_1}\cup\dots\cup D_{R_j}$.)

\begin{lem}
For every $j\in\set{2,\dots,n}$ we have that:
\begin{align*}
 P_j^m=\frac{1}{{|{D^j}\setminus\set{f}|\choose m}}\sum_{\substack{0\le m_1\le |D_{R_j}\setminus\set{f}|\\0\le m_2\le |D^{j-1}\setminus\set{f}|\\m_1+m_2=m}}{|D_{R_j}\setminus\set{f}|\choose m_1}\times {|D^{j-1}\setminus\set{f}|\choose m_2} \times T_j^{m_1}\times P_{j-1}^{m_2}
\end{align*}
\end{lem}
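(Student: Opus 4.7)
The plan is to exploit the structural fact that, because $\Delta^j = \Delta_{R_1} \cup \dots \cup \Delta_{R_j}$ consists of FDs over disjoint relation symbols, there can be no violation spanning two distinct relations. Consequently, for any $D' \subseteq D^j$, a set $E \subseteq D'$ is a repair of $D'$ w.r.t.~$\Delta^j$ if and only if $E \cap D_{R_j}$ is a repair of $D' \cap D_{R_j}$ w.r.t.~$\Delta_{R_j}$ and $E \cap D^{j-1}$ is a repair of $D' \cap D^{j-1}$ w.r.t.~$\Delta^{j-1}$. This yields the multiplicative identity
\[
\Imc(D',\Delta^j) \;=\; \Imc(D' \cap D_{R_j},\Delta_{R_j}) \cdot \Imc(D' \cap D^{j-1},\Delta^{j-1}),
\]
which is the key algebraic tool. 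This is exactly the same reasoning that was used for subblock vertices in Lemma~\ref{lemma:mc2}, now lifted to the level of relation symbols rather than attribute blocks.

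Armed with this identity, I would expand $P_j^m$ according to its definition as an average over size-$m$ subsets of $D^j\setminus\set{f}$, and then partition the sum by the sizes $(m_1,m_2)$ with $m_1+m_2=m$ of the two components $E_1 = D' \cap D_{R_j}\setminus\set{f}$ and $E_2 = D' \cap D^{j-1}\setminus\set{f}$. Since each such $D'$ decomposes uniquely into a pair $(E_1,E_2)$ with these sizes, the sum becomes
\[
P_j^m = \frac{1}{\binom{|D^j\setminus\set{f}|}{m}} \sum_{m_1+m_2=m}\; \sum_{\substack{E_1 \subseteq D_{R_j}\setminus\set{f}\\ |E_1|=m_1}}\; \sum_{\substack{E_2 \subseteq D^{j-1}\setminus\set{f}\\ |E_2|=m_2}} \Imc(E_1,\Delta_{R_j})\cdot \Imc(E_2,\Delta^{j-1}).
\]
The two inner sums are now independent and factor as a product.

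The last step is cosmetic bookkeeping: for each fixed $(m_1,m_2)$ I would multiply and divide by $\binom{|D_{R_j}\setminus\set{f}|}{m_1}\binom{|D^{j-1}\setminus\set{f}|}{m_2}$, which converts each inner sum into an average over uniformly chosen subsets of the corresponding fixed size, and thereby identifies them with $T_j^{m_1}$ and $P_{j-1}^{m_2}$ respectively. The claimed formula follows immediately after collecting terms.

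I expect the only subtlety to be keeping the role of $f$ consistent: the fact $f$ belongs to exactly one $D_{R_k}$, so one of the factors $|D_{R_j}\setminus\set{f}|$, $|D^{j-1}\setminus\set{f}|$ simply equals $|D_{R_j}|$ or $|D^{j-1}|$ depending on whether $k=j$ or $k<j$, but the decomposition argument and the multiplicative identity are unaffected. The proof is otherwise a routine rearrangement once the factorization of $\Imc$ across disjoint relations is in place, which is itself a direct consequence of the fact that FDs are single-relation constraints.
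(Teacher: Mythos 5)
Your proposal is correct and follows essentially the same route as the paper's proof: the multiplicative identity $\Imc(D',\depset^j)=\Imc(D'\cap D_{R_j},\depset_{R_j})\cdot\Imc(D'\cap D^{j-1},\depset^{j-1})$ (justified by the fact that FDs never relate facts of different relation symbols), followed by partitioning the size-$m$ subsets by the sizes $(m_1,m_2)$ of their two components and normalizing by the appropriate binomial coefficients to recover $T_j^{m_1}$ and $P_{j-1}^{m_2}$. Nothing is missing.
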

\begin{proof}
A basic observation here is that the number of repairs of $D_{R_1}\cup \dots\cup D_{R_j}$ is a product of the number of repairs of $D_{R_j}$ and the number of repairs of $D_{R_1}\cup \dots\cup D_{R_{j-1}}$, since there are no conflicts among facts over different relation symbols. Thus, we have the following:
\small
\begin{align*}
    P_j^m=&\Exp_{D'\sim U_m({D^j}\setminus\set{f})}\big(\Imc(D',\depset^j)\big)
    =\sum_{\substack{D'\subseteq {D^j}\setminus\set{f}\\|D'|=m}}\frac{1}{{|{D^j}\setminus\set{f}|\choose m}}\Imc(D',\depset^j)\\
    =&\sum_{\substack{0\le m_1\le |D_{R_j}\setminus\set{f}|\\0\le m_2\le |D^{j-1}\setminus\set{f}|\\m_1+m_2=m}}\sum_{\substack{E_1\subseteq D_{R_j}\setminus\set{f}\\
    E_2\subseteq D^{j-1}\setminus\set{f}\\|E_1|=m_1,|E_2|=m_2}}\frac{1}{{|D^j\setminus\set{f}|\choose m}}\times\left(\Imc(E_1,\depset_{R_j})\times \Imc(E_2,\depset^{j-1})\right)\\
    =&\frac{1}{{|D^j\setminus\set{f}|\choose m}}\sum_{\substack{0\le m_1\le |D_{R_j}\setminus\set{f}|\\0\le m_2\le |D^{j-1}\setminus\set{f}|\\m_1+m_2=m}}\Bigg[{|D_{R_j}\setminus\set{f}|\choose m_1}\times {|D^{j-1}\setminus\set{f}|\choose m_2}\\
    &\times
    \sum_{\substack{E_1\subseteq D_{R_j}\setminus\set{f}\\
    E_2\subseteq D^{j-1}\setminus\set{f}\\|E_1|=m_1,|E_2|=m_2}}\bigg[\frac{1}{{|D_{R_j}\setminus\set{f}|\choose m_1}}
    \times\frac{1}{{|D^{j-1}\setminus\set{f}|\choose m_2}}\times\left(\Imc(E_1,\depset_{R_j})\times \Imc(E_2,\depset^{j-1})\right)\bigg]\Bigg]\\
    =&\frac{1}{{|D^j\setminus\set{f}|\choose m}}\sum_{\substack{0\le m_1\le |D_{R_j}\setminus\set{f}|\\0\le m_2\le |D^{j-1}\setminus\set{f}|\\m_1+m_2=m}}\Bigg[{|D_{R_j}\setminus\set{f}|\choose m_1}\times {|D^{j-1}\setminus\set{f}|\choose m_2}
    \\
    &\times\left(\sum_{\substack{E_1\subseteq D_{R_j}\setminus\set{f}\\|E_1|=m_1}}\frac{1}{{|D_{R_j}\setminus\set{f}|\choose m_1}}\times\Imc(E_1,\depset_{R_j})\right)\\
    &\times\left(\sum_{\substack{
    E_2\subseteq D^{j-1}\setminus\set{f}\\|E_2|=m_2}}\frac{1}{{|D^{j-1}\setminus\set{f}|\choose m_2}}\times\Imc(E_2,\depset^{j-1})\right)\Bigg]\\
    =&\frac{1}{{|D^{j}\setminus\set{f}|\choose m}}\sum_{\substack{0\le m_1\le |D_{R_j}\setminus\set{f}|\\0\le m_2\le |D^{j-1}\setminus\set{f}|\\m_1+m_2=m}}{|D_{R_j}\setminus\set{f}|\choose m_1}\times {|D^{j-1}\setminus\set{f}|\choose m_2} \times T_j^{m_1}\times P_{j-1}^{m_2}
    \tag*{\qedhere}
\end{align*}
\normalsize
\end{proof}

The computation of $\Exp_{D'\sim U_m({D^{j-1}}\setminus\set{f})}\big(\Imc(D'\cup\set{f},\depset^j)\big)$ is very similar, with the only difference being the fact that:
$$T_j^m=\Exp_{D'\sim U_m({D_{R_j}}\setminus\set{f})}\big(\Imc(D'\cup\set{f},\depset_{R_j})\big)$$
for the relation symbol $R_j$ of $f$. This value can be computed in polynomial time using the algorithm of Figure~\ref{alg:MCShapleyF}. Finally, as in the case of the drastic measure, it is rather straightforward that the lower bound of Theorem~\ref{thm:counting} generalizes to the case where the FD set $\depset_R$ has no lhs chain (up to equivalence) for at least one relation symbol $R$ of the schema.

\eat{
Fact~\ref{fact:measuretoshap}, along with the existence of a fact-wise reduction from $\depset$ to any $\depset'$ that is not equivalent to an FD set with an lhs chain,
imply that if there is an FPRAS for $\shapley(D,f,\depset',\Imc)$, then there is an FPRAS for the number of maximal matchings. To the best of our knowledge, this is a long-standing open problem~\cite{DBLP:journals/corr/abs-1807-04803}.
}

\section{Conclusions}\label{sec:conclusions}

We studied the complexity of calculating the Shapley value of database facts for basic inconsistency measures, focusing on FD constraints. We showed that two of them are computable in polynomial time: the number of violations ($\Imi$) and the number of problematic facts ($\Ip$). In contrast, each of the drastic measure ($\Id$) and the number of repairs ($\Imc$) features a dichotomy in complexity, where the tractability condition is the possession of an lhs chain (up to equivalence). For the cost of a cardinality repair ($\Imr$) we showed a tractable fragment and an intractable fragment, but a gap remains on certain FD sets---the ones that do not have an lhs chain, and yet, a cardinality repair can be computed in polynomial time. We also studied the approximability of the Shapley value and showed, among other things, an FPRAS for $\Id$ and a dichotomy in the existence of an FPRAS for $\Imr$.

\eat{
  that there is an FPRAS in the case of the drastic measure, approximation is as hard as approximating the number of maximal bipartite matchings in the case of the number of repairs (for every intractable set of FDs), and hardness of approximation for the cost of the cardinality repair.
  }

Many other directions are left open for future research.
First, the picture is incomplete for the measure $\Imr$. In particular, the complexity of the exact computation is open for the bipartite matching constraint $\set{A\ra B,B\ra A}$ that, unlike the known FD sets in the intractable fragment, has an FPRAS. In general, we would like to complete the picture of $\Imr$ towards a full dichotomy. Moreover, for the schemas where there is no FPRAS for $\Imr$, our results neither imply nor refute the existence of a constant-ratio approximation (for \e{some} constant).
Second, the problems are immediately extendible to any type of constraints other than functional dependencies, such as denial constraints, tuple generating dependencies, and so on. Third, it would be interesting to see how the results extend to wealth distribution functions other than Shapley, for instance the Banzhaff Power Index~\cite{10.2307/3689345}. The tractable cases remain tractable for the Banzhaff Power Index, but it is not clear how (and whether) our proofs for the lower bounds generalize to this function. 
Another direction is to investigate whether
properties of the database (e.g.,~bounded treewidth) have an impact on the complexity of computing the Shapley value.
Finally, there is the practical question of implementation: while our algorithms terminate in polynomial time, we believe that they are hardly scalable without further optimization and heuristics ad-hoc to the use case; developing those is an important challenge for future research.

\section*{Acknowledgment}
  \noindent This work was supported by the Israel Science
  Foundation (ISF), Grant 768/19, and the German Research Foundation (DFG) Project 412400621 (DIP program).


\bibliography{main}
\bibliographystyle{alphaurl}

\end{document}